\documentclass[parskip=half, twoside=false]{scrartcl}

\usepackage{amssymb}
\usepackage{amsthm}
\usepackage{graphicx}
\usepackage[colorlinks,citecolor=blue,urlcolor=blue]{hyperref}
\usepackage{mathtools}
\usepackage{natbib}

\newtheorem{theorem}{Theorem}
\newtheorem{proposition}{Proposition}
\newtheorem{lemma}{Lemma}
\newtheorem{example}{Example}
\newtheorem{remark}{Remark}

\DeclarePairedDelimiter{\abs}{\lvert}{\rvert}
\DeclarePairedDelimiter{\norm}{\lVert}{\rVert}
\DeclarePairedDelimiter{\braces}{\lbrace}{\rbrace}
\DeclarePairedDelimiter{\parens}{\lparen}{\rparen}
\DeclarePairedDelimiter{\bracks}{\lbrack}{\rbrack}
\DeclarePairedDelimiter{\iprod}{\langle}{\rangle}
\DeclarePairedDelimiter{\sn}{\lvert}{\rvert_{\infty}}
\DeclarePairedDelimiter{\lip}{\lvert}{\rvert_{\mathrm{Lip}}}

\DeclareMathOperator{\argmax}{argmax}
\DeclareMathOperator{\diag}{diag}
\DeclareMathOperator{\tr}{tr}
\DeclareMathOperator{\Var}{Var}
\DeclareMathOperator{\Cov}{Cov}

\DeclareMathOperator{\E}{\mathbb{E}}

\newcommand{\arr}[1]{\boldsymbol{#1}}
\newcommand{\pder}[2]{\frac{\partial #1}{\partial #2}}
\newcommand{\T}{^{\mathsf{T}}}
\newcommand{\dd}{\operatorname{d}\!}
\newcommand{\Ls}{\Lambda^s}
\newcommand{\Le}{\Lambda^e}
\newcommand{\Nor}{\mathcal{N}}
\newcommand{\Ga}{\operatorname{Ga}}
\renewcommand{\P}{\mathbb{\mathop{P}}}

\newcommand{\hf}{\widehat{f}}
\newcommand{\hu}{\widehat{u}}

\newcommand{\hh}{\widehat{h}}
\newcommand{\bone}{\boldsymbol{1}}
\newcommand{\bph}{\boldsymbol{\phi}}
\newcommand{\bome}{\boldsymbol{\omega}}
\newcommand{\bsig}{\boldsymbol{\sigma}}
\newcommand{\ba}{\boldsymbol{a}}
\newcommand{\bb}{\boldsymbol{b}}

\newcommand{\bu}{\boldsymbol{u}}
\newcommand{\bv}{\boldsymbol{v}}
\newcommand{\bw}{\boldsymbol{w}}
\newcommand{\bz}{\boldsymbol{z}}
\newcommand{\bF}{\boldsymbol{F}}
\newcommand{\bR}{\boldsymbol{R}}

\newcommand{\bI}{\boldsymbol{I}}
\newcommand{\bA}{\boldsymbol{A}}
\newcommand{\bB}{\boldsymbol{B}}
\newcommand{\bE}{\boldsymbol{E}}
\newcommand{\bQ}{\boldsymbol{Q}}
\newcommand{\hbF}{\widehat{\bF}}
\newcommand{\hbu}{\widehat{\bu}}
\newcommand{\hbw}{\widehat{\bw}}
\newcommand{\hbz}{\widehat{\bz}}

\title{Functional Estimation of the Marginal Likelihood}
\author{
  Omiros Papaspiliopoulos \\
  \href{mailto:omiros.papaspiliopoulos@unibocconi.it}{\texttt{[omiros.papaspiliopoulos@unibocconi.it]}} \and
  Timothée Stumpf-Fétizon \\\href{mailto:timothee.stumpffetizon@unibocconi@unibocconi.it}{\texttt{[timothee.stumpffetizon@unibocconi.it]}} \and
  Jonathan Weare \\
  \href{mailto:weare@nyu.edu}{\texttt{[weare@nyu.edu]}}
}

\begin{document}
\maketitle

\begin{abstract}
  We propose a framework for computing, optimizing and integrating with respect to a smooth marginal likelihood in statistical models that involve high-dimensional parameters/latent variables and continuous low-dimensional hyperparameters. The method requires samples from the posterior distribution of the parameters for different values of the hyperparameters on a simulation grid and returns inference on the marginal likelihood defined everywhere on its domain, and on its functionals. We show how the method relates to many of the methods that have been used in this context, including sequential Monte Carlo, Gibbs sampling, Monte Carlo maximum likelihood, and umbrella sampling. We establish the consistency of the proposed estimators as the sampling effort increases, both when the simulation grid is kept fixed and when it becomes dense in the domain. We showcase the approach on Gaussian process regression and classification and crossed effect models.
  \par\vskip\baselineskip\noindent
  \textbf{Code:} \url{https://github.com/timsf/infemus}
  \par\vskip\baselineskip\noindent
  \textbf{Keywords:} Markov chain Monte Carlo, umbrella sampling, normalizing constants, perturbation theory
  \par\vskip\baselineskip\noindent
  \textbf{MSC2020 subject classifications:} Primary 62-08, 65C05; Secondary 62F15
\end{abstract}

\section{Introduction}
\label{sec:intro}

In this article we tackle a problem of fundamental importance in statistics and machine learning, that of exploring, optimizing and averaging with respect to the \emph{marginal likelihood}. Our contribution is two-fold. First, we develop a functional estimator that interpolates pointwise estimates of the marginal likelihood delivered by well-established umbrella sampling techniques, which include bridge sampling as a special case. Second, we establish functional consistency properties of the estimator when Monte Carlo effort increases.  The new functional estimator and its analysis are made possible by a novel derivation of the existing umbrella sampling technique we build upon, which also creates insightful links to common methods in the computational statistics literature.

The setting is as follows. A probabilistic model for data $y$ is specified in terms of a likelihood function $p(y | \theta, \lambda)$, with high-dimensional \emph{parameters/latent variables/stochastic processes} $\theta$ with density $p(\theta | \lambda)$, and low-dimensional \emph{hyperparameters} $\lambda \in \Lambda \subset \Re^{p}$, where $\Lambda$ is a connected set. The function of interest in this article is the marginal likelihood
\begin{equation}
  \label{eq:ml}
  \lambda \mapsto p(y | \lambda), \quad p(y | \lambda) = \int p(y | \theta,\lambda) p(\theta|\lambda) \dd \theta, \quad \lambda \in \Lambda.
\end{equation}
We are interested in plotting the marginal likelihood, when possible, or its one-dimensional profiles and marginals (when integrable). The gradients of this function are also of interest, as are the locations of local maxima of the marginal likelihood. The latter can be used to obtain empirical Bayes estimates for the hyperparameters. By integration we mean computing marginal expectations of scalar-valued functions $\phi(\theta)$, which we will denote by $\pi(\phi)$:
\begin{equation}
  \label{eq:me}
  \pi(\phi) = \frac{\int \phi(\theta) p(y | \theta,\lambda) p(\theta | \lambda) p(\lambda) \dd{\theta} \dd {\lambda}}{\int p(y | \lambda) p(\lambda) \dd{\lambda}},
\end{equation}
where $p(\lambda)$ is a prior on the hyperparameters, although often this might be just a constant in cases where $\Lambda$ is a compact set. It is instructive to consider a few different approaches to this problem that are closely related to, as well as unified by, the one we develop in this work.

One approach is the joint sampling of $p(\theta, \lambda | y)$ using Markov Chain Monte Carlo (MCMC). For a recent review of various such methods, especially with view towards dealing with multi-modality,  see for example \cite{latuszynski2025mcmc}. For many classes of structured models in statistics and machine learning (such as Gaussian process regression and mixed models, as in our numerical examples, but also in  latent class models or stochastic differential equations) there exist very efficient sampling methods for $p(\theta | y, \lambda)$, some with provably linear complexity in $\theta$-dimension. Since, in our context,  $\lambda$ is low-dimensional, a common strategy is to use a coordinate-wise updating sampling scheme, a special case of which is the Gibbs sampler when both conditional densities can be sampled directly. A major disadvantage of this approach is that it has slow convergence where there is strong posterior dependence between $\lambda$ and $\theta$, that is when  the conditional densities $p(\theta | y, \lambda)$ have little mutual support for different $\lambda$'s. Reparameterization strategies have been successful in reducing this dependence \citep{papaspiliopoulos2007general, robust}, but oftentimes this is not enough.  Further difficulties arise when there are multiple modes in the marginal likelihood and/or the joint density $p(\theta, \lambda | y)$.

For the discretized setting, where the hyperparameter space is discretized and restricted to a finite set $\Ls = \braces{\lambda_{\ell}}_{\ell=1}^L$, a different approach is to resort  to a vast literature on estimating normalizing constants and ratios thereof. This is particularly feasible and natural when $\lambda$ is one-dimensional and hence the elements in $\Ls$ are ordered, e.g., with $\lambda_{1}$ in the first and $\lambda_{L}$ in the last position, and the object of interest is $p(y | \lambda_{L})/p(y | \lambda_{1})$. A common instance is for $\lambda \in \Re_+$ to play the role of a ``temperature'', such that $p(\theta | y,\lambda_{1})$ is easily sampled from, while the object of interest $p(\theta | y, \lambda_{L})$ is difficult to sample from. Then, \emph{Sequential Monte Carlo (SMC)} methods  can be applied to this task, see for example Chapters 3 and 17 of \cite{smc_book}. An alternative Monte Carlo approach is  the \emph{Vardi estimator}, as discussed in \cite{vardi} and the contributed discussion therein. Under the term \emph{umbrella sampling}, similar approaches have been developed and they are regularly used in computational chemistry for the estimation of parameter and collective variable free energies, as in \cite{FrenkelSmit2002}. Note that umbrella sampling techniques are also used for joint sampling of $p(\theta, \lambda | y)$, especially when there are multi-modalities \citep{DinnerThiede2020emus2}. For an application of this within statistics, see \cite{Chopin2012fe}. Within the umbrella sampling framework, the so-called \emph{Eigenvector Method for Umbrella Sampling (EMUS)} was developed in \cite{thiede2016eigenvector}, and this turns out to be directly related to the Vardi estimator. A special case of the Vardi estimator is bridge sampling, as in \cite{gelman-meng}, a method that that also has links to SMC. We discuss the connections among all the above in Section \ref{ssec:lit}.

For functional estimation of $p(y | \lambda)$, a classic stochastic approximation approach known as \emph{Monte Carlo maximum likelihood} uses samples from $p(\theta | y, \lambda_{i})$ for a single grid point $\lambda_{i}$ \citep{mcmle-cg,mcmle-om}. This is based on an importance sampling identity and yields an estimator of $p(y | \lambda) / p(y | \lambda_{i})$ for $\lambda \in \Lambda$.

The approach we develop in this article takes EMUS, which is a grid-based estimator defined on $\Ls$, and lifts it up to a functional estimator over the whole of $\Lambda$, by exploiting a surprising reproducing property. Given that bridge sampling turns out to be a special, but sub-optimal, version of EMUS (see Section \ref{ssec:lit}) our approach can also be used to extrapolate bridge sampling estimates beyond the grid they have been constructed. In short, the approach takes as input ``local samples'',  $\theta_{\ell}^{(n)} \sim p(\theta | y,\lambda_{\ell})$, for $n=1,\ldots N_{\ell}$, and $\lambda_{\ell} \in \Ls$, which are typically generated using MCMC, and it returns a functional estimator of $\braces{p(y | \lambda): \lambda \in \Lambda}$, up to normalization. Therefore, the main requirement is that  reasonably efficient  local samplers exist, which are treated as black boxes. There can be efficiency gains by coupling the local samplers, but here we opt for a plug-and-play approach. The estimator we develop is novel, easy to compute, and, as we show, it unifies and connects most of the schemes discussed earlier in this section.

An important contribution of this article is the careful theoretical analysis of the asymptotic properties of the proposed estimator. We obtain uniform convergence results in two asymptotic regimes, one we call fixed-grid asymptotics, where the size of the grid ($L$) is kept fixed and the Monte Carlo effort per grid point ($N_{\ell}$) increases, and one we call dense-grid asymptotics,  where $N_{\ell}$ is kept fixed (e.g., $N_\ell =1$) and $L$ increases. For practical purposes, the design of the simulation grid $\Ls$ is an important consideration. Using the functional estimate and expressions for its asymptotic variance, we propose an iterative scheme that can be used for optimal design. We demonstrate a pilot implementation that shows promise but we largely leave its full implementation to future work.

There are conceptual links between our approach and that in \cite{inla}, as both are based on the principle of stratification according to, and numerical integration with respect to the hyperparameters.

\section{Eigenvector methods}
\label{sec:emus}

As mentioned above, our approach in this article relies on the EMUS implementation of umbrella sampling. We provide a description of EMUS in the context of marginal likelihood estimation in Section \ref{ssec:emus} and elucidate its connection to the \emph{Gibbs sampler} in Section \ref{ssec:emus-gibbs}. This connection has the pedagogical value of linking a standard method from computational physics/chemistry to one from computational statistics. Crucially, it allows us to develop the functional estimator and the dense-grid consistency theory by relating the continuous and discrete spectra of the associated Markov processes. Section \ref{ssec:emus-error} presents some error analysis for EMUS, which our consistency theory for the functional estimator builds upon. Section \ref{ssec:lit} draws  further connections between EMUS and other established methods for estimating normalizing constants, and it discusses computational complexity and optimality properties of EMUS.

\subsection{The EMUS estimator}
\label{ssec:emus}

EMUS was first introduced as a method for chemical free energy calculations in \cite{thiede2016eigenvector} and then suggested as a variance reduction strategy for general expectations in \cite{DinnerThiede2020emus2}. As explained in those works, and discussed further in Section \ref{ssec:lit}, an iteration of the EMUS estimator gives an efficient algorithm to compute the Vardi estimator. The Vardi estimator was first introduced in the context of biased sampling in \cite{Vardi:EmpDistSelectionBias1985} and discussed as a tool for Monte Carlo integration and normalizing constant estimation in \citep{vardi}.

When cast as a tool for normalizing constant estimation, EMUS can be described as follows. Let $\psi_{\lambda_\ell}(\theta)$, for $\ell=1,\ldots, L$, be positive integrable functions, indexed by $\lambda_\ell \in \Ls = \braces{\lambda_{\ell}}_{\ell=1}^{L}$. Although it would be simpler to  index these by $\ell=1,\ldots,L$, but we opt for what appears to be a more cumbersome notation to facilitate the use of this method for functional estimation of the marginal likelihood later. Then, we define
\begin{equation}
  \label{eq:main-q-1}
  \begin{aligned}
    z(\lambda) & = \int \psi_{\lambda}(\theta) \dd \theta,\quad \pi_{\lambda}(\theta) = \frac{\psi_{\lambda}(\theta)}{z(\lambda)},
  \end{aligned}
\end{equation}
with $z(\lambda_\ell)$ for $\ell=1,\ldots,L$ being the normalizing constants of interest for estimation. 

EMUS is based on the following two ingredients. The first is a fundamental, albeit elementary, identity established in the following Proposition, whose proof is omitted.

\begin{proposition}
  \label{prop:eigenvec}
  Let $\bz$ be the vector of normalizing constants,
  \begin{equation*}
  \bz_{\ell} = z(\lambda_{\ell}),\quad \lambda_{\ell} \in \Ls,
  \end{equation*}
  and $\bF$  the stochastic matrix with elements
  \begin{equation*}
    \bF_{i,j} = \int \frac{\psi_{\lambda_{j}}(\theta) }{\sum_{\ell=1}^{L} \psi_{\lambda_{\ell}}(\theta) } \pi_{\lambda_{i}}(\theta) \dd \theta,\quad \lambda_{i}, \lambda_{j} \in \Ls. 
  \end{equation*}
  Then, $\bz$ is in detailed balance with $\bF$:
  \begin{equation}
    \label{eq:detailed}
    \bz_i \bF_{i,j} = \bz_j \bF_{j,i},
  \end{equation}
  and therefore it solves the eigenvector problem
  \begin{equation*}
    \bz = \bF\T \bz.
  \end{equation*}
\end{proposition}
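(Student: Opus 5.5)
The plan is to substitute the definition of $\pi_{\lambda_i}$ into $\bF_{i,j}$ and observe that the factor $\bz_i = z(\lambda_i)$ cancels the normalization. This gives a symmetric integral, and symmetry is exactly detailed balance; the eigenvector identity then follows by summing over rows.

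First I will check that $\bF$ is a stochastic matrix. Its entries are nonnegative because every $\psi_{\lambda_\ell}$ is positive. For each $i$, summing over $j$ gives
\begin{equation*}
  \sum_{j=1}^{L} \bF_{i,j} = \int \frac{\sum_{j=1}^{L}\psi_{\lambda_j}(\theta)}{\sum_{\ell=1}^{L}\psi_{\lambda_\ell}(\theta)}\, \pi_{\lambda_i}(\theta)\dd\theta = \int \pi_{\lambda_i}(\theta) \dd\theta = 1.
\end{equation*}
Exchanging the finite sum with the integral is harmless. The denominator is strictly positive, so each integrand is bounded by $\pi_{\lambda_i}$ and is therefore integrable.

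Next comes detailed balance. Using $\pi_{\lambda_i} = \psi_{\lambda_i}/z(\lambda_i)$ from \eqref{eq:main-q-1},
\begin{equation*}
  \bz_i \bF_{i,j} = \int \frac{\psi_{\lambda_i}(\theta)\,\psi_{\lambda_j}(\theta)}{\sum_{\ell=1}^{L}\psi_{\lambda_\ell}(\theta)} \dd\theta .
\end{equation*}
The right-hand side is symmetric in $(i,j)$, so it also equals $\bz_j \bF_{j,i}$, which is \eqref{eq:detailed}.

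Finally, the eigenvector relation. Summing \eqref{eq:detailed} over $i$ and using the row sums from the first step gives
\begin{equation*}
  (\bF\T\bz)_j = \sum_{i} \bz_i \bF_{i,j} = \bz_j \sum_i \bF_{j,i} = \bz_j .
\end{equation*}
Hence $\bz = \bF\T\bz$.

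No step is a genuine obstacle. The only point needing care is well-definedness: the integrals must be finite and the denominator must be nonzero. Both are guaranteed by positivity and integrability of the $\psi_{\lambda_\ell}$. I would also remark, without proving it, that uniqueness of $\bz$ up to scaling requires irreducibility of $\bF$, which here holds automatically since all entries are positive; the proposition itself does not require this.
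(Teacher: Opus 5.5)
Your proof is correct. The paper omits the proof of this proposition as elementary, and your argument is the intended one: cancel $z(\lambda_i)$ to get $\int \psi_{\lambda_i}(\theta)\psi_{\lambda_j}(\theta)/\sum_{\ell}\psi_{\lambda_\ell}(\theta)\,\dd\theta$, which is symmetric in $(i,j)$, then sum detailed balance over $i$ using the unit row sums. This is the same multiply-and-divide-by-$\sum_\ell \psi_{\lambda_\ell}$ device the paper uses for Proposition~\ref{prop:reproduce}. Your side remark that irreducibility is automatic needs strict positivity of the $\psi_{\lambda_\ell}$; the paper elsewhere lets them vanish (Example~\ref{ex:toy-is}) and so assumes (GridIrred) separately, but this does not affect the proposition.
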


\begin{remark}
  Above and in the rest of the article we use bold-face letters to refer to vectors (lower case) and matrices (upper case) for quantities defined on the finite grid $\Ls$. This will differentiate such quantities from functions defined on the infinite compact set $\Lambda$, for which regular fonts will be used. 
\end{remark}

The second ingredient of EMUS is a local sampler from the densities $\pi_{\lambda_{\ell}}(\theta)$, for $\lambda_{\ell} \in \Ls$. These are used to form an estimated matrix $\hbF$, with elements:
\begin{equation*}
  \hbF_{i,j} = \frac{1}{N_{i}} \sum_{n=1}^{N_{i}} \frac{\psi_{\lambda_{j}}(\theta_{i}^{(n)}) }{\sum_{\ell = 1}^{L} \psi_{\lambda_{\ell}}(\theta_{i}^{(n)}) }, \quad \theta_{i}^{(n)} \sim \pi_{\lambda_{i}}(\theta), \quad 
  n = 1, \ldots, N_{i}.
\end{equation*}
It is easy to check that $\hbF$ is also a stochastic matrix. With these ingredients, the EMUS estimator is defined as the solution to the eigenvector problem, $\hbz = \hbF\T \hbz$, which can be seen as a random perturbation of that in Proposition \ref{prop:eigenvec}. The eigenvector is best found by a direct linear solve of $\hbz = \hbF\T \hbz$ through a QR decomposition of $\bI-\hbF\T$ (see Supplement \ref{app:compemus} for details).

\subsection{Connection to the Gibbs sampler}
\label{ssec:emus-gibbs}

In this section we make a novel connection between the EMUS estimator and the Gibbs sampler. This observation links methods used extensively in the computational physics/chemistry community with those used in the computational statistics community. The purpose of this connection is not merely pedagogical, as it is a crucial ingredient of our functional estimator in Section \ref{sec:nystrom} and its theoretical analysis in Section \ref{sec:theory}. 

We extend slightly the framework of Section \ref{ssec:emus} and introduce a prior density $p(\lambda)$ on the space $\Lambda$. This density might be part of the statistical modelling, or it might be a computational tool to make the local function $\psi_\lambda(\theta)$ also integrable in $\lambda$, and it might be taken to be simply 1 when this is possible and desirable. We then define
\begin{equation}
  \label{eq:main-q-2}
  \begin{aligned}
    \pi_\theta(\lambda) & = \frac{\psi_{\lambda}(\theta) p(\lambda)}{\int \psi_{\lambda'}(\theta) p(\lambda') \dd \lambda'}, \quad u(\lambda) \propto z(\lambda) p(\lambda),
  \end{aligned}
\end{equation}
with the normalization of $u(\lambda)$ discussed below in Remark~\ref{rmk:conventions}. Note that computing $z(\lambda)$ is equivalent to computing $u(\lambda)$ since $p(\lambda)$ is explicitly known.

Consider, now, the two-component Gibbs sampler that updates iteratively $\theta$ and $\lambda$ according to the two conditional densities in \eqref{eq:main-q-1} and \eqref{eq:main-q-2} respectively. By construction, it generates a Markov chain on $\Lambda$ with transition kernel
\begin{equation}
  \label{eq:infemus}
  g(\lambda,\lambda') = \int \pi_{\lambda}(\theta) \pi_\theta(\lambda') \dd \theta, 
\end{equation}
which is reversible (hence invariant) with respect to $u(\lambda)$, hence solves the following eigenproblem:
\begin{equation*}
  u(\lambda) = \int u(\lambda') g(\lambda',\lambda) \dd \lambda'.
\end{equation*}

A common practice among practitioners is the so-called \emph{griddy Gibbs sampler} approach \citep{griddy}, where the hyperparameter space is discretized and restricted to a finite set $\Ls = \braces{\lambda_{\ell}}_{\ell=1}^L$, and the update of $\lambda$ is done by sampling from $\pi_{\theta}(\lambda_{\ell})$ for $\lambda_{\ell} \in \Ls$, normalized on the grid. Consequently, the transition kernel becomes a stochastic matrix and a direct calculation shows that is precisely:
\begin{equation}
  \label{eq:kernel-discr}
  \bF_{i,j} = \int \frac{\psi_{\lambda_{j}}(\theta) p(\lambda_{j})}{\sum_{\ell=1}^{L} \psi_{\lambda_{\ell}}(\theta) p(\lambda_{\ell})} \pi_{\lambda_{i}}(\theta) \dd \theta,\quad \lambda_{i}, \lambda_{j} \in \Ls,
\end{equation}
which is the direct extension of the matrix defined in Proposition \ref{prop:eigenvec} allowing for $p(\lambda) \neq 1$. In an analogous way, the griddy Gibbs sampler generates samples according to a probability vector proportional to:
\begin{equation}
  \label{eq:bfu}
  \bu_{\ell} = u(\lambda_{\ell}),\quad \lambda_{\ell} \in \Ls,
\end{equation}
which is in detailed balance with $\bF$,  $\bu_i \bF_{i,j} = \bu_j \bF_{j,i}$,
and therefore also solves the eigenproblem $\bu = \bF^T \bu$.

\begin{remark}
\label{rmk:conventions}
  In the rest of the article we will use \eqref{eq:kernel-discr} as the definition of $\bF$, define $\hbF$ with elements:
  \begin{equation}
  \label{eq:hat-f}
    \hbF_{i,j} = \frac{1}{N_{i}} \sum_{n=1}^{N_{i}} \frac{\psi_{\lambda_{j}}(\theta_{i}^{(n)}) p(\lambda_{j})}{\sum_{\ell = 1}^{L} \psi_{\lambda_{\ell}}(\theta_{i}^{(n)}) p(\lambda_{\ell})}, \quad \theta_{i}^{(n)} \sim \pi_{\lambda_{i}}(\theta), \quad 
  n = 1, \ldots, N_{i},
  \end{equation}
  which is again a stochastic matrix, and define the EMUS estimator as the solution (by a direct solve) to the following eigenproblem: 
  \begin{equation}
  \label{eq:emus}
    \hbu = \hbF\T \hbu.
  \end{equation}
  While for practical use we may fix $\sum_{\ell} \bu_{\ell} = \sum_{\ell} \hbu_{\ell}$ to any arbitrary number, it is helpful to the theoretical discussion of Section \ref{sec:dense-g} to adopt the convention
\begin{equation}
    \label{eq:sumtol}
    (SumToL) \quad \sum_{\ell=1}^{L} \bu_{\ell} = \sum_{\ell=1}^{L} \hbu_{\ell} = L,
\end{equation}
as we do for the remainder of the article. Due to the constraint imposed by \eqref{eq:bfu},  this normalization for $\bu$ implies a normalization for $u(\lambda)$. 
 All the above directly generalize in a trivial manner the quantities defined in Section \ref{ssec:emus} when $p(\lambda) \neq 1$.
\end{remark}

It follows that the griddy Gibbs sampler and the EMUS estimator use the same ingredients (i.e. local sampling) and solve the same eigenproblems, albeit in different ways. The griddy Gibbs sampler solves it by sampling from both conditional densities, and we can think of the empirical distribution on $\Ls$ as an approximation $\hbu$ to $\bu$, up to normalization. The EMUS estimator instead uses a direct solve, effectively stratifying in the $\Ls$ space instead of sampling. This has important consequences for the errors of the respective methods, in particular their sensitivity to the \emph{spectral gap} of $\bF$, which we analyze below.

\subsection{EMUS error analysis}
\label{ssec:emus-error}

As previously observed by \cite{DinnerThiede2020emus2, thiede2015sharp}, one of the most striking properties of EMUS is that it can be precise even if the spectral gap of $\bF$ is small. The spectral gap is defined as the difference between 1 and the second largest modulo eigenvalue of $\bF$ \cite[see for example][]{liu}, and  it is small in situations of strong dependence between $\theta$ and $\lambda$, or in the presence of multimodality. The point is best illustrated by an expression of the asymptotic variance of EMUS in Theorem \ref{thm:avar-emus} below. Its proof is given in Supplement \ref{app:avar-emus}, and builds upon  results obtained in \cite{DinnerThiede2020emus2} and \cite{thiede2015sharp}. We use this for the error analysis of the functional estimator in Section \ref{sec:theory}. 

\begin{theorem}
  \label{thm:avar-emus}
  Assume that:
  \begin{itemize}
  \item[(GridIrred)] $\bF$ is irreducible;
  \item[(IndSample)] The rows of $\hbF$ are independent;
  \item[(PosWeight)] For the total amount of Monte Carlo sampling $N = \sum_{\ell=1}^{L} N_{\ell}$, we ask that $w_{\ell} = \lim_{N \to \infty} N_{\ell}/N > 0$.
  \end{itemize}
  Assume that the $\theta^{(n)}_{\ell}$ are independent across $\ell$ and $n$. Let
  \begin{equation*}
    \bR_{i,j} = N_{i}\Var(\hbF_{i,j}) =  \Var\parens*{ \frac{\psi_{\lambda_{j}}(\theta_{i}^{(1)}) p(\lambda_{j})}{\sum_{\ell=1}^{L} \psi_{\lambda_{\ell}}(\theta_{i}^{(1)}) p(\lambda_{\ell})}}.
  \end{equation*}
  Then, as the number of samples $N\to \infty$, $\sqrt{N} \left(\hbu_{\ell} - \bu_{\ell}\right)$ converges in distribution to a Gaussian variable with mean 0 and variance $\bsig_{\ell}^2$. The \emph{relative} asymptotic variance $\bsig_{\ell}^2/\bu_{\ell}^2$ satisfies the bound
  \begin{equation*}
   \sup_{\ell} \frac{\bsig_{\ell}^2}{\bu_{\ell}^2} \leq  L \sum_{i=1}^{L} w_{i}^{-1} \sum_{j \neq i} \frac{\bR_{i,j}}{\bQ_{i,j}^{2}},
  \end{equation*}
  where $\bQ_{i,j}$ is the probability that a Markov chain on state-space $\Ls$ with transition matrix $\bF$ starting from $\lambda_{i}$ first visits $\lambda_{j}$ before returning to $\lambda_{i}$.
\end{theorem}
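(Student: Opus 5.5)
The plan is the delta method for the stationary distribution of a perturbed stochastic matrix, followed by the perturbation bounds of \cite{thiede2015sharp}.

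\textbf{Step 1: the CLT for the rows of $\hbF$.} By (IndSample) and the independence of the $\theta_\ell^{(n)}$, each row $\hbF_{i,\cdot}$ is an average of $N_i$ i.i.d.\ bounded vectors, since each entry lies in $[0,1]$. Hence $\sqrt{N_i}(\hbF_{i,\cdot}-\bF_{i,\cdot})$ is asymptotically Gaussian with covariance $\Cov$ of one summand, whose diagonal is $\bR_{i,\cdot}$. Using (PosWeight), $\sqrt{N}(\hbF-\bF)$ converges jointly to a Gaussian matrix $\bE$. Its rows are independent, row $i$ has covariance $w_i^{-1}\Cov_i$, and every row sums to zero.

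\textbf{Step 2: differentiability of the map to $\bu$.} Under (GridIrred), the normalized left Perron vector is a smooth function of the stochastic matrix in a neighbourhood of $\bF$, where stochastic matrices remain irreducible and the eigenvalue $1$ is simple. The normalization (SumToL) fixes the scale. Its derivative is expressed through the group inverse $(\bI-\bF)^{\#}$: for a perturbation $\bE$,
\begin{equation*}
d\bu\T=\bu\T\bE(\bI-\bF)^{\#}.
\end{equation*}
The continuous mapping theorem and the delta method then give $\sqrt{N}(\hbu-\bu)\Rightarrow \bu\T\bE(\bI-\bF)^{\#}$. This limit is Gaussian with mean zero and variance
\begin{equation*}
\bsig_\ell^2=\sum_i w_i^{-1}\bu_i^2\,\Var\bigl(\textstyle\sum_j \bE^{(i)}_j (\bI-\bF)^{\#}_{j,\ell}\bigr),
\end{equation*}
where $\bE^{(i)}$ is the $i$-th row of $\bE$ normalized to unit sampling weight.

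\textbf{Step 3: the bound (main obstacle).} Expanding the variance with Cauchy--Schwarz over $j$ gives
\begin{equation*}
\Var(\cdot)\le \sum_{j\neq i}\bR_{i,j}\,(\ldots)^2.
\end{equation*}
To obtain this, I would rewrite $\sum_j \bE^{(i)}_j (\bI-\bF)^{\#}_{j,\ell}$ as $\sum_{j\neq i}\bE^{(i)}_j\bigl((\bI-\bF)^{\#}_{j,\ell}-(\bI-\bF)^{\#}_{i,\ell}\bigr)$, using that the row sums of $\bE$ vanish. The Cauchy--Schwarz step then costs a factor of at most $L$.

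The key fact needed is the sensitivity estimate of \cite{thiede2015sharp}:
\begin{equation*}
\frac{\abs{(\bI-\bF)^{\#}_{j,\ell}-(\bI-\bF)^{\#}_{i,\ell}}\,\bu_i}{\bu_\ell}\le \frac{1}{\bQ_{i,j}}.
\end{equation*}
I would derive it by writing the group-inverse differences as differences of mean hitting times. Equivalently, $\partial \log\bu_\ell/\partial \bF_{i,j}$ is controlled by $\bu_i/\bu_\ell$ times the expected number of visits to $\ell$ during an excursion from $j$ back to $i$. Bounding that expected count is where the escape probability $\bQ_{i,j}$ enters, via a renewal argument over excursions from $i$: the relevant quantity is $\bu_\ell/(\bu_i\bQ_{i,j})$.

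Combining the three steps and dividing by $\bu_\ell^2$ yields
\begin{equation*}
\frac{\bsig_\ell^2}{\bu_\ell^2}\le L\sum_i w_i^{-1}\sum_{j\neq i}\frac{\bR_{i,j}}{\bQ_{i,j}^2}.
\end{equation*}
The right-hand side is uniform in $\ell$, so the bound holds for $\sup_\ell\bsig_\ell^2/\bu_\ell^2$.
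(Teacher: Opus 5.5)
Your proposal is correct and follows the paper's route: a CLT for the independent rows of $\hbF$, the delta method for the stationary vector, a factor-$L$ bound on the resulting quadratic form, and the sensitivity bound $\abs{\partial \log \bu_\ell/\partial \bF_{i,j}} \le 1/\bQ_{i,j}$ of \cite{thiede2015sharp}. Your group-inverse expression $\bu_i\bigl((\bI-\bF)^{\#}_{j,\ell}-(\bI-\bF)^{\#}_{i,\ell}\bigr)$ is just an explicit form of the paper's derivative $\partial \bu_\ell/\partial \bF_{i,j}$ (taken from \cite{DinnerThiede2020emus2}), and the other differences are cosmetic: the paper uses the trace bound $\ba\T\bA\ba\le\tr(\bA)\abs{\ba}_2^2$ on the correlation matrix where you use Cauchy--Schwarz, and it cites the sensitivity bound rather than sketching its renewal proof.
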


The theorem as stated assumes independent local samples at each grid point, but it can be extended to the case where these are generated according to a Markov chain, provided a CLT holds. As studied in detail in \cite{DinnerThiede2020emus2}, the fraction on the right hand side of the last display can be of moderate size even when both numerator and denominator are small, and in particular when the spectral gap of $\bF$ is very small. We further illustrate this robustness to spectral gap decay in Supplement \ref{app:toy}, which contains a toy example and some numerical comparisons between the griddy Gibbs sampler and the EMUS estimator.

\subsection{Related normalizing constant estimators}
\label{ssec:lit}

Here we clarify connections between the EMUS estimator and, among others, SMC, safe importance sampling, bridge sampling, and the Vardi estimator. Connections between our functional estimator, which builds upon EMUS, and Monte Carlo maximum likelihood are discussed in Section \ref{sec:nystrom}.

As remarked earlier  and explained in \cite{DinnerThiede2020emus2}, the Vardi estimator can be obtained by a recursive application of the the EMUS estimator, which can be seen as a linearization of the non-linear system of equations required to obtain the Vardi estimator. 
The Vardi estimator is in a specific sense the optimal way to use the samples from the local densities to estimate the ratios of normalizing constants. The precise result is discussed in \cite{vardi} according to which the estimator can be obtained as  maximum likelihood for a semi-parametric model in which the dominating measure is unknown to the analyst. For a precise result on its optimality in asymptotic variance see Theorem 1 in \cite{tan2004likelihood}, a result to which we return below.

One can arrive at the EMUS estimator by an argument different from that used in Section \ref{ssec:emus}, exploiting the following identity:
\begin{equation*}
  z(\lambda) = \int \frac{\psi_{\lambda}(\theta) }{\sum_{\ell=1}^{L} w_{\ell} \pi_{\lambda_{\ell}}(\theta)}  \sum_{\ell=1}^{L} w_{\ell} \pi_{\lambda_{\ell}}(\theta) \dd \theta.
\end{equation*}
For a specific choice of the weights $w_{\ell}$ (which would depend, though, on the normalizing constants $z(\lambda_{\ell})$), and by a re-arrangement the terms on the right-hand-side, we can obtain the same eigenequation as in Proposition \ref{prop:eigenvec}.  This connects EMUS to safe importance sampling, e.g. \cite{owen2000safe}, which is based on simulating from mixtures of local densities. However, we do not implement EMUS by simulating from a mixture distribution, which would be impossible because the component weights would be intractable.

More generally, the use of importance sampling to build estimators of ratios of normalizing constants has a long history, see for example Section 3 in \cite{gelman-meng} for an early but influential account. The most elementary use is via the identity:
\begin{equation}
  \label{eq:is-id-grid}
  \frac{\bu_{j}}{\bu_{i}} = \int \frac{\psi_{\lambda_{j}}(\theta) p(\lambda_{j})}{\psi_{\lambda_{i}}(\theta) p(\lambda_{i})} \pi_{\lambda_{i}}(\theta) \dd{\theta}.
\end{equation}
It is known that the resulting estimator can easily have high or even infinite variance. EMUS, instead, averages  bounded functions. For completeness, an illustrative toy example in Supplement \ref{app:toy} compares EMUS to this naive use of importance sampling. Better schemes than \eqref{eq:is-id-grid} are based on the identity studied in \cite{meng-wong}:
\begin{equation}
  \label{eq:bridge}
  \frac{\bu_{j}}{\bu_{i}} = \frac{\int \alpha(\theta) \psi_{\lambda_{j}}(\theta) p(\lambda_{j}) \pi_{\lambda_{i}}(\theta) \dd{\theta}}{\int \alpha(\theta) \psi_{\lambda_{i}}(\theta) p(\lambda_{i}) \pi_{\lambda_{j}}(\theta) \dd{\theta}},
\end{equation}
where $\alpha(\theta)$ is sometimes called the bridge function. 

We can relate this paradigm to EMUS by choosing $\alpha(\theta) = 1/\sum_{\ell=1}^L \psi_{\lambda_\ell}(\theta) p(\lambda_\ell)$, which plugged into \eqref{eq:bridge} leads to the identity $\bu_{j}/\bu_{i} = \bF_{i,j}/\bF_{j,i}$. This identity corresponds to the detailed balance equations obtained in Proposition \ref{prop:eigenvec} and extended in Section \ref{ssec:emus-gibbs}. The corresponding estimator of the ratio of normalizing constants is $\hbF_{i,j}/\hbF_{j,i}$. This estimator is different, and typically worse, than EMUS. Notice that whereas $\bu$ and $\bF$ are in detailed balance the same does not necessarily hold for $\hbu$ and $\hbF$. Therefore, in general, ratios of estimated normalizing constants, $\hbu_i/\hbu_j$, as obtained by EMUS,  depend on all elements of $\hbF$, and by extension on samples generated from all local densities, not just from the corresponding pairs $\pi_{\lambda_{i}}(\theta)$ and $\pi_{\lambda_{j}}(\theta)$. In fact, in many cases $\hbF_{j,i} = \hbF_{i,j} = 0$, and yet EMUS yields ratios $\hbz_i/\hbz_j$ with controlled errors.

Bridge sampling chooses some ordering of the elements in $\Ls$  and constructs an estimator based on a telescoping product of terms of the form \eqref{eq:bridge}. Given such ordering,  SMC can be used to obtain samples from the corresponding sequence of local densities, and returns as a by-product an estimator of ratios of their normalizing constants, see for example Chapter 17 
in \cite{smc_book}. Bridge sampling and SMC are related but they are distinct, and can be complementary to each other. Bridge sampling can use SMC for obtaining the necessary local samples, but it is based on a ``filtering'' approach to estimating ratios of normalizing constants (in the sense that local samples from $\pi_{\lambda_\ell}$ are used when estimating ratios of its normalizing constant), as opposed to the ``prediction'' approach used  in SMC (where only samples from previous densities in the ordering are used). For this reason, bridge sampling will typically result in better estimates compared to SMC.  

When all entries beyond the first sub- and super-diagonals of $\bF$ are zero, bridge sampling with index ordering $1,2,\dots,L$ and EMUS produce the same estimator. However, in many applications (e.g. when $\lambda$ is not a scalar) the choice of an ordering  is artificial and some choices can result in high variance in bridge sampling, whereas the choice of index ordering has no impact on EMUS. For more detail, the reader can refer to
the insightful analysis in Theorem 1 of \cite{tan2004likelihood} which, among other things, compares bridge sampling and the Vardi estimator in terms of their asymptotic variance. 

EMUS solves the eigenequation by a direct linear solve. In the worst case (when $\hbF$ is dense), the cost of this linear solve is $\mathcal{O}\left(L^3\right)$ floating point operations. When evaluation of the likelihoods  is expensive, this cost can be dominated by the $\mathcal{O}\left(L^2\right)$ cost to form $\hbF$. The cost of forming $\hbF$ and solving for $\hbz$ are both reduced when $\bF$, and therefore also $\hbF$, is sparse, which is easily achieved by a minor extension of  EMUS, though typically at the cost of a smaller spectral gap for $\bF$. However, for the values of $L$ used in practice, the  $\mathcal{O}\left(L\right)$ cost of sampling from local densities often dominates the overall computational cost.

\section{The functional estimator}
\label{sec:nystrom}

Our proposed functional estimator builds upon the eigenvector methods that operate on the lattice $\Ls$ and the following rather surprising reproducing property. 

\begin{proposition}
  \label{prop:reproduce}
  We define 
  \begin{equation}
    \label{eq:kernel}
    f(\lambda_{i}, \lambda) = \int \frac{\psi_{\lambda}(\theta) p(\lambda)}{\sum_{\ell=1}^{L} \psi_{\lambda_{\ell}}(\theta) p(\lambda_{\ell})} \pi_{\lambda_{i}}(\theta) \dd \theta,\quad \lambda \in \Lambda, \quad \lambda_{i} \in \Ls,
  \end{equation}
  which is finite for all $\lambda_{i}$ and $\lambda$. Then,
  \begin{equation}
    \label{eq:repr}
    u(\lambda) = \sum_{\ell=1}^{L} \bu_{\ell} f(\lambda_{\ell},\lambda). 
  \end{equation}
  \begin{proof}
    We begin by observing that
    \begin{equation*}
      f(\lambda_{i}, \lambda) = \frac{1}{z(\lambda_{i}) p(\lambda_{i})} \int \psi_{\lambda}(\theta) p(\lambda) \frac{\psi_{\lambda_{i}}(\theta) p(\lambda_{i})}{\sum_{\ell=1}^{L} \psi_{\lambda_{\ell}}(\theta) p(\lambda_{\ell})} \dd{\theta} \leq \frac{p(\lambda) \int \psi_{\lambda}(\theta) \dd{\theta}}{z(\lambda_{i}) p(\lambda_{i})} = \frac{u(\lambda)}{u(\lambda_{i})},
    \end{equation*}
    so $f(\lambda_{i}, \lambda)$ is finite for all $\lambda_{i}$ and $\lambda$. Note that
    \begin{equation*}
      u(\lambda) = \int \psi_{\lambda}(\theta) p(\lambda) \dd \theta = \int \psi_{\lambda}(\theta) p(\lambda) \frac{\sum_{i=1}^{L} \psi_{\lambda_{i}}(\theta) p(\lambda_{i})}{\sum_{\ell=1}^{L} \psi_{\lambda_{\ell}}(\theta) p(\lambda_{\ell})} \dd{\theta},
    \end{equation*}
    and the proof is completed by a re-organization and re-normalization of the terms, and the definition of \eqref{eq:kernel}.
  \end{proof}
\end{proposition}

The function $f(\lambda_{\ell},\lambda)$ in \eqref{eq:kernel}, extends the matrix $\bF$ defined in \eqref{eq:kernel-discr} off the grid, in the sense that $f(\lambda_{i},\lambda_{j}) = \bF_{i,j}$. Accordingly, the result \eqref{eq:repr} in Proposition \ref{prop:reproduce} for $\lambda \in \Ls$ is also satisfied due to Proposition \ref{prop:eigenvec}. 

Our proposed functional estimator of the marginal likelihood replaces the terms on the right-hand side of \eqref{eq:repr} by Monte Carlo estimates:
\begin{equation*}
  \label{eq:femus}
  \hu(\lambda) = \sum_{\ell=1}^{L} \hbu_{\ell} \hf(\lambda_{\ell},\lambda), 
\end{equation*}
where
\begin{equation}
  \label{eq:kernel-mc}
  \hf(\lambda_{i},\lambda) = \frac{1}{N_{i}} \sum_{n=1}^{N_{i}}  \frac{\psi_{\lambda}(\theta_{i}^{(n)}) p(\lambda)}{\sum_{\ell=1}^{L} \psi_{\lambda_{\ell}}(\theta_{i}^{(n)}) p(\lambda_{\ell})},\quad \theta_{i}^{(n)} \sim \pi_{\lambda_{i}}(\theta),
\end{equation}
and $\hbu_{\ell}$ are the estimates obtained using EMUS. It is again a by-product of the whole construction that this is a proper interpolant of the EMUS solutions on $\Ls$. This is formalized in the following Proposition, the proof of which follows rather easily from previous results. 

\begin{proposition}
  \label{prop:inter-EMUS}
  For $\hbu$ as defined via \eqref{eq:emus}, and $\hu(\cdot)$ via \eqref{eq:femus}, we have that $\hu(\lambda_{\ell}) = \hbu_{\ell}$ for each $\lambda_{\ell} \in \Lambda^s$. 
\end{proposition}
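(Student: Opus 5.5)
The argument is a direct evaluation of the definitions at grid points; no probabilistic reasoning is required.

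\emph{Step 1.} Fix $\lambda_j \in \Ls$ and evaluate the Monte Carlo kernel \eqref{eq:kernel-mc} at $\lambda = \lambda_j$. The summand becomes
\begin{equation*}
\frac{\psi_{\lambda_j}(\theta_i^{(n)}) p(\lambda_j)}{\sum_{\ell=1}^{L} \psi_{\lambda_\ell}(\theta_i^{(n)}) p(\lambda_\ell)}.
\end{equation*}
This is exactly the summand in \eqref{eq:hat-f}. Hence $\hf(\lambda_i,\lambda_j) = \hbF_{i,j}$ for all $i,j$, which is the sample analogue of the observation $f(\lambda_i,\lambda_j)=\bF_{i,j}$ made after Proposition~\ref{prop:reproduce}.

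The only point needing care is that both quantities must be computed from the \emph{same} samples $\theta_i^{(n)}$. I will therefore state explicitly that the functional estimator reuses the local samples that formed $\hbF$, as is implicit in \eqref{eq:kernel-mc}. Without this, the identity is false.

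\emph{Step 2.} Substitute Step 1 into \eqref{eq:femus}:
\begin{equation*}
\hu(\lambda_j) = \sum_{\ell=1}^{L} \hbu_\ell \hbF_{\ell,j} = (\hbF\T \hbu)_j.
\end{equation*}

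\emph{Step 3.} By the defining eigenproblem \eqref{eq:emus}, $(\hbF\T \hbu)_j = \hbu_j$. This gives $\hu(\lambda_j) = \hbu_j$.

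The normalization convention \eqref{eq:sumtol} plays no role, since the identity is linear in $\hbu$. The interpolation property therefore holds for any scaling of the eigenvector.
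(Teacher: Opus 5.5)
Your proof is correct and is the argument the paper leaves implicit when it says the result ``follows rather easily from previous results''. With the same local samples, $\hf(\lambda_i,\lambda_j)=\hbF_{i,j}$ (the sample analogue of $f(\lambda_i,\lambda_j)=\bF_{i,j}$), so $\hu(\lambda_j)=(\hbF\T\hbu)_j=\hbu_j$ by \eqref{eq:emus}; your remarks that the samples must be shared and that the normalization is irrelevant are both accurate.
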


Relative to EMUS, the functional estimator requires no further simulation. We can also directly obtain estimates of the gradient of $u(\cdot)$. Letting $\nabla \psi_{\lambda}(\theta)$ denote the gradient of the function with respect to $\lambda$, we directly obtain $\nabla  \hf(\lambda_{i},\lambda)$ from \eqref{eq:kernel-mc} in terms of $\nabla \psi_{\lambda}(\theta_{i}^{(n)})$, and by plugging this in \eqref{eq:femus} we obtain $\nabla \hu(\lambda)$, which can be used to estimate $\nabla u(\lambda)$. Our preferred way to locate local maxima of $u(\cdot)$ when $\lambda$ is low-dimensional is by evaluating the functional estimator on a fine grid, taking advantage of the fact that doing so is relatively cheap since it involves no further sampling.

We turn to the problem of estimating expectations of a test function $\phi(\theta)$:
\begin{equation*}
  \pi(\phi) = \frac{\int \phi(\theta) \pi_{\lambda}(\theta) u(\lambda) \dd \theta \dd \lambda}{\int u(\lambda) \dd \lambda}.  
\end{equation*}
We also define the conditional expectation
\begin{equation}
  \label{eq:c-e}
  \pi_{\lambda}(\phi) = \int \phi(\theta) \pi_{\lambda}(\theta) \dd \theta,
\end{equation}
so that
\begin{equation}
  \label{eq:m-c-e}
  \pi(\phi) = \frac{\int \pi_{\lambda}(\phi) u(\lambda) \dd \lambda}{\int u(\lambda) \dd \lambda}. 
\end{equation}
Our approximation scheme is based on the following identity.

\begin{proposition}
  \label{prop:c-e}
  For the quantities as defined in \eqref{eq:main-q-1} and \eqref{eq:main-q-2}, we have
  \begin{equation*}
    \pi_{\lambda}(\phi)u(\lambda) = \sum_{\ell=1}^{L} \bu_{\ell} h(\lambda_{\ell},\lambda),
  \end{equation*}
  where
  \begin{equation*}
    h(\lambda_{i},\lambda) = \int \phi(\theta) \frac{\psi_{\lambda}(\theta) p(\lambda)}{\sum_{\ell=1}^{L}  \psi_{\lambda_{\ell}}(\theta) p(\lambda_{\ell})} \pi_{\lambda_{i}}(\theta) \dd \theta.
  \end{equation*}
  \begin{proof}
    The proof follows by multiplying and dividing the integrand in \eqref{eq:c-e} by $\sum_{\ell} \bu_{\ell} \pi_{\lambda_{\ell}}(\theta)$ and re-arranging the terms.
  \end{proof}
\end{proposition}

The function $h$ defined in the proposition, whose dependence on $\phi$ is not reflected in the notation, becomes precisely $f$ in \eqref{eq:kernel} when $\phi=1$. On the basis of sampling done only on the grid points $\lambda_{\ell} \in \Ls$, we can obtain estimates $\hu(\lambda_m)$ and $\hh(\lambda_{\ell},\lambda_m)$, for $\lambda_m \in \Le$ and $\lambda_{\ell} \in \Ls$, and for appropriate numerical integration weights $\Delta_m$, we can approximate $\pi(\phi)$ by
\begin{equation*}
  \label{eq:exp-est}
  \frac{\sum_{\ell=1}^{L} \hbu_{\ell} \sum_{m=1}^M \hh(\lambda_{\ell},\lambda_m) \Delta_m}{\sum_{m=1}^M \hu(\lambda_m) \Delta_m}. 
\end{equation*}
The essence of our approach is that we can increase the resolution of $\Le$ to arbitrary precision at minimal additional cost. Hence, for practical purposes, the integral is computed to arbitrary accuracy and the only source of error is that of Monte Carlo.

Arguably, it is only sensible to use Monte Carlo estimates when they have finite variance. We can relate the variance of the functional estimator to that of the estimators in \eqref{eq:kernel-mc} through their relation in \eqref{eq:femus}. Notice that $r_{i}(\lambda)$ defined below relates to $\bR$ defined in Theorem \ref{thm:avar-emus} in the same way that $f(\lambda_i,\lambda)$ relates to $\bF$, in the sense that $r_{i}(\lambda_j) = \bR_{i,j}$ for $\lambda_j\in \Ls$.

\begin{proposition} 
\label{prop:var-femus}
Under the same assumptions as in Theorem \ref{thm:avar-emus} and assuming, in addition, that
\begin{equation*}
  r_{i}(\lambda) = N_{i} \Var(\hf(\lambda_{i}, \lambda)) = \Var\parens*{\frac{\psi_{\lambda}(\theta_{i}^{(1)}) p(\lambda)}{\sum_{\ell=1}^{L} \psi_{\lambda_{\ell}}(\theta_{i}^{(1)}) p(\lambda_{\ell})}} < \infty, \quad i = 1, \dots, L.
\end{equation*}
then as $N \rightarrow \infty$, $\sqrt{N} \left(\hu(\lambda) - u(\lambda)\right)$ converges in distribution to a Gaussian variable with mean 0 and variance $\sigma^2(\lambda)$. 
The asymptotic variance $\sigma^{2}(\lambda)$ satisfies the upper bound
  \begin{equation}
  \label{eq:siglambnd}
   \frac{\sigma^{2}(\lambda)}{u^{2}(\lambda)} \leq 2 \sum_{i = 1}^{L} w_{i}^{-1} \parens*{L  \sum_{j \neq i} \frac{\bR_{i,j}}{\bQ_{i,j}^{2}} + \frac{\bu_{i}^{2}}{u^{2}(\lambda)} r_{i}(\lambda)}.
  \end{equation}
\end{proposition}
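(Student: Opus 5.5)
The plan is to decompose the error of $\hu(\lambda)$ into an EMUS part and a kernel-estimation part, and then use the delta method together with the elementary inequality $(a+b)^2 \le 2a^2 + 2b^2$. Writing $\hu(\lambda) - u(\lambda) = \sum_{\ell}(\hbu_\ell - \bu_\ell)\hf(\lambda_\ell,\lambda) + \sum_\ell \bu_\ell(\hf(\lambda_\ell,\lambda) - f(\lambda_\ell,\lambda))$, by Proposition \ref{prop:reproduce}, we get
\begin{equation*}
\sqrt{N}(\hu(\lambda)-u(\lambda)) = \sum_{\ell}\sqrt{N}(\hbu_\ell-\bu_\ell)\, \hf(\lambda_\ell,\lambda) + \sum_\ell \bu_\ell \sqrt{N}\bigl(\hf(\lambda_\ell,\lambda)-f(\lambda_\ell,\lambda)\bigr).
\end{equation*}
First, I would establish joint asymptotic normality of the vector $\sqrt N(\hbF - \bF, \hf(\cdot,\lambda) - f(\cdot,\lambda))$. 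Its rows are built from i.i.d.\ samples at each grid point, independent across $\ell$. The finiteness of $\bR_{i,j}$ (bounded summands) and of $r_i(\lambda)$ gives a multivariate CLT row by row, with $N_i/N \to w_i$. Since $\hbu$ is a smooth function of $\hbF$ near $\bF$ under (GridIrred), as used in Theorem \ref{thm:avar-emus}, and $\hu(\lambda)$ is a smooth function of $(\hbF, \hf(\cdot,\lambda))$, the delta method yields a Gaussian limit with some variance $\sigma^2(\lambda)$. Using $\hf \to f$ in probability and Slutsky, we may replace $\hf(\lambda_\ell,\lambda)$ by $f(\lambda_\ell,\lambda)$ in the first sum.

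For the bound, I would denote by $A$ and $B$ the limits of the two sums, which are jointly Gaussian. Then $\sigma^2(\lambda) = \Var(A+B) \le 2\Var(A) + 2\Var(B)$, which avoids computing the cross-covariance. The second term is direct: by independence across rows, $\Var(B) = \sum_i w_i^{-1}\bu_i^2 r_i(\lambda)$. Dividing by $u^2(\lambda)$ gives the second piece of \eqref{eq:siglambnd}.

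The main obstacle is bounding $\Var(A)$ by $u^2(\lambda) L \sum_i w_i^{-1}\sum_{j\ne i}\bR_{i,j}/\bQ_{i,j}^2$. I would express $A = \sum_\ell \bu_\ell \epsilon_\ell f(\lambda_\ell,\lambda)$ with $\epsilon_\ell$ the limiting relative error of $\hbu_\ell$. Then I would use $\bu_\ell f(\lambda_\ell,\lambda) \ge 0$ and $\sum_\ell \bu_\ell f(\lambda_\ell,\lambda) = u(\lambda)$, so that $A/u(\lambda)$ is a convex combination of the $\epsilon_\ell$. By Jensen or Cauchy--Schwarz applied to the $L^2$ norm of a convex combination, $\Var(A)/u^2(\lambda) \le \sup_\ell \Var(\epsilon_\ell) = \sup_\ell \bsig_\ell^2/\bu_\ell^2$. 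Theorem \ref{thm:avar-emus} then bounds this by $L\sum_i w_i^{-1}\sum_{j\ne i}\bR_{i,j}/\bQ_{i,j}^2$. Combining the two pieces with the factor $2$ gives \eqref{eq:siglambnd}. The delicate point is to confirm that $\sqrt N(\hbu - \bu)$ and the $\hf$ fluctuations converge jointly, so that the limits $A$ and $B$ are defined on a common space. This follows because both are functions of the same row-wise CLT.
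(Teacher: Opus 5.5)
Your proof is correct and gives exactly \eqref{eq:siglambnd}. The overall skeleton matches the paper: a row-wise CLT, the delta method, and a factor $2$ from discarding the cross term between the EMUS part and the kernel part. Where you differ is in how the EMUS part is bounded.

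The paper works at the level of delta-method sensitivities. It writes $\pder{u(\lambda)}{\bF_{i,j}}=\sum_{\ell}\bu_{\ell}f(\lambda_{\ell},\lambda)\pder{\log\bu_{\ell}}{\bF_{i,j}}$ and bounds this by $u(\lambda)\sup_{\ell}\abs{\pder{\log\bu_{\ell}}{\bF_{i,j}}}$. It then re-runs the ingredients inside the proof of Theorem \ref{thm:avar-emus}: the trace bound on the row correlation matrix, which produces the factor $L$, and the sensitivity bound $\abs{\pder{\log\bu_{\ell}}{\bF_{i,j}}}\le 1/\bQ_{i,j}$.

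You use the same convexity fact from Proposition \ref{prop:reproduce}: the weights $c_{\ell}(\lambda)=\bu_{\ell}f(\lambda_{\ell},\lambda)/u(\lambda)$ are nonnegative and sum to one. You apply it to the limiting Gaussian relative errors instead, and then use only the \emph{conclusion} of Theorem \ref{thm:avar-emus}.

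This makes your argument more modular. It also yields a clean intermediate inequality, $\Var(A)\le u^{2}(\lambda)\sum_{\ell}c_{\ell}(\lambda)\bsig_{\ell}^{2}/\bu_{\ell}^{2}$. In words, the off-grid EMUS contribution has relative variance at most a $\lambda$-dependent average of the on-grid relative variances, so any improved on-grid bound transfers immediately.

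The paper's route, in exchange, keeps the exact asymptotic variance explicit. It is a quadratic form in the row covariances $\arr{\Xi}_{i}$ and the entry-wise sensitivities of $u(\lambda)$. That form is what you would want if you needed $\sigma^{2}(\lambda)$ itself rather than just a bound on it.
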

The first term on the right hand side of \eqref{eq:siglambnd} can be bounded above using results for the EMUS errors on the grid, such as Theorem \ref{thm:avar-emus}. Regarding the additional assumption that $r_{i}(\lambda) < \infty$ for all $i$ in the statement of Proposition \ref{prop:var-femus}, notice that 
\begin{equation*}
  r_{i}(\lambda)
  \leq \E \bracks*{\parens*{\frac{\psi_{\lambda}(\theta_{i}^{(1)}) p(\lambda)}{\sum_{\ell=1}^{L} \psi_{\lambda_{\ell}}(\theta_{i}^{(1)}) p(\lambda_{\ell})}}^{2}}.
\end{equation*}
Therefore, $r_{i}(\lambda) < \infty$ for all $i$ under the following condition (where we have upper-bounded a term above by 1):
\begin{equation}
  \label{eq:2Int}
  \int \frac{\psi_{\lambda}(\theta)^{2} p(\lambda)^{2}}{\sum_{\ell=1}^{L} \psi_{\lambda_{\ell}}(\theta) p(\lambda_{\ell})} \dd \theta < \infty.
\end{equation}
To gain some understanding of how restrictive or otherwise condition \eqref{eq:2Int} is we return to Example \ref{ex:toy-is}. Consider a $\lambda \notin \Ls$ such that $\psi_{\lambda}(\theta_{i})=0$ for $i=1,5$ and $\psi_{\lambda}(\theta_{i})>0$ for $i=2,3,4$. Then, appropriately adapted to a discrete $\theta$, \eqref{eq:2Int} holds whereas importance sampling between $\pi_{\lambda}(\theta)$ and any of $\pi_{\lambda_{\ell}}(\theta)$ is not well-defined. We can think of this as an ``interpolation'' example whereby the local densities, although with little overlap among them, have sufficient overlap with a density off-the-grid for the method to return high-quality estimates. 

We close the section by connecting to Monte Carlo maximum likelihood (MCML). This approach is based on the importance sampling identity for ratios of normalizing constants \eqref{eq:is-id-grid}, which we discussed earlier. MCML is based on samples generated from a single grid point, say from $\pi_{\lambda_{\ell}}(\theta)$, and yields a functional estimator of $u(\lambda)/u(\lambda_{\ell})$. The resulting estimator is actually $\hf(\lambda_{\ell},\lambda)$ as defined in \eqref{eq:kernel-mc} for $L=1$. For $L=1$, \eqref{eq:2Int} does not hold in Example \ref{ex:toy-is} above. Generally speaking, there is a kind of ``phase transition'' in the performance of the approach between $L=1$ and $L>1$, and MCML requires very strong conditions to achieve consistency \citep[see][]{mcmle-om}, much stronger than those we establish in Section \ref{sec:theory} for $L>1$.

\section{Numerical experiments}
\label{sec:numerics}

We conduct a number of numerical experiments with Gaussian process regression, classification, and crossed random effect models. When we compare our functional EMUS estimator to a griddy Gibbs sampler, we ensure that the overall simulation effort is the same for both. We extrapolate the griddy Gibbs estimates from $\Ls$ to $\Le$ by nearest neighbor interpolation of estimates on $\Ls$.

Our examples involve a high-dimensional $\theta$ and a two-dimensional hyperparameter $\lambda=(\tau_1,\tau_2)$. Besides estimates of $u(\lambda)$ we also report estimates of the two profiles, defined as
\begin{equation}
  \label{eq:profiles}
  \tau_1 \mapsto \max_{(\tau_1,\tau_2)\in \Le} u(\tau_1,\tau_2),\quad  \tau_2 \mapsto \max_{(\tau_1,\tau_2)\in \Le} u(\tau_1,\tau_2),
\end{equation}
which are a non-linear functional of $u(\lambda)$.

\subsection{Gaussian process regression}
\label{subsec:lingp}

The primary objective of this subsection is to test the methodology on a previously studied case where the  marginal likelihood is multimodal due to weak identifiability. A secondary objective is to consider a situation where the marginal likelihood can be computed exactly, allowing for a more refined numerical analysis of the errors induced by our methods. The underlying model is the Gaussian process regression setup previously considered in \cite{yao2022bayesian}, which can be consulted for further details. The $n = 32$ data points $(x_{i},y_{i})$ are shown in  Figure \ref{fig:lingp-data}, with the true conditional expectation $\E[y | x]$ shown as the solid line. We posit a misspecified homoscedastic Gaussian processes regression model, where the residual variance is assumed known and equal to 1/16, and the unknown regression function is a Gaussian process with covariance kernel
\begin{equation*}
  (\tau_1 / \tau_2) \exp(-\tau_2 |x' - x''|_{2}^{2}).
\end{equation*}
The latent variable $\theta$ is the $n$-dimensional vector that corresponds to the point evaluations of the Gaussian process on the $x_{i}$'s, and $\lambda = (\tau_1, \tau_2)$. We apply an improper flat prior to the log-parameters. The resulting $u(\lambda)$ is bimodal and can be computed exactly (see Figures \ref{fig:lingp-gt-variate} and \ref{fig:lingp-profs}).

\begin{figure}[htbp]
  \centering
  \includegraphics[scale=.66]{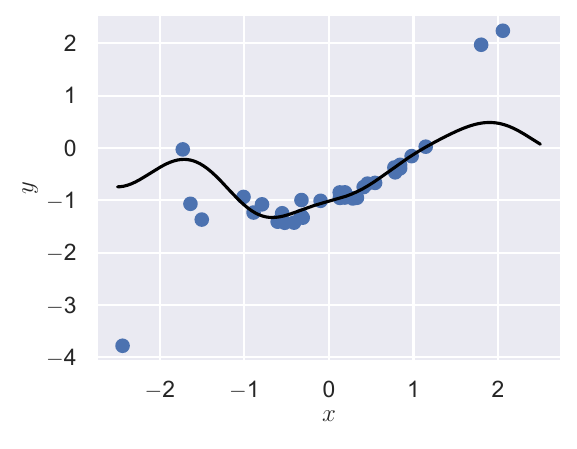}
  \caption{The data with the generating regression function $\E[y|x]$.}
  \label{fig:lingp-data}
\end{figure}

Our sampling design $\Ls$ consists of a $17 \times 17$ grid with uniform spacing in $(\log \tau_1, \log \tau_2)$ space, and $\Le$ is the $33 \times 33$ superset obtained in an analogous manner. At each $\lambda_{\ell} \in \Ls$, we obtain 16 \emph{independent} samples from $p(\theta | y, \lambda_{\ell})$. Figure \ref{fig:lingp-gt-variate} shows an  estimate of $u(\lambda_{m})$ for $\lambda_{m} \in \Le$ for functional EMUS and griddy Gibbs. Figure \ref{fig:lingp-profs} illustrates the variability of the method over 128 runs for each of the two profiles. 

\begin{figure}[htbp]
  \centering
  \includegraphics[scale=.66]{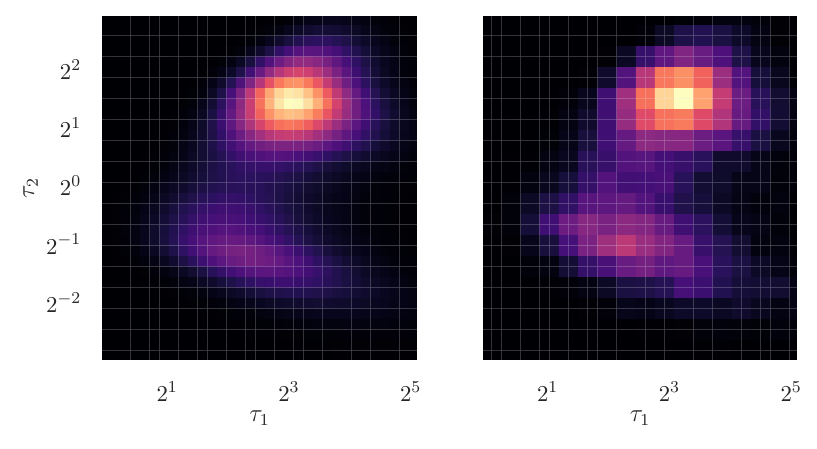}
  \caption{Typical estimates of $u(\lambda)$ according to functional EMUS (left) and extrapolated griddy Gibbs (right, see text for how this is obtained), both on a $17 \times 17$ simulation grid and a $33 \times 33$ evaluation grid.}
  \label{fig:lingp-gt-variate}
\end{figure}

\begin{figure}[htbp]
  \centering
  \includegraphics[scale=.66]{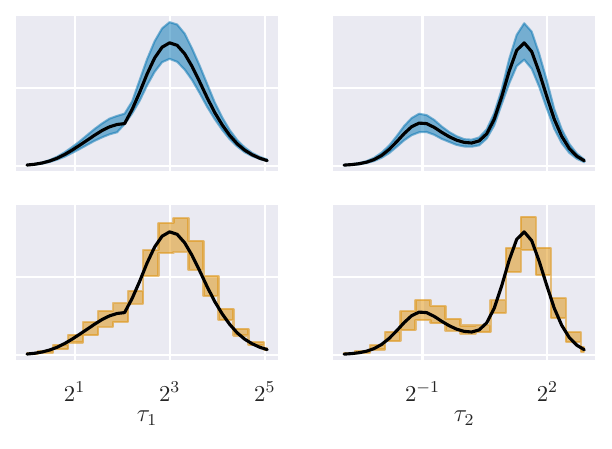}
  \caption{True profiles for $\tau_{1},\tau_{2}$ (solid black line), and 75\%-intervals of functional EMUS estimator (blue) and griddy Gibbs (orange) over 128 runs of each method.}
  \label{fig:lingp-profs}
\end{figure}

We now investigate the numerical performance of functional EMUS in two different asymptotic regimes, which we consider from a theoretical point of view in Section \ref{sec:theory}. One regime (that we later call fixed-grid) keeps $\Lambda^s$ fixed and increases sampling effort at each location in $\Ls$; the other (that we later call dense-grid) holds the sampling effort per grid point constant, and increases the density of the grid. In both cases $N$ denotes the total number of samples generated. We consider two fixed-grid regimes, one with lower and another with higher resolution. Results shown in Figure \ref{fig:lingp-scale} demonstrate that functional EMUS can reach the standard Monte Carlo asymptotic rate in both regimes. The comparatively higher error for the fixed low-density grid (green in Figure \ref{fig:lingp-scale}) highlights the importance of a sufficiently dense sampling grid. In Figure \ref{fig:lingp-scale}, the slope for both the high-resolution fixed design (blue) and the dense design (orange) eventually becomes the Monte Carlo rate of $1/\sqrt{N}$, while the slope for the low-resolution fixed design (green) is only slowly bending towards its supposed asymptotic rate. Therefore, it is more robust to increase the density of the grid, as long as the cost of storing a larger stochastic matrix and solving for its leading eigenvector is not prohibitive. In practice, we recommend increasing $L$ until the computational effort of the matrix decomposition dominates the effort spent on sampling, and increasing sampling intensity on the grid the rest of the way.

\begin{figure}[htbp]
  \centering
  \includegraphics[scale=.66]{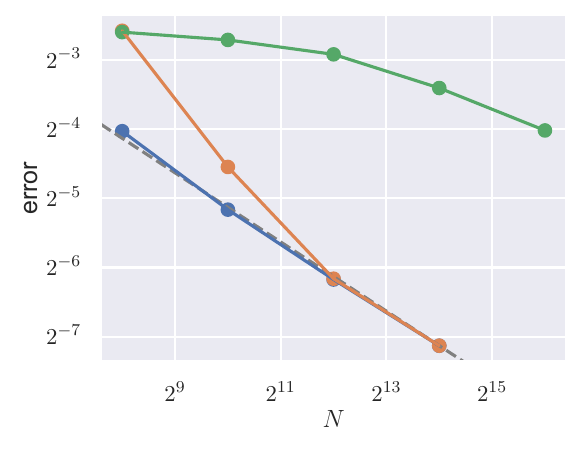}
  \caption{Total number of samples $N$ vs. error, defined as $\E |\hbu^{e} / |\hbu^{e}|_{1} - \bu^{e} / |\bu^{e}|_{1}|_{2}$, where $\hbu^{e}$ and $\bu^{e}$ are the evaluations of $\hu$ and $u$ on $\Le$, and $\Le$ is a $33 \times 33$ grid. The error is approximated by averaging over 128 estimates. The blue line corresponds to the regime with a fixed $17 \times 17$ grid and increasing $N$, the green line to the regime with a fixed $5 \times 5$ grid and increasing $N$, and the orange line to increasingly fine grids with $16$ samples per grid point. The grids of the orange line span from $5 \times 5$ on the left to $33 \times 33$ (same as $\Le$) on the right. The slope of the dashed line corresponds to the asymptotic Monte Carlo rate of $1/\sqrt{N}$.}
  \label{fig:lingp-scale}
\end{figure}

\subsection{Gaussian process classification}
\label{subsec:loggp}

Here we consider an example from \cite{titsias2018auxiliary}, where sampling from $p(\theta | y,\lambda)$, and even more so  from $p(\theta, \lambda | y)$, is challenging and computationally expensive. We use the ``Heart Disease'' dataset, collected by \cite{heart_disease_45} and hosted on the UC Irvine machine learning Repository. The model is a Gaussian process classifier with logistic link and binary outcome. In this context the marginal likelihood is intractable.

We use the same Gaussian process as in Section \ref{subsec:lingp}. We also use the same sampling design grids. At each $\lambda \in \Ls$, we run an MCMC chain with equilibrium distribution $p(\theta | \lambda, y)$ for 256 iterations, discarding half to avoid initialization bias. This Markov chain is constructed according to the \emph{marginal algorithm} of \cite{titsias2018auxiliary}. To assess the precision of the functional EMUS estimator, we obtain an estimate from much longer MCMC runs of 8192 samples at each $\lambda \in \Ls$, with 256 discarded for initialization. This high-precision estimate of $u(\lambda)$ is shown in Figure \ref{fig:loggp-gt-variate} together with a typical estimate produced by functional EMUS. The variability of the functional EMUS estimate of the two profiles over 128 runs is shown in Figure \ref{fig:loggp-profs}.

\begin{figure}[htbp]
  \centering
  \includegraphics[scale=.66]{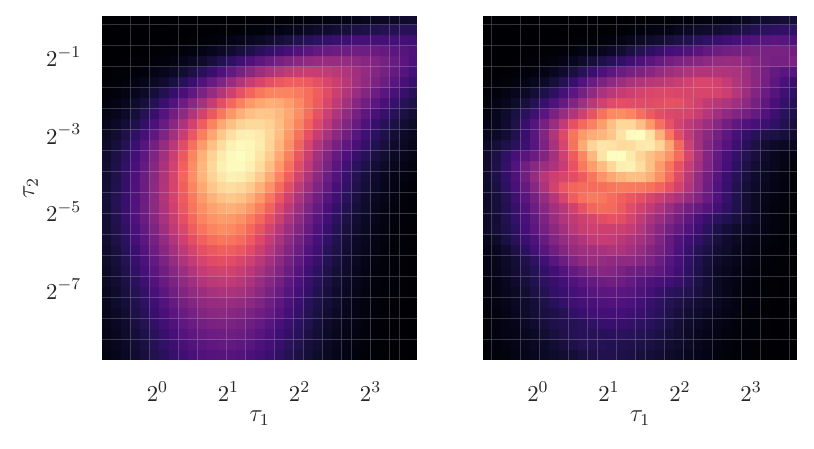}
  \caption{An expensive high-precision estimate of $u(\lambda)$ by functional EMUS (left, 8192 MCMC samples per grid point), and a typical estimate (right, 256 samples per grid point), both on a $17 \times 17$ simulation grid and a $33 \times 33$ evaluation grid.}
  \label{fig:loggp-gt-variate}
\end{figure}

\begin{figure}[htbp]
  \centering
  \includegraphics[scale=.66]{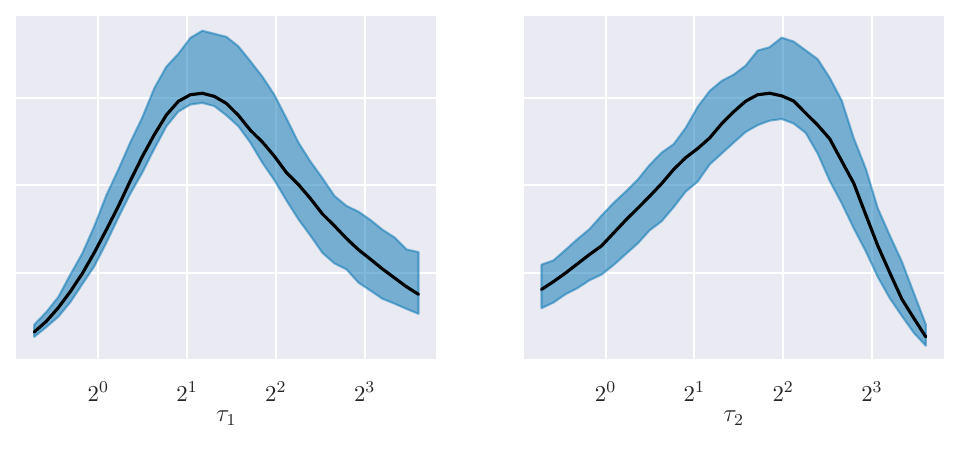}
  \caption{Profiles for $\tau_{1}, \tau_{2}$ (solid black line black, estimated to high precision as in Figure \ref{fig:loggp-gt-variate}) and 75\%-intervals of the functional EMUS estimator over 128 runs.}
  \label{fig:loggp-profs}
\end{figure}


\subsection{Crossed random effect models}
\label{subsec:crossed}

Here we consider an example from \cite{papaspiliopoulos2023scalable} that involves high-dimensional random effects and a binary output with logistic link. We focus on a relatively simple setting as the goal is to assess the performance of functional EMUS as the dimension of $\theta$ gets large. The data structure is as follows: there are two categorical factors, where for simplicity we set both to have the same number of levels $d$. For certain combinations of level $i$ of factor 1 (e.g. a customer) and level $j$ of factor 2 (e.g. a product) we observe a binary $y_{ij}$ (e.g. whether the customer rated the product favorably). We only observe a fraction of these crossings. In our simulation study we miss observations completely at random with probability 0.5. The linear predictor is $c + \alpha_{i} +\beta_{j}$ for $i,j = 1,\ldots,d$, hence each level of each factor is assigned a random effect, with $\alpha_{i} \sim \Nor(0,\tau_{1}^{-1})$ and $\beta_{j} \sim \Nor(0,\tau_{2}^{-1})$, and $c$ is an intercept with a flat prior. In our notation $\theta = (c,\alpha_1,\ldots,\alpha_d,\beta_1,\ldots,\beta_d)$ and $\lambda = (\tau_1,\tau_2)$, where $\tau_{1}, \tau_{2} \sim \Ga(1/2, 1/2)$.  In the simulation study below the data are generated from the model with $\tau_1=\tau_2=1$. We consider the increasing dimensions $d \in \braces{10^2, 10^3, 10^4}$.

We adopt the provably scalable computational framework of \cite{papaspiliopoulos2023scalable} for generalized linear mixed models that combines ideas from collapsed Gibbs sampling for linear mixed models in \cite{crossed} with the marginal sampler of \cite{titsias2018auxiliary}, which was used also in the Gaussian process classification examples. The resultant method is meant to sample from $p(\theta | y,\lambda)$ with an efficiency (e.g. measured by effective sample size) that does not degrade with $d$. 

Our sampling design $\Ls$ consists of a $17 \times 17$ grid with uniform spacing in $(\log \tau_{1}, \log \tau_{2})$ space, which we further resolve to $33 \times 33$ in the evaluation grid $\Le$. Due to the data generating mechanism, larger $d$ are associated with larger $n$, resulting in an increasingly peaked marginal likelihood; this can be seen by plotting the posterior standard deviation of $p(\lambda | y)$ versus $d$ in Figure \ref{fig:xfx-err}. To account for this we contract the range of the grids at rate $1/\sqrt{d}$. At each grid location we generate a single MCMC sample after 8 samples as burn in. The performance of functional EMUS versus $d$ is shown in Figure \ref{fig:xfx-err}. Since the ground truth is not known in this example, it is approximated by a functional EMUS estimate with 128 samples in each location of $\Ls$. The results show increasing precision as $d$ increases. 

\begin{figure}[htbp]
  \centering
  \includegraphics[scale=.66]{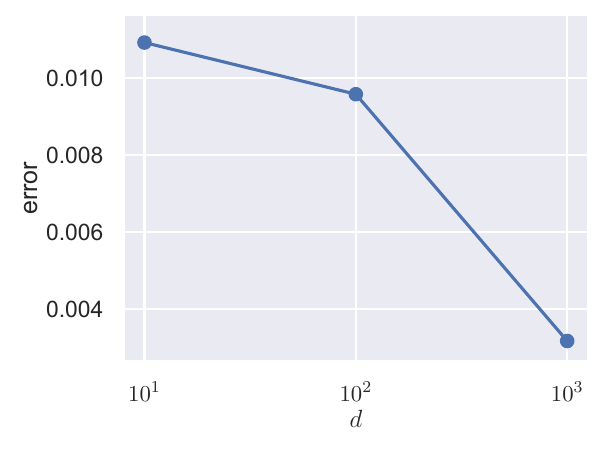}
  \includegraphics[scale=.66]{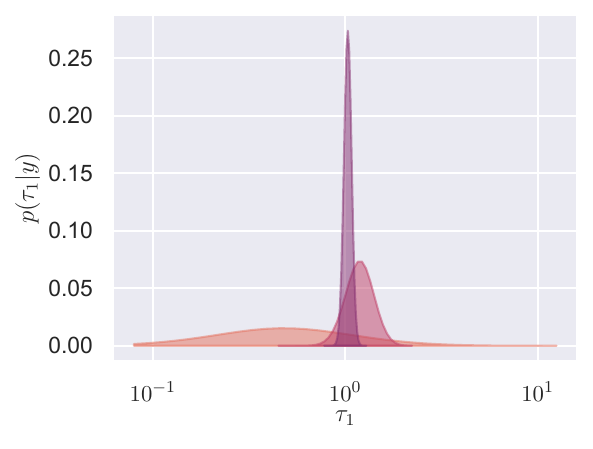}
  \caption{Performance of functional EMUS (left, defined as in Figure \ref{fig:lingp-scale}) and marginals $p(\tau_{1} | y)$ for the three settings of $d$, showing posterior concentration (right).}
  \label{fig:xfx-err}
\end{figure}

\subsection{Optimal Design}
\label{sec:design}

A natural question is how to choose $\Ls$ optimally, and how to distribute sampling effort across the grid points. Thus far we have taken $\Ls$ as given and distributed effort equally. Here and the in the associated Supplement \ref{app:design}, we outline a principled approach for sequential optimal design and show results for the toy Example \ref{ex:toy-ms}. More thorough evaluation of the approach will be considered in future work.

There are two main ingredients to our optimal design approach. First, the extrapolation of functional EMUS estimates beyond the simulation grid, and second, an expression for $\lim_{N\rightarrow \infty} \E[N|\hbu - \bu|_{2}^{2}]$. This was first obtained in \cite{thiede2016eigenvector} and it is given in Supplement \ref{app:design}. We can analytically optimize the expression over the sampling intensities. The resulting formula for the optimal weights can be extended to give a sampling intensity, $w(\lambda)$, for any $\lambda$. The right panel of Figure \ref{fig:toy-design} shows the optimal weights in the context of Example \ref{ex:toy-ms}. For this simple example, the intensities can be computed accurately by quadrature. Notice that the optimal sampling intensity is 0 in the tails of $u(\lambda)$, though it may be large even if $u(\lambda)$ is small, e.g. in the valley in between two modes. 

Our sequential strategy for estimating the optimal weights is as follows: we define a dense evaluation grid; start with a coarse simulation grid to compute a functional EMUS estimate; use the estimate to approximate the optimal intensities on the evaluation grid; allocate new sampling effort according to these intensities and adjust the simulation grid if needed; recompute the functional EMUS estimates using the new samples; and we iterate the above procedure until the available simulation budget is spent. Experimentation has suggested the following two small modifications. One is to stabilize the weights by taking their square root. Another is to include a mechanism that over-samples (relative to the optimized weights) in locations that have been under-sampled in previous iterations, and under-samples accordingly. The resulting formulae are included in Supplement \ref{app:design}. We use the pivotal sampling algorithm \citep{deville1998unequal} for generating the number of samples per grid point according to the sampling probabilities obtained as described above. 

\begin{figure}[htbp]
  \centering
  \includegraphics[scale=.66]{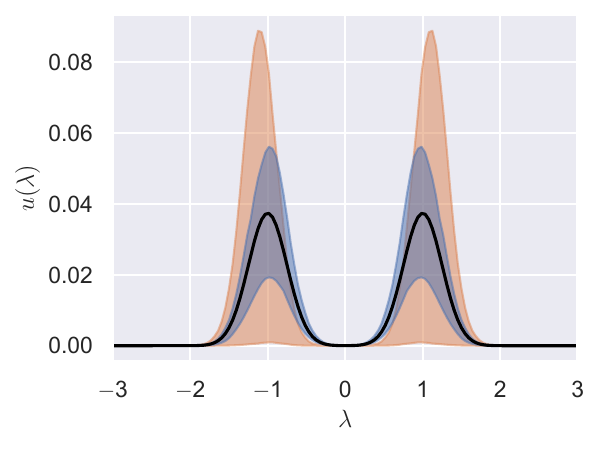}
  \includegraphics[scale=.66]{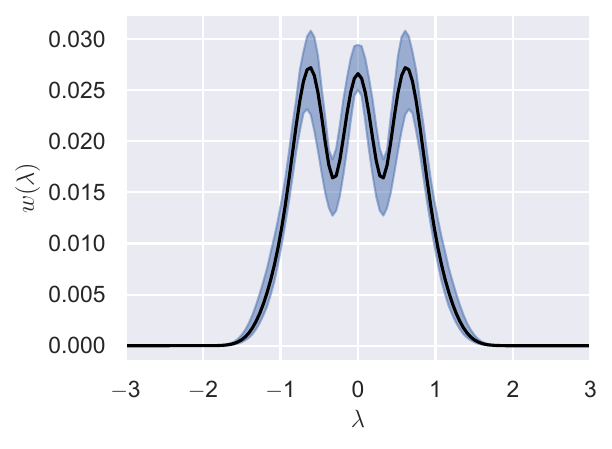}
  \caption{Optimal sampling in the setting of Example \ref{ex:toy-ms}. Left: $u(\lambda)$ (black) and 75\% intervals for the iterative procedure (blue) and sampling on a uniform grid (orange). Right: optimal sampling intensity $w(\lambda)$ (black, as computed accurately by quadrature) and 75\% intervals for the iterative procedure, estimated over 128 runs, half of which are obtained by symmetry. The x-axis has been clipped to $[-3, 3]$, down from $\Lambda = [-6,6]$.}
  \label{fig:toy-design}
\end{figure}

We test this procedure in the setting of Example \ref{ex:toy-ms}, with $\Lambda = [-6,6]$ and $q = \tau = 2^5$. This results in a bimodal $u$, with all mass located within $[-2,2]$, and a region of 0 density around 0. The corresponding optimal weight function, computed accurately through quadrature, is concentrated within $[-1.5,1.5]$, with a mode also at 0. At each of the 8 iterations of the procedure, we choose 8 elements of $\Le$, and generate 8 samples at each of those elements. We compare this to estimates that result from spreading the same total number of samples over 8 uniformly spaced elements of $\Lambda$. The left panel of Figure \ref{fig:toy-design} demonstrates that the iterative procedure can strongly reduce variance relative to the equal allocation of effort over a uniform $\Ls$.

\section{Consistency theory}
\label{sec:theory}

In this section we establish two fundamental consistency properties of the functional estimator. Our theory uses an existing consistency result for EMUS which is stated as Theorem \ref{thm:cons-emus} below, the proof of which can be obtained by the results in the Supplement of \cite{thiede2016eigenvector}.  

\begin{theorem}
    \label{thm:cons-emus}
    Let $\bu$ and $\hbu$ be defined as in Section \ref{ssec:emus-gibbs}, with normalization (SumToL). Under assumptions (GridIrred), (IndSample) and (PosWeight),
    \begin{equation*}
        \lim_{N \to \infty} |\hbu_{\ell} - \bu_{\ell}| = 0 \; (a.s.), \quad \lim_{N \to \infty} \E[|\hbu_{\ell}-\bu_{\ell}|] = 0, \quad \ell = 1, \ldots, L.
    \end{equation*}
\end{theorem}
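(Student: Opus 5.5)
The proof will be a perturbation argument for the leading left eigenvector of a stochastic matrix, combined with the strong law of large numbers and dominated convergence for the entries of $\hbF$.

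\emph{Step 1: entrywise convergence of $\hbF$.} Each entry $\hbF_{i,j}$ is an average of $N_i$ i.i.d. variables bounded in $[0,1]$ with mean $\bF_{i,j}$. By (PosWeight), $N_i \to \infty$ as $N\to\infty$. The strong law of large numbers then gives $\hbF_{i,j}\to\bF_{i,j}$ almost surely. Since $L$ is fixed, this yields $\hbF \to \bF$ almost surely in any matrix norm.

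\emph{Step 2: continuity of the normalized eigenvector.} We use (GridIrred). Because $\bF$ is irreducible and stochastic, Perron--Frobenius says that $1$ is a simple eigenvalue of $\bF\T$. Equivalently, $\bI - \bF\T$ has rank $L-1$, with kernel spanned by $\bu$. Augment $\bI-\bF\T$ with the constraint row $\bone\T \bu = L$ from (SumToL). The resulting linear system
\begin{equation*}
  \bA(\bF)\,\bu = (\boldsymbol{0}, L)\T
\end{equation*}
has a full-column-rank matrix $\bA(\bF)$, and
\begin{equation*}
  \bu = \bA(\bF)^{+}(\boldsymbol{0},L)\T .
\end{equation*}
Full column rank is an open condition, and on the set of full-column-rank matrices the pseudoinverse is a continuous (indeed rational) function of the entries. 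Hence, almost surely for $N$ large, $\hbF$ also has a one-dimensional left fixed space. The EMUS solution $\hbu$ with (SumToL) is then uniquely defined, equals $\bA(\hbF)^{+}(\boldsymbol{0},L)\T$, and converges to $\bu$ almost surely. Alternatively, one can express $\bu$ through the Markov chain tree theorem (cofactors of $\bI-\bF$) and normalize; this gives an explicit continuous map, valid whenever the cofactor sum is nonzero.

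\emph{Step 3: convergence in mean.} Under (SumToL), whenever $\hbu$ is a nonnegative eigenvector, each coordinate satisfies $0\le \hbu_\ell\le L$. The almost sure convergence from Step 2 together with the uniform bound $|\hbu_\ell - \bu_\ell|\le 2L$ then gives $\E|\hbu_\ell-\bu_\ell|\to 0$ by dominated convergence.

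\emph{Main obstacle.} For finite $N$, $\hbF$ may be reducible, since some empirical entries can be zero; the eigenvector may then be non-unique, or not strictly positive. I would handle this by fixing a measurable selection of the EMUS solution, chosen nonnegative and normalized by (SumToL). Such a nonnegative solution always exists by Perron--Frobenius for stochastic matrices. The bound $0\le\hbu_\ell\le L$ therefore holds for every $N$, so dominated convergence applies. For the almost sure statement, non-uniqueness is irrelevant: it occurs only on an event that almost surely fails eventually, because irreducibility of $\hbF$ (positivity of the entries on a spanning path in the support of $\bF$) holds for all large $N$ by Step 1.
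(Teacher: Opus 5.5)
Your proposal is correct and follows essentially the route the paper delegates to the supplement of \cite{thiede2016eigenvector}: almost sure convergence of $\hbF$ by the law of large numbers, continuity of the (SumToL)-normalized stationary vector at the irreducible $\bF$, and the bound $0\le\hbu_{\ell}\le L$ (the same bound the paper uses in Theorem~\ref{thm:fixed-g}) to pass to convergence in mean by dominated convergence. For comparison, the group-inverse identity $\hbu-\bu=\braces{\hbu\T(\hbF-\bF)(\bI-\bF)^{\#}}\T$, which the paper uses in Theorem~\ref{thm:stable-errors}, gives the same conclusion quantitatively: it yields $|\hbu-\bu|_{1}\le L\,\norm{\hbF-\bF}_{\infty}\norm{(\bI-\bF)^{\#}}_{\infty}$ for every nonnegative normalized fixed vector of $\hbF$, so the uniqueness issue you handle by selection never arises.
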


Our intention is not to give the most general statements possible, but to provide theoretical justification for what we have observed in numerical experiments, and convey the intuition behind the consistency properties of the estimator. Some of the assumptions we make below, especially those in Section \ref{sec:dense-g}, are chosen to simplify the derivation and communication of the main results but can likely be relaxed. In view of these considerations, we draw the same number of samples from each local density: 
\begin{equation}
  \label{eq:eqsample}
  (EqSample) \quad N_{\ell} = N/L, \quad \ell = 1, \ldots, L,
\end{equation}
in which case (PosWeight) is trivially satisfied. We will also assume throughout that $u(\lambda)$ and $\psi_{\lambda}(\theta) p(\lambda)$ for all $\theta$, are continuous in $\lambda$. 

In the sequel we establish uniform almost sure and in expectation fixed-grid consistency, where $L$ is fixed and $N\to \infty$, and dense-grid consistency, where $N/L$ is fixed and $L\to \infty$. The precise statements are given in Theorems \ref{thm:fixed-g} and \ref{thm:dense-g} below. A standard argument shows that uniform almost sure convergence implies convergence of maximizers; for completeness we state and prove this in Proposition \ref{prop:const-max} [Supplement]. We have also obtained analogous  consistency theory for the estimator of expectations \eqref{eq:exp-est} but we have not included these results due to space limitations.

\subsection{Fixed-grid consistency}
\label{sec:fixed-g}

The main result in this section is Theorem \ref{thm:fixed-g} below that establishes uniform convergence of the functional estimator for fixed lattice $\Ls$. Its proof is based on the following key lemma.

\begin{lemma}
  \label{lem:kernel}
  Consider the following two integrability conditions: 
  \begin{gather}
    (SupInt) \quad \int \sup_{\lambda \in \Lambda} \psi_{\lambda}(\theta) p(\lambda) \dd \theta < \infty, \\
    (Sup2Int) \quad \int \frac{\sup_{\lambda \in \Lambda} \psi_{\lambda}(\theta)^{2} p(\lambda)^{2}}{\sum_{\ell=1}^{L} \psi_{\lambda_{\ell}}(\theta) p(\lambda_{\ell})} \dd \theta < \infty.
  \end{gather}
  Assuming (SupInt) we have that  
  \begin{equation*}
    \lim_{N\to\infty} \sup_{\lambda \in \Lambda} |\hf(\lambda_{\ell},\lambda) - f(\lambda_{\ell},\lambda)| = 0 \; (a.s.), 
    \quad \ell = 1, \ldots, L,
  \end{equation*}
  whereas assuming (Sup2Int) we have that 
  \begin{equation*}
    \lim_{N\to\infty} \E\bracks*{\sup_{\lambda \in \Lambda} |\hf(\lambda_{\ell},\lambda) - f(\lambda_{\ell},\lambda)|^{2}} = 0, \quad 
    \ell = 1, \ldots, L,.
  \end{equation*}
\end{lemma}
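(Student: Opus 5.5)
The plan is to treat both claims as a uniform law of large numbers for the parametric family of integrands indexed by $\lambda\in\Lambda$, with $\Lambda$ compact. Fix $\ell$ and write
\[
g_\lambda(\theta)=\frac{\psi_{\lambda}(\theta)p(\lambda)}{S(\theta)},\qquad S(\theta)=\sum_{k=1}^{L}\psi_{\lambda_k}(\theta)p(\lambda_k).
\]
Then $\hf(\lambda_\ell,\lambda)$ is the empirical mean of $g_\lambda$ over $N_\ell=N/L$ i.i.d.\ draws from $\pi_{\lambda_\ell}$, and $f(\lambda_\ell,\lambda)$ is its expectation. Set $G(\theta)=\sup_{\lambda\in\Lambda}g_\lambda(\theta)$. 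Since $\lambda\mapsto g_\lambda(\theta)$ is continuous, $G$ equals the supremum over a countable dense subset of $\Lambda$. Hence $G$ is measurable, and so is $\sup_\lambda|\hf-f|$, provided $f$ is continuous, which is checked below.

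The first step is to get envelope bounds from the two integrability conditions. The key observation is that $\pi_{\lambda_\ell}(\theta)=\psi_{\lambda_\ell}(\theta)p(\lambda_\ell)/u(\lambda_\ell)$ up to the normalisation convention, and that $\psi_{\lambda_\ell}p(\lambda_\ell)/S\le 1$. Under (SupInt) this gives
\[
\E_{\pi_{\lambda_\ell}}[G]\le u(\lambda_\ell)^{-1}\int \sup_\lambda \psi_\lambda(\theta)p(\lambda)\,\dd\theta<\infty .
\]
Under (Sup2Int), the same cancellation of one factor of $S$ gives
\[
\E_{\pi_{\lambda_\ell}}[G^2]\le u(\lambda_\ell)^{-1}\int \frac{\sup_\lambda \psi_\lambda(\theta)^2p(\lambda)^2}{S(\theta)}\,\dd\theta<\infty .
\]
One also wants (Sup2Int) to imply (SupInt), or at least $\E[G]<\infty$, which follows from $\E[G]\le \E[G^2]^{1/2}$.

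For the almost sure claim I would run the classical Jennrich/Wald argument. Dominated convergence with envelope $G$ shows that $f(\lambda_\ell,\cdot)$ is continuous. For $\lambda\in\Lambda$ and $\delta>0$, define the local oscillation
\[
m_\delta(\theta;\lambda)=\sup_{|\lambda'-\lambda|<\delta}|g_{\lambda'}(\theta)-g_\lambda(\theta)|\le 2G(\theta).
\]
By continuity, $m_\delta\to0$ pointwise as $\delta\to0$, so $\E[m_\delta]\to0$ by dominated convergence. Given $\varepsilon>0$, compactness gives finitely many balls $B(\lambda^{(k)},\delta_k)$ covering $\Lambda$ with $\E[m_{\delta_k}(\cdot;\lambda^{(k)})]<\varepsilon$. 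On each ball, $|\hf-f|$ is bounded by three terms: $|\hf(\lambda_\ell,\lambda^{(k)})-f(\lambda_\ell,\lambda^{(k)})|$, the empirical mean of $m_{\delta_k}$, and $\E[m_{\delta_k}]$. The strong law, applied to finitely many integrable functions, then gives $\limsup\sup_\lambda|\hf-f|\le 2\varepsilon$ almost surely. Since $\varepsilon$ is arbitrary, this proves the first claim.

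For the $L^2$ claim I would combine the almost sure convergence with uniform integrability of $\sup_\lambda|\hf-f|^2$. Pointwise,
\[
\sup_\lambda|\hf-f|^2\le 2\Bigl(\tfrac{1}{N_\ell}\sum_n G(\theta^{(n)})\Bigr)^2+2\,\E[G]^2\le \tfrac{2}{N_\ell}\sum_n G(\theta^{(n)})^2+2\,\E[G]^2 ,
\]
where the second inequality is Jensen. The sample means of the i.i.d.\ integrable variables $G(\theta^{(n)})^2$ form a uniformly integrable family; they converge in $L^1$ by the strong law together with Scheffé's lemma, or alternatively by the reverse-martingale argument. Hence the dominated quantities are uniformly integrable, and Vitali's theorem upgrades almost sure convergence to convergence of $\E[\sup_\lambda|\hf-f|^2]$ to zero. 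The main obstacle is this uniform integrability step, together with the measurability and compactness bookkeeping for the supremum. The rest is standard.
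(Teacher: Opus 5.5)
Your proof is correct. It reaches the result by a partly different route from the paper's.

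\textbf{Almost sure part.} The paper does not use a covering argument. It treats $\lambda\mapsto g_\lambda(\theta)-f(\lambda_\ell,\lambda)$ as a mean-zero random element of the separable Banach space of continuous functions on $\Lambda$ with the sup norm. It then applies the strong law of large numbers in that space (Theorem~\ref{thm:ulln}). The only thing left to check is the envelope bound $\E[\sup_\lambda g_\lambda]\le u(\lambda_\ell)^{-1}\int\sup_\lambda\psi_\lambda(\theta)p(\lambda)\,\dd\theta<\infty$, which is exactly your first estimate. Your Jennrich/Wald argument (local oscillation $m_\delta$, dominated convergence, a finite subcover, and the scalar strong law for finitely many functions) effectively proves that Banach-space law by hand in this setting. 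It is longer but self-contained. It also makes explicit two points the paper leaves implicit: the continuity of $f(\lambda_\ell,\cdot)$ and the measurability of the supremum.

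\textbf{Mean-square part.} The paper proceeds in four steps:
\begin{enumerate}
\item it bounds $\Var[\sup_\lambda|\hf-f|]$ by a multiple of $\E[G^2]/N_\ell$;
\item it deduces uniformly bounded second moments, hence uniform integrability of $\sup_\lambda|\hf-f|$;
\item it upgrades almost sure convergence to convergence in mean;
\item it combines vanishing mean and vanishing variance.
\end{enumerate}
You instead dominate $\sup_\lambda|\hf-f|^2$ by $2N_\ell^{-1}\sum_n G(\theta^{(n)})^2+2\E[G]^2$. The sample means of the integrable variables $G^2$ are uniformly integrable, so a single application of Vitali finishes the argument. This avoids the variance inequality for the supremum of an empirical process, which does not follow from independence alone. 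It needs something like the Efron--Stein inequality, and then holds with $\E[G^2]$ rather than $\Var[G]$. Your argument is therefore shorter and cleaner. What the paper's route gives in addition is an explicit $O(1/N_\ell)$ bound on the fluctuations of the supremum around its mean.
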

(Sup2Int) is stronger than (SupInt) and relates to \eqref{eq:2Int}, which was discussed in Section \ref{sec:nystrom} in relation to $\Var[\hf(\lambda_{i}, \lambda)] < \infty$.

The proof of the lemma appeals to the strong law of large numbers on a Banach space and it is included in the Supplement, where it is also shown that, for many examples in the important class of Gaussian process priors, the assumptions (SupInt) and (Sup2Int) are satisfied.

\begin{theorem}
  \label{thm:fixed-g}
  For $u(\cdot)$ and $\hu(\cdot)$ defined in \eqref{eq:repr} and \eqref{eq:femus} respectively, and assuming (SupInt), (SumToL), (GridIrred), (IndSample), (EqSample)
  \begin{equation*}
    \lim_{N\to\infty} \sup_{\lambda \in \Lambda} |\hu(\lambda) - u(\lambda)| = 0 \; (a.s.),
  \end{equation*}
  In addition, under (Sup2Int),
  \begin{equation*}
    \lim_{N\to\infty} \E\bracks*{\sup_{\lambda \in \Lambda} |\hu(\lambda) - u(\lambda)|} = 0.
  \end{equation*}
  \begin{proof}
    Notice first that
    \begin{equation*}
      \sup_{\lambda} |\hu(\lambda) - u(\lambda)|  \leq \sum_{\ell=1}^{L} \sup_{\lambda} | \hbu_{\ell} \hf(\lambda_{\ell},\lambda) - \bu_{\ell} f(\lambda_{\ell},\lambda)|.
    \end{equation*}
    Since $L$ is fixed, we now focus on each of the terms in the sum. Then, by adding and subtracting a term, and exploiting that $\hbu_{\ell} \leq L$ due to (SumToL), we get
    \begin{equation*}
      \sup_{\lambda} | \hbu_{\ell} \hf(\lambda_{\ell},\lambda) -  \bu_{\ell} f(\lambda_{\ell},\lambda)| \leq |\hbu_{\ell} - \bu_{\ell}| \sup_{\lambda} f(\lambda_{\ell},\lambda) + L \sup_{\lambda} |\hf(\lambda_{\ell},\lambda) - f(\lambda_{\ell},\lambda)|. 
    \end{equation*}
    We establish convergence (both almost surely and in L1) of the first term by Theorem \ref{thm:cons-emus} and of the second by Lemma \ref{lem:kernel}. 
  \end{proof}
\end{theorem}

\subsection{Dense-grid consistency}
\label{sec:dense-g}

In this section we consider a different asymptotic regime where the amount of Monte Carlo sampling per grid is kept fixed but the grid $\Ls$ becomes dense in $\Lambda$. In view of the variable grid, we replace (GridIrred) with an assumption on the continuous Markov chain:
\begin{equation}
  \label{eq:irred}
  (Irred) \quad \text{The Markov chain with transition kernel \eqref{eq:infemus} is irreducible in $\Lambda$.}
\end{equation}
In conjunction with (SupInt) and by way of Lemma \ref{lem:gapexist} [Supplement], (Irred) implies that the continuous chain has positive spectral gap, which is then inherited by the projected chain under some smoothness conditions, as seen in Lemma \ref{lem:convgap} [Supplement].

To simplify notation and exposition and focus on the key reasons the result we will establish holds, we apply the following simplifying conventions throughout the section:
\begin{equation}
  \label{eq:reggrid}
  (1dRegGrid) \quad 
  \Lambda = [0,1], \quad 
  \Ls = \braces{\ell/L}_{\ell=1}^{L}.
\end{equation}
Notably, in order to obtain sensible limits in this context, the $(SumToL)$-normalization is critical. The normalization is inherited by $u(\lambda)$ and $\hu(\lambda)$ through \eqref{eq:repr} and \eqref{eq:femus}. Since in this section $L$ will be taken to its limit $L\to \infty$, it is important to recognize the dependence of $u(\lambda)$ on $L$, although the notation does not make this explicit. In fact, due to Lemma \ref{lem:cvgdens} [Supplement],
\begin{equation}
  \label{eq:l-marg}
  \lim_{L \to \infty} u(\lambda) =   \pi(\lambda) = \frac{z(\lambda) p(\lambda)}{\int z(\lambda') p(\lambda') \dd \lambda'},
\end{equation}
where in keeping with the notation in \eqref{eq:main-q-2}, $\pi(\lambda)$ is the marginal density of $\lambda$.

The main result in this section is Theorem \ref{thm:dense-g}, which establishes uniform convergence. A key stepping stone is that the norm of EMUS errors does not explode, in spite of the increasing number of elements in the EMUS vector, which we prove below. An essential ingredient to the proof is the imposition of stronger smoothness assumptions on the model, beyond the mere continuity required in the fixed-grid case. This smoothness effectively allows the method to borrow information from samples at nearby grid points when learning the function at a given grid point. 

\begin{theorem}
  \label{thm:stable-errors}
  We assume (SumToL), (1dRegGrid), (IndSample), (Irred), (SupInt). Moreover, we assume
  \begin{equation}
    \label{eq:lipint}
    (LipInt) \quad
    \int \sup_{\lambda', \lambda''} \abs*{\frac{\psi_{\lambda'}(\theta) p(\lambda') - \psi_{\lambda''}(\theta) p(\lambda'')}{\lambda' - \lambda''}} \dd{\theta} < \infty,
  \end{equation}
  and 
  \begin{equation}
    \label{eq:cond}
    (Cond) \;
    \sup_{\lambda, \lambda'} \int \pi_{\theta}(\lambda')^{8} \pi_{\lambda}(\theta) \dd{\theta} < \infty, \quad 
    \sup_{\lambda, \lambda', \lambda''} \int \abs*{\frac{\pi_{\theta}(\lambda') - \pi_{\theta}(\lambda'')}{\lambda' - \lambda''}}^{8} \pi_{\lambda}(\theta) \dd{\theta} < \infty.
  \end{equation}
  Then,
  \begin{equation*}
    \lim_{L\to\infty} \frac{|\hbu - \bu|_{2}}{\sqrt{L}} = 0 \; (a.s.), \quad
    \limsup_{L\to\infty} \E[|\hbu - \bu|_{2}^{2}] < \infty.
  \end{equation*}
  \begin{proof}
    First, note that \eqref{eq:lipint} implies Lipschitz-continuity of $u(\lambda)$ by Jensen's inequality.
    The starting point of the proof is the perturbation identity
    \begin{equation*}
       \hbu-\bu = \braces{\hbu\T (\hbF-\bF)(\bI-\bF)^{\#}}\T,
    \end{equation*}
    where the group inverse $\bB^{\#}$ of a matrix $\bB$ is the unique matrix for which $\bB \bB^{\#} \bB = \bB$, $\bB^{\#} \bB \bB^{\#} = \bB^{\#}$ and $\bB \bB^{\#} = \bB^{\#} \bB$. See \cite{golub1986using} for further discussion of the group inverse in the context of Markov chain perturbation theory. In this perturbation identity, the random contribution $\hbu\T (\hbF-\bF)$ is multiplied by the deterministic group inverse $(\bI-\bF)^{\#}$, which suggests a strategy of analyzing the terms separately, thereby containing any randomness in the analysis of the first term.
    
    The key step towards analyzing the random part consists of substituting $\hbu = \hbF\T \hbu$, so that
    \begin{equation*}
      \hbu-\bu = \bA \hbu, \quad \bA = \braces{\hbF(\hbF-\bF)(\bI-\bF)^{\#}}\T,
    \end{equation*}
    and, by subadditivity and submultiplicity of norms,
    \begin{equation*}
        |\hbu-\bu|_{2} \leq |\bA|_{2} |\hbu|_{2}.
    \end{equation*}
    Moreover, by submultiplicity, $|\bA|_{2} \leq |\hbF(\hbF-\bF)|_{2} |(\bI-\bF)^{\#}|_{2}$. For the first term, by Lemma \ref{lem:randopnorm} [Supplement], $|\hbF(\hbF-\bF)|_{2} \xrightarrow{a.s.} 0$ as well as $\limsup_{L \to \infty} L^{2} \E[|\hbF(\hbF-\bF)|_{2}^{4}] < \infty$. Notice that while $|\hbF-\bF|_{2}$ does \emph{not} vanish in the $L$-limit, $|\bb\T (\hbF-\bF)|_{2}$ \emph{does} vanish for appropriately smooth vectors $\bb$, due to $\hbF-\bF$ having independent rows whose entries vary smoothly and are 0 in expectation. Introduction of the additional factor of  $\hbF$, allows us to exploit this smoothness property, which is a consequence of (Cond). For the second term, by Lemma \ref{lem:convgap} [Supplement], $\limsup_{L \to \infty} |(\bI-\bF)^{\#}|_{2} < \infty$. Here, we exploit reversibility of $\bF$ to bound $(\bI-\bF)^{\#}$ in terms of its spectral gap, which itself can be controlled in the $L$-limit. We again leverage (Cond) to ensure that the spectral gap remains stable as $L$ grows, while (Irred) and (SupInt) guarantee the existence of the limiting spectral gap. In conjunction,
    \begin{equation*}
        |\bA|_{2} \xrightarrow{a.s.} 0, \quad
        \limsup_{L \to \infty} L^{2} \E[|\bA|_{2}^{4}] < \infty.
    \end{equation*}
    With $|\bA|_{2}$ controlled, we can proceed with proving the claim. Beginning with the almost sure statement, by the triangle inequality,
    \begin{equation}
      \label{eq:errident}
      |\hbu - \bu|_{2}
      \leq |\hbu|_{2} |\bA|_{2}
      \leq (|\hbu - \bu|_{2} + |\bu|_{2}) |\bA|_{2},
    \end{equation}
    so $|\hbu - \bu|_{2} (1 - |\bA|_{2}) \leq |\bu|_{2} |\bA|_{2}$. 
    Since $(1 - |\bA|_{2}) \xrightarrow{a.s.} 1$ and $\limsup_{L \to \infty} L^{-1/2} |\bu|_{2} \leq \lim_{L \to \infty} \sn{\bu} = \sn{\pi}$ by Lemma \ref{lem:cvgdens} [Supplement],
    \begin{equation*}
      \limsup_{L \to \infty} L^{-1/2} |\hbu - \bu|_{2}
      \leq \limsup_{L \to \infty} L^{-1/2} |\bu|_{2} \lim_{L \to \infty} |\bA|_{2} = 0 \; (a.s.),
    \end{equation*}
    With the limit supremum being almost surely equal to the lower bound, $L^{-1/2} |\hbu - \bu|_{2} \xrightarrow{a.s.} 0$.
    Proceeding with convergence in expectation, we note that \eqref{eq:errident} and subadditivity and submultiplicity of norms implies 
    \begin{equation*}
      |\hbu - \bu|_{2}
      \leq |\bu|_{2} |\bA|_{2} + |\hbu|_{2} |\bA|_{2}^{2}
      \leq \sn{\bu} \sqrt{L} |\bA|_{2} + L |\bA|_{2}^{2},
    \end{equation*}
    where we applied the bounds $|\bu|_{2} \leq \sqrt{L} \sn{\bu}$ and $|\hbu|_{2} \leq |\hbu|_{1} = L$. We now raise the power on the LHS by applying Young's inequality, yielding
    \begin{equation*}
      |\hbu - \bu|_{2}^{2}
      \leq 2 \sn{\bu}^{2} L |\bA|_{2}^{2} + 2 L^{2} |\bA|_{2}^{4}.
    \end{equation*}
    Recalling that $\sn{\bu}^{2} \to \sn{\pi}^{2}$ by Lemma \ref{lem:cvgdens} [Supplement],
    \begin{equation*}
        \limsup_{L \to \infty} \E |\hbu - \bu|_{2}^{2}
         \leq 2 \lim_{L \to \infty} \sn{\bu}^{2} \limsup_{L \to \infty} L \E |\bA|_{2}^{2}      
         + 2 \limsup_{L \to \infty} L^{2} \E |\bA|_{2}^{4} 
         < \infty.
    \end{equation*}
  \end{proof}
\end{theorem}

On the basis of Theorem~\ref{thm:stable-errors}, it is apparent that $|\hu(\lambda) - u(\lambda)|$ vanishes for any given $\lambda$, when considering a subsequence of sampling grids that include $\lambda$. We directly proceed to developing an analogue to Theorem \ref{thm:fixed-g} for the dense grid regime in order to prove uniform convergence. As in Theorem \ref{thm:fixed-g}, we adapt generic uniform laws of large numbers to ensure that functional EMUS converges on the gaps within the grid. We state the generic ULLN in Theorem \ref{thm:ullnlip} [Supplement].

Some remarks about the additional assumptions in  Theorem \ref{thm:dense-g} below. First, $(SupCond)$ implies $(Cond)$ from Theorem \ref{thm:stable-errors} due to  Jensen's inequality.  Analogously $(Sup8Int)$ and $(Lip8Int)$ imply $(SupInt)$ and $(LipInt)$ respectively. If $\lambda \mapsto \psi_{\lambda}(\theta) p(\lambda)$ is continuously differentiable for every $\theta$, in order for (Sup8Int) and (Lip8Int) to hold it is sufficient that there exists some $\delta > 0$ such that 
\begin{equation*}
    \int \sup_{\lambda, \lambda' \in [\lambda \pm \delta]} \braces{(\psi_{\lambda}(\theta) p(\lambda))^{8} / (\psi_{\lambda'}(\theta) p(\lambda'))^{7}} \dd{\theta} < \infty
\end{equation*}
and
\begin{equation*}
    \int \sup_{\lambda, \lambda' \in [\lambda \pm \delta]} \braces{|\partial_{\lambda} \psi_{\lambda}(\theta) p(\lambda)|^{8} / (\psi_{\lambda'}(\theta) p(\lambda'))^{7}} \dd{\theta} < \infty.
\end{equation*}
Notice that the result establishes uniform convergence of the functional estimator to the marginal density $\pi(\lambda)$. 

\begin{theorem}
  \label{thm:dense-g}
  We assume (SumToL), (1dRegGrid), (IndSample), (Irred) and additionally
  \begin{equation}
      (SupCond) \quad
      \sup_{\lambda} \int \sn{\pi_{\theta}}^{8} \pi_{\lambda}(\theta) \dd{\theta} < \infty, \quad 
      \sup_{\lambda} \int \lip{\pi_{\theta}}^{8} \pi_{\lambda}(\theta) \dd{\theta} < \infty,
  \end{equation}
  and
  \begin{gather}
    (Sup8Int) \quad
    \limsup_{L \to \infty} \int \frac{\sn{\lambda \mapsto \psi_{\lambda}(\theta) p(\lambda)}^{8}}{\parens*{\sum_{\ell=1}^{L} \psi_{\lambda_{\ell}}(\theta) p(\lambda_{\ell})}^{7}} \dd \theta < \infty, \\
    (Lip8Int) \quad 
    \limsup_{L \to \infty} \int \frac{\lip{\lambda \mapsto \psi_{\lambda}(\theta) \pi(\lambda)}^{8}}{\parens*{\sum_{\ell=1}^{L} \psi_{\lambda_{\ell}}(\theta) p(\lambda_{\ell})}^{7}} \dd \theta < \infty.
  \end{gather}
  Then,
  \begin{equation*}
      \lim_{L\to\infty} \sup_{\lambda \in \Lambda} |\hu(\lambda) - \pi(\lambda)| = 0 \; (a.s.), \quad 
      \lim_{L\to\infty} \E\bracks*{\sup_{\lambda \in \Lambda} |\hu(\lambda) - \pi(\lambda)|} = 0.
  \end{equation*}
  \begin{proof}
    We first notice that by the triangle inequality
    \begin{equation*}
      \sup_{\lambda} |\hu(\lambda) - \pi(\lambda)| \leq \sup_{\lambda} |u(\lambda) - \pi(\lambda)| + \sup_{\lambda} |\hu(\lambda) - u(\lambda)|,
    \end{equation*}
    where the first term vanishes deterministically by Lemma \ref{lem:cvgdensunif} [Supplement]. As for the random term, by the triangle inequality and Cauchy-Schwarz,
    \begin{equation*}
      \begin{aligned}
        \sup_{\lambda} |\hu(\lambda) - u(\lambda)|
        & = \sup_{\lambda} \abs*{\sum_{\ell=1}^{L} \hu(\lambda_{\ell}) \hf(\lambda_{\ell}, \lambda) - u(\lambda)} \\
        & \leq \sup_{\lambda} \abs*{\sum_{\ell=1}^{L} (\hu(\lambda_{\ell}) - u(\lambda_{\ell})) \hf(\lambda_{\ell}, \lambda)} + \sup_{\lambda} \abs*{\sum_{\ell=1}^{L} u(\lambda_{\ell}) \hf(\lambda_{\ell}, \lambda)  - u(\lambda)} \\
        & \leq \sqrt{\frac{|\hbu - \bu|_{2}^{2}}{L} \times \sup_{\lambda} L \sum_{\ell=1}^{L} \hf(\lambda_{\ell}, \lambda)^{2}} + \sup_{\lambda} \abs*{\sum_{\ell=1}^{L} u(\lambda_{\ell}) \hf(\lambda_{\ell}, \lambda)  - u(\lambda)}.
      \end{aligned}
    \end{equation*}
    Since $|\hbu - \bu|_{2}^{2} / L \xrightarrow{a.s.} 0$ by Theorem \ref{thm:stable-errors}, to complete the the proof we need to demonstrate uniform convergence of $L \sum_{\ell=1}^{L} \hf(\lambda_{\ell}, \lambda)^{2}$ to a constant, and of $\sum_{\ell=1}^{L} u(\lambda_{\ell}) \hf(\lambda_{\ell}, \lambda) - u(\lambda)$ to 0. Noting that both sums are over independent terms, this is accomplished by application of a uniform law of large numbers within Lemma \ref{lem:ullnapp}  [Supplement]. The conditions for applying these ULLNs are fulfilled by way of (SupCond) as well as (Sup8Int) and (Lip8Int). 
    As for convergence in $L^{1}$, by Cauchy-Schwarz,
    \begin{equation*}
      \begin{aligned}
        \E \bracks*{\sup_{\lambda} |\hu(\lambda) - u(\lambda)|}
        & \leq \sqrt{\E \bracks*{\frac{|\hbu - \bu|_{2}^{2}}{L}} \E \bracks*{\sup_{\lambda} L \sum_{\ell=1}^{L} \hf(\lambda_{\ell}, \lambda)^{2}}} \\
        & \quad + \E \bracks*{\sup_{\lambda} \abs*{\sum_{\ell=1}^{L} u(\lambda_{\ell}) \hf(\lambda_{\ell}, \lambda) - u(\lambda)}},
      \end{aligned} 
    \end{equation*}
    where once more $\E \bracks{L^{-1} |\hbu - \bu|_{2}^{2}} \to 0$ by Theorem \ref{thm:stable-errors}, and $\E \bracks{\sup_{\lambda} L \sum_{\ell=1}^{L} \hf(\lambda_{\ell}, \lambda)^{2}} \to \mathrm{const}$ as well as $\E \bracks{\sup_{\lambda} \abs{\sum_{\ell=1}^{L} u(\lambda_{\ell}) \hf(\lambda_{\ell}, \lambda) - u(\lambda)}} \to 0$ by Lemma \ref{lem:ullnapp} [Supplement]. This verifies uniform convergence in $L^{1}$.
  \end{proof}
\end{theorem}

\section{Discussion}
\label{sec:disc}

We have developed a method for functional estimation of the marginal likelihood that takes as input samples generated with different values of the hyperparameters on a simulation grid. We have established the consistency of the functional estimator as the sampling effort increases, either by increasing the intensity on a given grid, or by increasing the resolution of the grid. Our analysis is based on a number of semi-non-asymptotic results that could also be used to provide rates of convergence. In the dense-grid case, due to our proof strategy the implied convergence rates are slow when the spectral gap of the Gibbs sampler is small, even though the actual rates can be fast in such regimes. Therefore, a full analysis of convergence rates requires further refinements. In terms of practical implementation, we recommend spreading the simulation effort on finer grids when this is possible, which has been illustrated in simulations. Additionally, we have sketched out an iterative strategy to design the simulation grid optimally. Its practical use relies on reasonable approximations to certain moments, which we leave to future work. Moreover, on finer grids, fewer samples are generated per grid point, which raises some issues when MCMC is used for sampling, and may require a certain amount of interaction between the sampling at different sites. In this work, we have assumed that the sampler used for each hyperparameter value is an independent black box. In such a setting, we can at least use ``warm starts'', where samples obtained after burn-in at one site can be used as starting values for a nearby one. Another common approach to accelerating convergence is to use replica exchange techniques as in \cite{emus_for_cosmo}.

\section*{Acknowledgements}

Timothee Stumpf-Fetizon has received funding from the European Research Council (ERC, PrSc-HDBayLe, Grant agreement No. 101076564).

\bibliographystyle{chicago}
\bibliography{biblio}

\appendix

\section{Appendix on toy examples}
\label{app:toy}

\begin{example}
  \label{ex:toy-ms}
  We consider the toy model  $y | \theta \sim 0.5 \Nor(\theta, q^{-1}) + 0.5 \Nor(-\theta, q^{-1})$ and $\theta | \lambda \sim \Nor(\lambda, \tau^{-1})$, with a flat prior $p(\lambda) \propto 1$. We take  $y = 1$; then for sufficiently large $q$ ($q = 64$ in this example), $u(\cdot)$ is bimodal with modes around $\pm 1$. As $\tau$ increases the modes become more separated and the posterior dependence between $\theta$ and $\lambda$ increases, hence the spectral gap in $\bF$ decreases.
\end{example}

We apply EMUS and Griddy Gibbs for the model in Example \ref{ex:toy-ms} and we set $\Ls=16$ equidistant points on the interval $[-2,2]$. Figure \ref{fig:toy-ms} shows
\begin{equation}
  \label{eq:L1}
  L^{-1} \E\bracks*{\sum_{\ell=1}^{L} |\hbu_{\ell} - \bu_{\ell}|},
\end{equation}
where the expectation is estimated over 128 runs of EMUS and the Gibbs sampler, with $N_{\ell} = 16$, for all $\ll$,  for EMUS and $N_{1} L$ iterations for the Gibbs sampler. In this example we can compute $\bu$ exactly. As $\tau$ increases from 1 to 10, the separation between the two modes in the marginal likelihood increases to the point that the Griddy Gibbs sampler cannot mix at all between the two modes. In that regime the EMUS errors are small, although for $\tau > 100$ the EMUS estimates also miss one of the two modes. As $\tau$ increases further, the dependence between $\lambda$ and $\theta$ gets stronger and stronger and this eventually affects EMUS errors, albeit to a lesser extent than those of the Griddy Gibbs sampler. 
 
\begin{figure}[htbp]
  \centering
  \includegraphics[scale=.66]{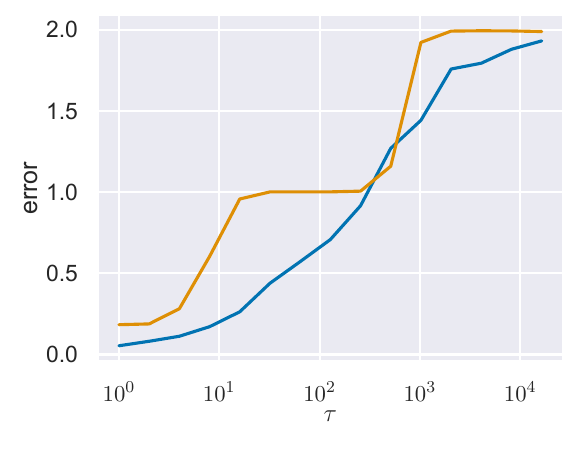}
  \caption{Complexity parameter $\tau$ versus approximation error as defined in \eqref{eq:L1} with EMUS (blue) and Griddy Gibbs (orange), for the model in Example \label{fig:toy-ms}.}
\end{figure}

\begin{example}
  Consider a space for $\theta$ that only takes 5 distinct values, $\theta_{1},\ldots,\theta_{5}$, and $L=2$ with
  \begin{gather*}
    \text{$\psi_{\lambda_{1}}(\theta_{i}) > 0$ for $i=1,2,3$ and $\psi_{\lambda_{1}}(\theta_{i}) = 0$ for $i=4,5$}, \\
    \text{$\psi_{\lambda_{2}}(\theta_{i}) = 0$ for $i=1,2$ and $\psi_{\lambda_{2}}(\theta_{i}) > 0$ for $i=3,4,5$}.
  \end{gather*}
  Let the marginal likelihood in this discrete setting be defined by convention as
  \begin{equation*}
    z(\lambda_{\ell}) = c \sum_{i=1}^5 \psi_{\lambda_{\ell}}(\theta_{i}),\quad \ell=1,2
  \end{equation*}
  where $c>0$ is an arbitrary normalization constant, and let $p(\lambda) \propto 1$.
  \label{ex:toy-is}
\end{example}
Within the setting of Example \ref{ex:toy-is}, we can define $\bF$ via \eqref{eq:kernel-discr}, but as an expectation with respect to a discrete distribution. On the other hand,  in the context of Example \ref{ex:toy-is}, the adaptation of the importance sampling identity \eqref{eq:is-id-grid} as an expectation with respect to a discrete distribution is not well defined since the probability mass functions do not have the same support; analogously, an estimator derived from  \eqref{eq:is-id-grid} would not have finite mean. A direct calculation yields that $\bu_{i}/\bu_{j} = \bF_{j,i}/\bF_{i,j}$ for $i,j =1,2$. Replacing $\bF$ by $\hbF$ we get that  $\hbu_{i}/\hbu_{j} = \hbF_{j,i}/\hbF_{i,j}$ for $i,j =1,2$. While it is possible for the denominator to vanish (if no samples from $\pi_{\lambda_{i}}$ equal $\theta_3$), this occurs with exponentially low probability as the number of samples increases.

\section{Appendix on Computation of EMUS}
\label{app:compemus}

Since the EMUS estimator $\hbu$ is defined in \eqref{eq:emus} as the leading eigenvector of $\hbF\T$, it could be obtained in principle by way of an eigen-decomposition. This would be sub-optimal in terms of numerical precision as it tries to compute the leading eigenvalue of $\hbF\T$, even though it is known to be 1. Instead, we follow \cite{golub1986using} in obtaining the vector of stationary probabilities by way of a QR-decomposition of the stochastic matrix. If $\bA = \bQ\bR$ is a QR-decomposition of $\bA = \bI -\hbF$, then $\hbu$ corresponds to the last column of $\bQ$, up to a multiplicative constant. Moreover, we follow \cite{DinnerThiede2020emus2} recommendation of further polishing this estimate by putting it through a few iterations of the power method.

\section{Appendix on Optimal Design}
\label{app:design}

In what follows, the boldface symbol $\arr{a}^{e}$ refers to the vector of evaluations of the function $a: \Lambda \to \Re$ on $\Le$. We solve the problem of minimizing $\lim_{N\to \infty} \E[N|\hbu^{e} - \bu^{e}|_{2}^{2}]$ as a function of the sampling weights $\bw_{m}^{e}$, subject to $\sum_{m = 1}^{M} \bw_{m}^{e} = 1$. The optimal solution is
\begin{equation}
  \label{eq:wopt}
  \bw_{m}^{e} \propto \bu_{m}^{e} \sqrt{\tr\braces*{(\bI^{e} - {\bF^{e}}\T)^{\#} \arr{\Xi}_{m}^{e} (\bI^{e} - \bF^{e})^{\#}}}, \quad m = 1, \dots, M,
\end{equation}
where $\bF^{e}$ and $\bu^{e}$ are defined as in \eqref{eq:kernel-discr} and \eqref{eq:bfu}, but with $\lambda_{i},\lambda_{j} \in \Le$, and 
\begin{equation*}
    \braces{\arr{\Xi}_{i}^{e}}_{j,k} + \bF_{i,j}^{e} \bF_{i,k}^{e} = \int \frac{\psi_{\lambda_{j}}(\theta) p(\lambda_{j}) \psi_{\lambda_k}(\theta) p(\lambda_k)}{\parens*{\sum_{\lambda_{m} \in \Lambda^{e}} \psi_{\lambda_{m}}(\theta) p(\lambda_{m})}^{2}} \pi_{\lambda_{i}}(\theta) \dd \theta, \quad \lambda_{i}, \lambda_{j}, \lambda_{k} \in \Lambda^{e}.
\end{equation*}
While these integrals could be estimated by sampling from $\braces{\pi_{\lambda_{m}}: \lambda_{m} \in \Le}$ exhaustively, Proposition \ref{prop:c-e} allows the integrals to be estimated with samples from $\Ls \subset \Le$ alone. For example, in the notation of Proposition \ref{prop:c-e}, setting $\phi(\theta) = \psi_{\lambda_{j}}(\theta) p(\lambda_{j}) / (\sum_{\lambda_{m} \in \Lambda^{e}} \psi_{\lambda_{m}}(\theta) p(\lambda_{m}))$, and noting that $\pi_{\lambda_{i}}(\phi) = \bF^{e}_{i,j}$, the proposition yields
\begin{equation*}
  \bF^{e}_{i,j} = \sum_{\lambda_{k} \in \Lambda^{s}} \frac{\bu_{k}}{\bu_{i}} \int \frac{\psi_{\lambda_{j}}(\theta) p(\lambda_{j})}{\sum_{\lambda_{m} \in \Lambda^{e}} \psi_{\lambda_{m}}(\theta) p(\lambda_{m})}  \frac{\psi_{\lambda_{i}}(\theta) p(\lambda_{i})}{\sum_{\lambda_{\ell} \in \Lambda^{s}} \psi_{\lambda_{\ell}}(\theta) p(\lambda_{\ell})} \pi_{\lambda_{k}}(\theta) \dd{\theta}, \quad \lambda_{i}, \lambda_{j} \in \Lambda^{e}.
\end{equation*}
A similar expression holds for the cross-moments $\braces{\arr{\Xi}_{i}^{e}}_{j,k}$.
Replacing all quantities on the right hand sides of these expressions with estimates and plugging them into \eqref{eq:wopt} provides an optimal weight estimate $\hbw^{e} = \braces{\hbw_{m}^{e}: \lambda_{m} \in \Le}$.

Having computed $\hbw^{e}$ from all previously generated samples, we obtain our incremental sampling weights according to the adjustment
\begin{equation}
  \label{eq:select}
  \bar{\bw}_{m}^{e} \propto \braces*{\parens*{B + \sum_{\lambda_{\ell} \in \Ls} N_{\ell}} \hbw_{m}^{e} - N_{m}} \lor 0, \quad \lambda_{m} \in \Lambda^{e},
\end{equation}
where $B$ is the number of new samples to be generated, and $N_{\ell}$ is the number of previously generated samples at $\pi_{\lambda_{\ell}}$. The estimator can be stabilized by replacing $\hbw^{e}$ in \eqref{eq:select} by weights proportional to $\sqrt{\hbw^{e}}$. Since this solution is a continuous relaxation of the discrete effort allocation problem, we discretize by sampling locations from $\Le$ according to the probabilities $\bar{\bw}_{m}^{e}$ using the pivotal sampling algorithm~\cite{deville1998unequal}.

The proposal above is only preliminary, as is our experimental test. In our current implementation the method requires on the order of $M^{3} \sum_{\lambda_{\ell} \in \Ls} N_{\ell}$ operations. Approximations may be necessary to reduce this cost when $M$ is large. We leave further development and testing of this iterative sampling protocol to future work.

\section{Appendix for Asymptotic Variance Bounds}
\label{app:avar-emus}

\subsection{Proof of Theorem \ref{thm:avar-emus}}

\begin{proof}
As in Appendix \ref{app:design}, define
\begin{equation*}
    \braces{\arr{\Xi}_{i}}_{j,k} = \Cov\parens*{\frac{\psi_{\lambda_{j}}(\theta_{i}^{(1)}) p(\lambda_{j})}{\sum_{\ell=1}^{L} \psi_{\lambda_{\ell}}(\theta_{i}^{(1)}) p(\lambda_{\ell})}, \frac{\psi_{\lambda_{k}}(\theta_{i}^{(1)}) p(\lambda_{k})}{\sum_{\ell=1}^{L} \psi_{\lambda_{\ell}}(\theta_{i}^{(1)}) p(\lambda_{\ell})}}, \quad \lambda_{i}, \lambda_{j}, \lambda_{k} \in \Le,
\end{equation*}
which is finite since $\psi_{\lambda_{j}}(\theta_{i}^{(1)}) p(\lambda_{j}) / (\sum_{\ell=1}^{L} \psi_{\lambda_{\ell}}(\theta_{i}^{(1)}) p(\lambda_{\ell})) \leq 1$. Thus, $\sqrt{N_{i}}( \hbF_{i,j} - \bF_{i,j})$ converges in distribution to a Gaussian variable with mean 0 and variance $\bR_{i,j} = \braces{\arr{\Xi}_{i}}_{j,j}$ as $N \to \infty$. As in \cite[Theorem 3.3]{DinnerThiede2020emus2}, the Delta Theorem implies that $\sqrt{N} (\hbu_{\ell} - \bu_{\ell})$ is asymtotically Gaussian with variance
\begin{equation*}
  \begin{aligned}
    \bsig_{\ell}^2
    & = \sum_{i = 1}^{L} w_{i}^{-1} \sum_{j \neq i,\, k \neq i} \pder{\bu_{\ell}}{\bF_{i,j}} \pder{\bu_{\ell}}{\bF_{i,k}} \braces{\arr{\Xi}_{i}}_{j,k} \\
    & = \sum_{i = 1}^{L} w_{i}^{-1} \sum_{j \neq i,\, k \neq i} \pder{\bu_{\ell}}{\bF_{i,j}} \pder{\bu_{\ell}}{\bF_{i,k}} \sqrt{\bR_{i,j} \bR_{i, k}} \frac{\braces{\arr{\Xi}_{i}}_{j,k}}{\sqrt{\braces{\arr{\Xi}_{i}}_{j,j}, \braces{\arr{\Xi}_{i}}_{k,k}}},
  \end{aligned}
\end{equation*}
where the derivatives are to be understood as the sensitivity of the solution of $\bu = \bF\T \bu$ to the entries in $\bF$. A precise definition is given in \cite[Definition D.2]{DinnerThiede2020emus2}. We observe that the RHS is a quadratic form with respect to the positive semi-definite correlation matrix with entries $\braces{\arr{\Xi}_{i}}_{j,k} / \sqrt{\braces{\arr{\Xi}_{i}}_{j,j}, \braces{\arr{\Xi}_{i}}_{k,k}}$. For a positive semi-definite matrix $\bA$, we have the quadratic form bound $\ba\T \bA\ba \leq \tr(\bA) \abs{\ba}_{2}^{2}$, which in this instance yields
\begin{equation*}
    \sum_{j \neq i,\, k \neq i} \pder{\bu_{\ell}}{\bF_{i,j}} \pder{\bu_{\ell}}{\bF_{i,k}} \braces{\arr{\Xi}_{i}}_{j,k} \leq L \sum_{j \neq i} \parens*{\pder{\bu_{\ell}}{\bF_{i,j}}}^{2} \bR_{i,j}.
\end{equation*}
We then apply \cite[Theorem 2]{thiede2015sharp} to bound the derivatives by
\begin{equation*}
 \abs*{\pder{\log \bu_{\ell}}{\bF_{i,j}}} \leq \frac{1}{\bQ_{i,j}},
\end{equation*}
with $\bQ_{i,j}$ previously defined in Theorem \ref{thm:avar-emus}. Hence,
\begin{equation*}
  \begin{aligned}
    L \sum_{j \neq i} \parens*{\pder{\bu_{\ell}}{\bF_{i,j}}}^{2} \bR_{i,j}
    = \bu_{\ell}^{2} L \sum_{j \neq i} \parens*{\pder{\log \bu_{\ell}}{\bF_{i,j}}}^{2} \bR_{i,j}
    = \bu_{\ell}^{2} L \sum_{j \neq i} \frac{\bR_{i,j}}{\bQ_{i,j}^{2}},
  \end{aligned}
\end{equation*}
and in conjunction,
\begin{equation*}
  \frac{\bsig_{\ell}^2}{\bu_{\ell}^2} \leq L \sum_{i=1}^{L} w_{i}^{-1} \sum_{j \neq i} \frac{\bR_{i,j}}{\bQ_{i,j}^{2}}.
\end{equation*}
\end{proof}

\subsection{Proof of Proposition \ref{prop:var-femus}}

\begin{proof}
We observe that
\begin{equation*}
    \abs*{\pder{u(\lambda)}{\bF_{i,j}}} \leq \sup_{\ell} \abs*{\pder{\log \bu_{\ell}}{\bF_{i,j}}} \sum_{\ell=1}^{L} \bu_{\ell} f(\lambda_{\ell}, \lambda) \leq u(\lambda) \sup_{\ell} \abs*{\pder{\log \bu_{\ell}}{\bF_{i,j}}},
\end{equation*}
and $\partial_{f(\lambda_{\ell}, \lambda)} u(\lambda) = \bu_{\ell}$. Following \cite[Theorem 3.5]{DinnerThiede2020emus2}, the delta method yields a CLT with variance bound
\begin{equation*}
  \begin{aligned}
    \sigma^{2}(\lambda)
    & \leq 2 \sum_{i = 1}^{L} w_{i}^{-1} \parens*{\sum_{j \neq i,\, k \neq i} \pder{u(\lambda)}{\bF_{i,j}} \pder{u(\lambda)}{\bF_{i,k}} \braces{\arr{\Xi}_{i}}_{j,k} + \parens*{\pder{u(\lambda)}{f(\lambda_{i}, \lambda)}}^{2} r_{i}(\lambda)} \\
    & \leq 2 \sum_{i = 1}^{L} w_{i}^{-1} \parens*{L \sum_{j \neq i} \parens*{\pder{u(\lambda)}{\bF_{i,j}}}^{2} \bR_{i,j} + \bu_{i}^{2} r_{i}(\lambda)} \\
    & \leq 2 \sum_{i = 1}^{L} w_{i}^{-1} \parens*{L \sum_{j \neq i} u^{2}(\lambda) \parens*{\pder{\log \bu_{\ell}}{\bF_{i,j}}}^{2} \bR_{i,j} + \bu_{i}^{2} r_{i}(\lambda)} \\
    & \leq 2 \sum_{i = 1}^{L} w_{i}^{-1} \parens*{L u^{2}(\lambda) \sum_{j \neq i} \frac{\bR_{i,j}}{\bQ_{i,j}^{2}} + \bu_{i}^{2} r_{i}(\lambda)}.
  \end{aligned}
\end{equation*}
\end{proof}

\section{Appendix for Fixed-grid Consistency}
\label{app:fixed-g}

\subsection{Establishing condition (SupInt) for Gaussian process priors}

\begin{example}[Supremum-Integrability for Gaussian priors]
\label{ex:gaussint}
  Suppose that $\theta \in \mathbf{R}^{d}$ and that $\psi(\theta, \lambda)$ consists of a bounded likelihood function and a Gaussian prior, i.e.
  \begin{equation*}
    \psi(\theta, \lambda) = p(y | \theta) \Nor(\theta; 0, C_{\lambda}), \qquad \sup_{\theta} p(y | \theta) < \infty,
  \end{equation*}
  and that the leading and trailing eigenvalues of $C_{\lambda}$, denoted $\varepsilon_{1}(\lambda)$ and $\varepsilon_{d}(\lambda)$ respectively, satisfy
  \begin{equation*}
    0 < \inf_{\lambda} \varepsilon_{d}(\lambda) \leq \sup_{\lambda} \varepsilon_{1}(\lambda) < \infty.
  \end{equation*}
  Furthermore, let $p(\lambda)$ be bounded on $\Lambda$. Then, (SupInt) applies. Finally, let $\zeta_{d}(\lambda, \lambda')$ be the smallest eigenvalue of $2 C_{\lambda}^{-1} - C_{\lambda'}$. If in addition $\inf_{\lambda \in \Lambda} \sup_{\lambda' \in \Lambda^{s}}  \zeta_{d}(\lambda, \lambda') > 0$, then (Sup2Int) applies.
  \begin{proof}
  We begin by observing that $\theta\T C_{\lambda}^{-1} \theta \geq |\theta|_{2}^{2} / \varepsilon_{1}(\lambda)$ and $\det C_{\lambda} \geq \varepsilon_{d}(\lambda)^{d}$, respectively implying
  \begin{gather*}
    \exp\braces*{-\frac{\theta\T C_{\lambda}^{-1} \theta}{2}} \leq \exp\braces*{-\frac{\abs{\theta}_{2}^{2}}{2 \varepsilon_{1}(\lambda)}}, \qquad \frac{1}{\det C_{\lambda}} \leq \frac{1}{\varepsilon_{d}(\lambda)^{d}}.
  \end{gather*}
  Hence, the prior density is bounded above by
  \begin{equation*}
    \begin{aligned}
      \Nor(\theta; 0, C_{\lambda})
      & = \frac{1}{\sqrt{(2 \pi)^{d} \det C_{\lambda}}} \exp\braces*{-\frac{\theta\T C_{\lambda}^{-1} \theta}{2}} \\
      & \leq \frac{1}{\sqrt{(2 \pi \varepsilon_{d}(\lambda))^{d}}} \exp\braces*{-\frac{\abs{\theta}_{2}^{2}}{2 \varepsilon_{1}(\lambda)}},
    \end{aligned}
  \end{equation*}
  whereby we obtain the required bound
  \begin{equation*}
    \begin{aligned}
      \int \sup_{\lambda} \psi(\theta, \lambda) p(\lambda) \dd{\theta}
      & \leq \int \sup_{\lambda} \parens*{\frac{p(y | \theta) p(\lambda)}{\sqrt{(2 \pi \varepsilon_{d}(\lambda))^{d}}} \exp\braces*{-\frac{\abs{\theta}_{2}^{2}}{2 \varepsilon_{1}(\lambda)}}} \dd{\theta} \\
      & \leq \frac{\sup_{\theta} p(y | \theta) \sup_{\lambda} p(\lambda)}{\sqrt{(2 \pi \inf_{\lambda} \varepsilon_{d}(\lambda))^{d}}} \int \exp\braces*{-\frac{\abs{\theta}_{2}^{2}}{2 \sup_{\lambda} \varepsilon_{1}(\lambda)}} \dd{\theta} \\
      & \leq \sup_{\theta} p(y | \theta) \sup_{\lambda} p(\lambda) \sqrt{\parens*{\frac{\sup_{\lambda} \varepsilon_{1}(\lambda)}{\inf_{\lambda} \varepsilon_{d}(\lambda)}}^{d}} \\
      & < \infty.
    \end{aligned}
  \end{equation*}
  Proceeding with (Sup2Int),
  \begin{equation*}
    \begin{aligned}
      & \int \frac{\sup_{\lambda} \psi(\theta, \lambda)^{2}}{\sum_{i=1}^{L} \psi(\theta, \lambda_{i})} \dd{\theta} \\
      & \leq \sup_{\theta} \pi(y | \theta) \int \frac{\sup_{\lambda \in \Lambda} \Nor(\theta; 0, C_{\lambda})^{2}}{\sup_{\lambda' \in \Lambda^{s}} \Nor(\theta; 0, C_{\lambda'})} \dd{\theta} \\
      & \leq \sup_{\theta} \pi(y | \theta) \parens*{\frac{\sqrt{2 \pi \inf_{\lambda' \in \Lambda^{s}} \varepsilon_{1}(\lambda')}}{2 \pi \inf_{\lambda \in \Lambda} \varepsilon_{d}(\lambda)}}^{d} \int \sup_{\lambda \in \Lambda} \inf_{\lambda' \in \Lambda^{s}} e^{-\theta\T (2 C_{\lambda}^{-1} - C_{\lambda'}^{-1}) \theta / 2} \dd{\theta} \\
      & \leq \sup_{\theta} \pi(y | \theta) \parens*{\frac{\sqrt{2 \pi \inf_{\lambda' \in \Lambda^{s}} \varepsilon_{1}(\lambda')}}{2 \pi \inf_{\lambda \in \Lambda} \varepsilon_{d}(\lambda)}}^{d} \int \exp\braces*{-\frac{|\theta|_{2}^{2}}{2} \inf_{\lambda \in \Lambda} \sup_{\lambda' \in \Lambda^{s}} \zeta_{d}(\lambda, \lambda')} \dd{\theta},
    \end{aligned}
  \end{equation*}
  where $\zeta_{d}(\lambda, \lambda')$ is the smallest eigenvalue of $2 C_{\lambda}^{-1} - C_{\lambda'}^{-1}$, and the integral is finite if $\inf_{\lambda \in \Lambda} \sup_{\lambda' \in \Lambda^{s}}  \zeta_{d}(\lambda, \lambda') > 0$. Thus, (Sup2Int) applies when for every $\lambda \in \Lambda$, there is a $\lambda' \in \Lambda^{s}$ such that $2 C_{\lambda}^{-1} - C_{\lambda'}^{-1}$ is positive definite.
  \end{proof}
\end{example}

\subsection{Proof of Lemma \ref{lem:kernel}}

We make use of a strong law of large numbers on a Banach space, as in the following Theorem, which can be found, e.g., in \cite{giesy2006strong}, see also \cite{mcmle-om} for its use in contexts similar to those considered here.

\begin{theorem}
\label{thm:ulln}
  Consider the separable Banach space $(X, \norm{\cdot})$, with elements $x \in X$ and norm $\norm{\cdot}$. Suppose that
  \begin{equation*}
    \E [\norm{x}] < \infty, \qquad \E[x] = 0.
  \end{equation*}
  Then, if $x_{1}, \ldots x_{N}$ are independent and distributed as $x$,
  \begin{equation*}
    \lim_{N \to \infty} \norm*{N^{-1} \sum_{n=1}^{N} x_{n}} = 0, \; a.s.
  \end{equation*}
\end{theorem}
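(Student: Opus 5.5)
The plan is the classical Mourier-type argument: approximate $x$ by a random variable taking finitely many values, apply the scalar strong law of large numbers to the approximation coordinate-wise and to the approximation error through its norm, and let the approximation level tend to zero along a countable sequence.

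\emph{Step 1 (approximating maps).} Use separability of $X$ to fix a dense sequence $d_1, d_2, \ldots$ and set $d_0 = 0$. For each $k$ define a Borel map $T_k : X \to \braces{d_0, \ldots, d_k}$ by sending $x$ to the $d_j$, $0 \le j \le k$, that minimizes $\norm{x - d_j}$, breaking ties by smallest index. This map is measurable because each cell $\braces{x : j \text{ is the chosen index}}$ is defined by finitely many inequalities between continuous functions. Since $d_0 = 0$ is always a candidate, $\norm{x - T_k x} \le \norm{x}$. By density, $\norm{x - T_k x} \to 0$ pointwise as $k \to \infty$. Since $\E[\norm{x}] < \infty$, dominated convergence gives $\E\norm{x - T_k x} \to 0$. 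So for any $\varepsilon > 0$ there is a $k$ with $\E\norm{x - y} < \varepsilon$, where $y = T_k x$. The variable $y$ takes finitely many values in the finite-dimensional span $V_k$ of $d_1, \ldots, d_k$, so $\E[y]$ is an elementary finite sum. Moreover $\norm{\E[y]} = \norm{\E[y - x]} \le \E\norm{y - x} < \varepsilon$, using $\E[x] = 0$; this inequality is elementary because $x$ is Bochner integrable.

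\emph{Step 2 (decomposition).} Put $y_n = T_k x_n$. These are i.i.d., and so are the real variables $\norm{x_n - y_n}$. Write
\begin{equation*}
  \norm*{N^{-1} \sum_{n=1}^{N} x_n} \le N^{-1} \sum_{n=1}^{N} \norm{x_n - y_n} + \norm*{N^{-1} \sum_{n=1}^{N} y_n}.
\end{equation*}
For the first term, the scalar SLLN gives almost sure convergence to $\E\norm{x - y} < \varepsilon$. For the second term, $y_n$ lives in $V_k$, where all norms are equivalent. Writing $y_n = \sum_{j \le k} \bone\braces{T_k x_n = d_j} d_j$ and applying the scalar SLLN to each of the $k$ indicators gives $N^{-1} \sum_n y_n \to \E[y]$ almost surely, and $\norm{\E[y]} < \varepsilon$. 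Hence $\limsup_N \norm{N^{-1} \sum_n x_n} \le 2\varepsilon$ almost surely.

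\emph{Step 3 (conclusion).} Take $\varepsilon = 1/m$ for $m \in \mathbb{N}$, with the corresponding $k_m$. Intersecting the countably many probability-one events gives $\limsup_N \norm{N^{-1} \sum_n x_n} = 0$ almost surely.

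I expect Step 1 to be the main obstacle. The delicate points are measurability of $T_k$, which requires the Borel $\sigma$-algebra and separability, and the justification of $\norm{\E[y]} \le \E\norm{y - x}$, which depends on $x$ being Bochner integrable. Separability together with $\E\norm{x} < \infty$ supplies exactly that integrability, via Pettis measurability. Once these are settled, Steps 2 and 3 are routine applications of the real-valued SLLN.
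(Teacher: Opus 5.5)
The paper does not prove this statement. It quotes it as a known strong law of large numbers in separable Banach spaces, with a pointer to \cite{giesy2006strong} (and \cite{mcmle-om} for its use in Monte Carlo maximum likelihood), and then applies it in the proof of Lemma~\ref{lem:kernel} in $(\mathcal{C}, \abs{\cdot}_{\infty})$.

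Your argument is a correct, self-contained proof, essentially Mourier's classical one. The nearest-point quantizers $T_k$ are Borel, satisfy $\norm{x - T_k x} \le \norm{x}$ and $\norm{x - T_k x} \to 0$, so dominated convergence gives $\E\norm{x - T_k x} \to 0$. The finitely-valued part is handled by the scalar SLLN on finitely many indicators, and the remainder by the scalar SLLN on $\norm{x_n - T_k x_n}$. The hypothesis $\E[x] = 0$ enters only through $\norm{\E[T_k x]} \le \E\norm{T_k x - x}$.

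The citation gives the paper brevity. Your route shows exactly where separability and $\E\norm{x} < \infty$ are used, and reduces everything to the real-valued SLLN.

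Two small simplifications are available.
\begin{itemize}
\item The maps $T_k$ themselves exhibit $x$ as an $L^{1}$-limit of simple random variables. So they already establish Bochner integrability, with no separate appeal to Pettis's theorem. If $\E[x] = 0$ were meant only in the weak sense, Hahn--Banach gives the same bound: $\norm{\E[T_k x]} = \sup_{\norm{\ell} \le 1} \abs{\E[\ell(T_k x - x)]} \le \E\norm{T_k x - x}$.
\item Equivalence of norms on $V_k$ is not needed. The triangle inequality gives $\norm{\sum_{j} (a_{j,N} - a_{j}) d_{j}} \le \sum_{j} \abs{a_{j,N} - a_{j}} \norm{d_{j}}$ directly, where $a_{j,N}$ is the empirical frequency of the value $d_j$ and $a_j$ its probability.
\end{itemize}
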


The application of Theorem \ref{thm:ulln} to proving Lemma \ref{lem:kernel} follows below.

\begin{proof}
  We demonstrate almost sure convergence by appealing to Theorem \ref{thm:ulln}, with random elements
  \begin{equation*}
    \hf(\lambda_{i}, \lambda) - f(\lambda_{i}, \lambda) = N^{-1} \sum_{n=1}^{N} \frac{\psi_{\lambda}(\theta_{i}^{(n)}) p(\lambda)}{\sum_{\ell} \psi_{\lambda_{\ell}}(\theta_{i}^{(n)}) p(\lambda_{\ell})} - f(\lambda_{i}, \lambda)
  \end{equation*}
  on the Banach space $(\mathcal{C}, \abs{\cdot}_{\infty})$, where $\mathcal{C}$ is the set of continuous, real-valued functions on $\Lambda$, and $|h|_{\infty} = \sup_{\lambda \in \Lambda} |h(\lambda)|$ for $h \in \mathcal{C}$. This requires
  \begin{equation*}
    \E [\sup_{\lambda \in \Lambda} |\hf(\lambda_{i}, \lambda) - f(\lambda_{i}, \lambda)|]
    \leq \sup_{\lambda \in \Lambda} f(\lambda_{i}, \lambda) + \E \bracks*{\sup_{\lambda \in \Lambda} \frac{\psi_{\lambda}(\theta_{i}^{(1)}) p(\lambda)}{\sum_{\ell} \psi_{\lambda_{\ell}}(\theta_{i}^{(1)}) p(\lambda_{\ell})}}
  \end{equation*}
  to be finite. However, the expectation on the right-hand side admits the bound
  \begin{equation*}
    \begin{aligned}
      \E \bracks*{\sup_{\lambda \in \Lambda} \frac{\psi_{\lambda}(\theta_{i}^{(1)}) p(\lambda)}{\sum_{\ell} \psi_{\lambda_{\ell}}(\theta_{i}^{(1)}) p(\lambda_{\ell})}}
      & = \frac{1}{z(\lambda_{i})p(\lambda_{i})} \int \sup_{\lambda} \psi_{\lambda}(\theta) p(\lambda) \frac{\psi_{\lambda_{i}}(\theta) p(\lambda_{i})}{\sum_{\ell} \psi_{\lambda_{\ell}}(\theta) p(\lambda_{\ell})} \dd{\theta} \\
      & \leq \frac{1}{u(\lambda_{i})} \int \sup_{\lambda} \psi_{\lambda}(\theta) p(\lambda) \dd{\theta},
    \end{aligned}
  \end{equation*}
  which is finite by (SupInt). Thus, Theorem \ref{thm:ulln} applies and $\sup_{\lambda \in \Lambda} |\hf(\lambda_{i}, \lambda) - f(\lambda_{i}, \lambda)| \xrightarrow{a.s.} 0$.

  We may extend this to uniform convergence in $\mathcal{L}^{2}$ under (Sup2Int). By independence,
  \begin{equation*}
    \begin{aligned}
      \Var \bracks*{\sup_{\lambda \in \Lambda} |\hf(\lambda_{i}, \lambda) - f(\lambda_{i}, \lambda)|}
      & \leq L^{-1} \Var \bracks*{\sup_{\lambda \in \Lambda} \abs*{\frac{\psi_{\lambda}(\theta_{i}^{(1)}) p(\lambda)}{\sum_{\ell} \psi_{\lambda_{\ell}}(\theta_{i}^{(1)}) p(\lambda_{\ell})}}} \\
      & \leq L^{-1} \E \bracks*{\sup_{\lambda \in \Lambda} \parens*{\frac{\psi_{\lambda}(\theta_{i}^{(1)}) p(\lambda)}{\sum_{\ell} \psi_{\lambda_{\ell}}(\theta_{i}^{(1)}) p(\lambda_{\ell})}}^{2}}.
    \end{aligned}
  \end{equation*}
  Moreover, the expectation on the right-hand side admits the bound
  \begin{equation*}
    \E \bracks*{\sup_{\lambda \in \Lambda} \parens*{\frac{\psi_{\lambda}(\theta_{i}^{(1)}) p(\lambda)}{\sum_{\ell} \psi_{\lambda_{\ell}}(\theta_{i}^{(1)}) p(\lambda_{\ell})}}^{2}}
    \leq \frac{1}{u(\lambda_{i})} \int \frac{\sup_{\lambda \in \Lambda} \psi_{\lambda}(\theta)^{2} p(\lambda)^{2}}{\sum_{\ell} \psi_{\lambda_{\ell}}(\theta) p(\lambda_{\ell})} \dd \theta,
  \end{equation*}
  which is finite by (Sup2Int). Thus, $\Var \bracks*{\sup_{\lambda \in \Lambda} |\hf(\lambda_{i}, \lambda) - f(\lambda_{i}, \lambda)|} \to 0$. Having already established that $\E \bracks*{\sup_{\lambda \in \Lambda} |\hf(\lambda_{i}, \lambda) - f(\lambda_{i}, \lambda)|} < \infty$, it follows that $\E \bracks*{\sup_{\lambda \in \Lambda} |\hf(\lambda_{i}, \lambda) - f(\lambda_{i}, \lambda)|^{2}} < \infty$, and $\braces*{\sup_{\lambda \in \Lambda} |\hf(\lambda_{i}, \lambda) - f(\lambda_{i}, \lambda)|: N = 1, 2, \dots}$ is uniformly integrable. In conjunction with almost sure convergence, it follows that $\E \bracks*{\sup_{\lambda \in \Lambda} |\hf(\lambda_{i}, \lambda) - f(\lambda_{i}, \lambda)|} \to 0$. Since the variance vanishes, it finally follows that $\E \bracks*{\sup_{\lambda \in \Lambda} |\hf(\lambda_{i}, \lambda) - f(\lambda_{i}, \lambda)|^{2}} \to 0$.
\end{proof}

\section{Appendix for Dense-grid Consistency}
\label{app:dense-g}

\subsection{Lemmas for Theorem \ref{thm:stable-errors}}

\begin{lemma}[Convergence of $|\hbF (\hbF - \bF)|_{\mathrm{F}}$]
\label{lem:randopnorm}
  Suppose that (Cond) holds. Then,
  \begin{equation*}
    \lim_{L \to \infty} |\hbF (\hbF - \bF)|_{\mathrm{F}} = 0 \; (a.s.), \quad
    \limsup_{L \to \infty} L^{2} \E |\hbF (\hbF - \bF)|_{\mathrm{F}}^{4} < \infty.
  \end{equation*}
  \begin{proof}
    For brevity, we write $\bA = \hbF - \bF$, where $\E[\bA] = 0$. Writing the Frobenius norm as the sum over the squared matrix elements,
    \begin{equation*}
      |\hbF \bA|_{\mathrm{F}}^{4} = \parens*{\sum_{i,j = 1}^{L} \braces{\hbF \bA}_{ij}^{2}}^{2} = \sum_{i,j,i',j' = 1}^{L} \braces{\hbF \bA}_{ij}^{2} \braces{\hbF \bA}_{i'j'}^{2},
    \end{equation*}
    where by Cauchy-Schwarz $\E{\braces{\hbF \bA}_{ij}^{2} \braces{\hbF \bA}_{i'j'}^{2}} = \sqrt{\E\braces{\hbF \bA}_{ij}^{4} \E\braces{\hbF \bA}_{i'j'}^{4}}$. Thus,
    \begin{equation*}
      \begin{aligned}
        \limsup_{L \to \infty} L^{2} \E |\hbF \bA|_{\mathrm{F}}^{4}
        & \leq \limsup_{L \to \infty} L^{-4} \sum_{i,j,i',j' = 1}^{L} \sqrt{L^{12} \E\braces{\hbF \bA}_{ij}^{4} \E\braces{\hbF \bA}_{i'j'}^{4}} \\
        & \leq \limsup_{L \to \infty} \sup_{i, j, i', j'} \sqrt{L^{12} \E\braces{\hbF \bA}_{ij}^{4} \E\braces{\hbF \bA}_{i'j'}^{4}} \\
        & < \infty,
      \end{aligned}
    \end{equation*}
    by (Cond) and Lemma \ref{lem:randop2}. Having controlled the 4-th moment, we may exploit this to show almost sure convergence. The claim corresponds to finding that
    \begin{equation*}
      \P\bracks{|\hbF \bA|_{\mathrm{F}} > \epsilon \ i.o.} = 0, \quad \forall \epsilon > 0.
    \end{equation*}
    By Borel-Cantelli, that corresponds to finding
    \begin{equation*}
      \sum_{L=1}^{\infty} \P\bracks{|\hbF \bA|_{\mathrm{F}} > \epsilon} < \infty.
    \end{equation*}
    We observe that by Markov's inequality,
    \begin{equation*}
      \P\bracks{|\hbF \bA|_{\mathrm{F}} > \epsilon}
      = \P\bracks{L^{2} |\hbF \bA|_{\mathrm{F}}^{4} > L^{2} \epsilon^{4}}
      \leq \frac{L^{2} \E |\hbF \bA|_{\mathrm{F}}^{4}}{L^{2} \epsilon^{4}},
    \end{equation*}
    where the numerator is finite for some $L^{*} < \infty$. Then,
    \begin{equation*}
      \sum_{L=1}^{\infty} \P\bracks{|\hbF \bA|_{\mathrm{F}} > \epsilon}
      \leq L^{*} + \sum_{L=L^{*}}^{\infty} \frac{\limsup_{L \to \infty} L^{2} \E |\hbF \bA|_{\mathrm{F}}^{4}}{L^{2} \epsilon^{4}},
    \end{equation*}
    is a convergent series, as required.
  \end{proof}
\end{lemma}

\begin{lemma}[Convergence of spectral gap]
  \label{lem:convgap}
  Suppose that (Irred), (SupInt), (LipInt), (Cond) hold. Define $v(\lambda) = u(\lambda) / |\bu|_{1}$ such that it is a probability mass function on $\Ls$, let $\gamma$ be the absolute spectral gap of $\bF$, and $\tilde{\gamma}$ the absolute spectral gap of the limiting Markov kernel $g(\lambda, \lambda')$. Then, 
  \begin{equation*}
    \lim_{L \to \infty} \gamma = \tilde{\gamma} > 0, \quad
    \limsup_{L \to \infty} |(I - \bF)^{\#}|_{2} \leq \frac{\sqrt{\sn{\pi} \sn{1/\pi}}}{\tilde{\gamma}}.
  \end{equation*}
  \begin{proof}
    We begin by noticing that $\tilde{\gamma} > 0$ by Lemma \ref{lem:gapexist} under (SupInt), (Irred). In order to avoid potential issues with periodicity, we will work with the squared operators $\bF^{2}$ and $G^{2}$, where $(G\phi)(\lambda) = \int g(\lambda, \lambda') \phi(\lambda') \dd{\lambda'}$, such that their spectrum is non-negative. Then, $\gamma = 1 - \sqrt{\varepsilon_{2}}$, where $\varepsilon_{2}$ is the second eigenvalue of $\bF^{2}$, and $\tilde{\varepsilon}_{2}$ is the second eigenvalue of $G^{2}$. For reversible stochastic matrices, we recall the variational expression for $\varepsilon_{2}$, given by
    \begin{equation*}
      \varepsilon_{2} = \sup_{\braces{\bph: \substack{\iprod{\bone, \bph}_{\bv} = 0 \\ \iprod{\bph, \bph}_{\bv} = 1}}} \iprod{\bF^{2} \bph, \bph}_{\bv},
    \end{equation*}
    where we observe that due to reversibility/self-adjointness of $\bF$ with respect to $\bv$, $\iprod{\bF^{2} \bph, \bph}_{\bv} = \iprod{\bF \bph, \bF \bph}_{\bv}$. We also note that the maximizing argument $\bw$ is an eigenvector corresponding to $\varepsilon_{2}$. Using the analogous expression for the limiting second eigenvalue,
    \begin{equation*}
      \varepsilon_{2} - \tilde{\varepsilon}_{2} = \sup_{\braces{\bph: \substack{\iprod{\bone, \bph}_{\bv} = 0 \\ \iprod{\bph, \bph}_{\bv} = 1}}} \iprod{\bF \bph, \bF \bph}_{\bv} - \sup_{\braces{\phi: \substack{\iprod{1, \phi}_{\pi} = 0 \\ \iprod{\phi, \phi}_{\pi} = 1}}} \iprod{G\phi, G\phi}_{\pi}.
    \end{equation*}
    The strategy consists of expressing both eigenvalues as maxima over the same class, and bounding the difference between the maxima. The first step is to express the constraint on $\bph$ as a constraint on its interpolant $\phi$, i.e.
    \begin{equation*}
      \sup_{\braces{\bph: \substack{\iprod{\bone, \bph}_{\bv} = 0 \\ \iprod{\bph, \bph}_{\bv} = 1}}} \iprod{\bF \bph, \bF \bph}_{\bv} = \sup_{\braces{\phi: \substack{\iprod{\phi, 1}_{\bv} = 0 \\ \iprod{\phi, \phi}_{\bv} = 1}}} \iprod{F \phi, F \phi}_{\bv},
    \end{equation*}
    where $F$ is the operator such that $(F \phi)(\lambda) = \sum_{\ell=1}^{L} f(\lambda, \lambda_{\ell})\phi(\lambda_{\ell})$. As a matter of fact, under (LipInt) and (Cond), Lemma \ref{lem:smoothev} shows that $\iprod{G\phi, G\phi}_{\pi}$ is maximized by a Lipschitz-continuous function $\tilde{\omega}$, and we can restrict solutions to  $\tilde{\Phi} = \braces{\phi: \iprod{\phi, 1}_{\pi} = 0, \iprod{\phi, \phi}_{\pi} = 1, \lip{\phi} \leq \lip{\tilde{\omega}}}$. Analogously, $\iprod{F \phi, F \phi}_{\bv}$ is maximized by a piecewise linear function $w$ with $\limsup_{L \to \infty} \lip{\omega} < \infty$ and $\sn{\omega} \leq \lip{\omega}$. Therefore, we can restrict ourselves to solutions in the class
    \begin{equation*}
      \Phi = \braces{\phi: \iprod{\phi, 1}_{\bv} = 0, \iprod{\phi, \phi}_{\bv} = 1, \lip{\phi} \leq \lip{\omega} \lor \lip{\tilde{\omega}} / \iprod{\tilde{\omega}, \tilde{\omega}}_{\bv}^{1/2}},
    \end{equation*}
    where we have slightly enlarged the class to include $(\tilde{\omega} - \iprod{\tilde{\omega}, 1}_{\bv}) / \iprod{\tilde{\omega}, \tilde{\omega}}_{\bv}^{1/2}$. We note that for both classes, $\sn{\phi} \leq \lip{\phi}$ since $\phi$ must cross 0. Having lifted the first optimization problem to function space, we express $\tilde{\varepsilon}_{2}$ as a maximum over $\Phi$, i.e.
    \begin{equation*}
      \tilde{\varepsilon}_{2} = \sup_{\phi \in \Phi} \frac{\iprod{G\phi, G\phi}_{\pi} - \iprod{\phi, 1}_{\pi}^{2}}{\iprod{\phi, \phi}_{\pi} - \iprod{\phi, 1}_{\pi}^{2}}.
    \end{equation*}
    Then, by the triangle inequality,
    \begin{equation*}
      \begin{aligned}
        |\varepsilon_{2} - \tilde{\varepsilon}_{2}|
        & = \abs*{\sup_{\phi \in \Phi} \iprod{F\phi, F\phi}_{\bv} - \sup_{\phi \in \Phi} \frac{\iprod{G\phi, G\phi}_{\pi} - \iprod{\phi, 1}_{\pi}^{2}}{\iprod{\phi, \phi}_{\pi} - \iprod{\phi, 1}_{\pi}^{2}}} \\
        & \leq \sup_{\phi \in \Phi} \abs*{\iprod{F\phi, F\phi}_{\bv} - \frac{\iprod{G\phi, G\phi}_{\pi} - \iprod{\phi, 1}_{\pi}^{2}}{\iprod{\phi, \phi}_{\pi} - \iprod{\phi, 1}_{\pi}^{2}}} \\
        & \leq \sup_{\phi \in \Phi} |\iprod{F\phi, F\phi}_{\bv} - \iprod{G\phi, G\phi}_{\pi}| \\
        & \quad + \sup_{\phi \in \Phi} \abs*{\iprod{G\phi, G\phi}_{\pi} - \frac{\iprod{G\phi, G\phi}_{\pi} - \iprod{\phi, 1}_{\pi}^{2}}{\iprod{\phi, \phi}_{\pi} - \iprod{\phi, 1}_{\pi}^{2}}}.
      \end{aligned}
    \end{equation*}
    Moreover, the second term admits the bound
    \begin{equation*}
      \begin{aligned}
        & \sup_{\phi \in \Phi} \abs*{\iprod{G\phi, G\phi}_{\pi} - \frac{\iprod{G\phi, G\phi}_{\pi} - \iprod{\phi, 1}_{\pi}^{2}}{\iprod{\phi, \phi}_{\pi} - \iprod{\phi, 1}_{\pi}^{2}}} \\
        & = \sup_{\phi \in \Phi} \abs*{\iprod{\phi, 1}_{\pi}^{2} + \parens*{\iprod{\phi, \phi}_{\pi} - \iprod{\phi, 1}_{\pi}^{2} - 1} \frac{\iprod{G\phi, G\phi}_{\pi} - \iprod{\phi, 1}_{\pi}^{2}}{\iprod{\phi, \phi}_{\pi} - \iprod{\phi, 1}_{\pi}^{2}}} \\
        & \leq \sup_{\phi \in \Phi} \iprod{\phi, 1}_{\pi}^{2} + \sup_{\phi \in \Phi} \abs*{\iprod{\phi, \phi}_{\pi} - \iprod{\phi, 1}_{\pi}^{2} - 1} \sup_{\phi \in \Phi} \abs*{\frac{\iprod{G\phi, G\phi}_{\pi} - \iprod{\phi, 1}_{\pi}^{2}}{\iprod{\phi, \phi}_{\pi} - \iprod{\phi, 1}_{\pi}^{2}}} \\
        & \leq \sup_{\phi \in \Phi} \iprod{\phi, 1}_{\pi}^{2} + \sup_{\phi \in \Phi} |\iprod{\phi, \phi}_{\pi} - \iprod{\phi, 1}_{\pi}^{2} - 1| \\
        & \leq 2 \sup_{\phi \in \Phi} |\iprod{\phi, 1}_{\bv} - \iprod{\phi, 1}_{\pi}|^{2} + \sup_{\phi \in \Phi} |\iprod{\phi, \phi}_{\bv} - \iprod{\phi, \phi}_{\pi}|,
      \end{aligned}
    \end{equation*}
    where in the second to last step, the absolute Rayleigh quotient is at most 1 since all eigenvalues are at most 1 in absolute value, and the last step follows from the triangle inequality, keeping in mind that $\iprod{\phi, 1}_{\bv} = 0$ and $\iprod{\phi, \phi}_{\bv} = 1$ for $\phi \in \Phi$. The problem, then, becomes one of uniformly controlling the convergence of various expectations with respect to $\bv$ to expectations with respect to $\pi$, such that the supremum is controlled as well. In particular, by (LipInt), (Cond) and Lemmas \ref{lem:cvgqf} and \ref{lem:smoothev}, 
    \begin{equation*}
      \begin{aligned}
        & \limsup_{L \to \infty} L \sup_{\phi \in \Phi} |\iprod{F\phi, F\phi}_{\bv} - \iprod{G\phi, G\phi}_{\pi}| \\
        & \leq c_{4} \limsup_{L \to \infty} \sup_{\phi \in \Phi} \lip{\phi}^{2} \\
        & \leq c_{4} \parens*{\frac{\lip{\tilde{\omega}}}{\sqrt{\lim_{L \to \infty} \iprod{\tilde{\omega}, \tilde{\omega}}_{\bv}}} \lor \limsup_{L \to \infty} \lip{\omega}}^{2} \\
        & \leq c_{4} \parens*{\lip{\tilde{\omega}} \lor \limsup_{L \to \infty} \lip{\omega}}^{2} \\
        & < \infty,
      \end{aligned}
    \end{equation*}
    with Lemma \ref{lem:cvgdens} yielding analogous bounds for the other terms. In conjunction,
    \begin{equation*}
      \limsup_{L \to \infty} L|\varepsilon_{2} - \tilde{\varepsilon}_{2}| < \infty.
    \end{equation*}
    Therefore, $\lim_{L \to \infty} \gamma = 1 - \sqrt{\lim_{L \to \infty} \varepsilon_{2}} = 1 - \sqrt{\tilde{\varepsilon}_{2}} = \tilde{\gamma}$. Applying now Lemma \ref{lem:gapbound}, we obtain the group inverse norm bound
    \begin{equation*}
      \begin{aligned}
        \limsup_{L \to \infty} |(I - \bF)^{\#}|_{2}
        & \leq \frac{\sqrt{\sn{\bv} \sn{1/\bv}}}{\lim_{L \to \infty} \gamma}
        & \leq \frac{\sqrt{\sn{\pi} \sn{1/\pi}}}{\tilde{\gamma}}.
      \end{aligned}
    \end{equation*}
  \end{proof}
\end{lemma}

\subsection{Lemmas for Theorem \ref{thm:dense-g}}

Lemma \ref{lem:ullnapp} leverages the following generic uniform law of large numbers for smooth functions. An even more general statement is found in \cite{andrews1992generic}.

\begin{theorem}[Uniform law of large numbers (See e.g. \cite{andrews1992generic}, Theorem 2)]
\label{thm:ullnlip}
  Let $\braces{S_{L}(\lambda): \ell = 1, 2, \dots}$ be a sequence of random functions on the compact set $\Lambda$, and let $T_{L} > \lip{S_{L}}$ (a.s.). $S_{L}$ is \emph{strongly stochastically equicontinuous} SSE if $\limsup_{L \to \infty} \E\bracks{L^{-1} T_{L}} < \infty$ and $|L^{-1} T_{L} - \E\bracks{L^{-1} T_{L}}| \xrightarrow{a.s.} 0$. If $S_{L}$ is SSE, $\limsup_{L \to \infty} \E{|L^{-1} S_{L}(\lambda)|} < \infty$, and $|L^{-1} S_{L}(\lambda) - \E\bracks{L^{-1} S_{L}(\lambda)}| \xrightarrow{a.s.} 0$ for any $\lambda \in \Lambda$, then
  \begin{equation*}
    \lim_{L \to \infty} \sup_{\lambda \in \Lambda} |L^{-1} S_{L}(\lambda) - \E\bracks{L^{-1} S_{L}(\lambda)}| = 0 \; (a.s.)
  \end{equation*}
\end{theorem}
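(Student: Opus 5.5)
The plan is a standard finite-net (chaining) argument on the compact set $\Lambda$. It reduces uniform convergence to pointwise convergence at finitely many points plus control of a modulus of continuity. I take $\Lambda=[0,1]$ as in (1dRegGrid) for concreteness; the argument only uses compactness. Write $\bar S_L = L^{-1}S_L$ and $\bar T_L = L^{-1}T_L$. Since $T_L > \lip{S_L}$ almost surely, $\bar T_L$ dominates the Lipschitz constant of $\bar S_L$. By Jensen, $\E[\bar T_L]$ also dominates the Lipschitz constant of the deterministic function $\lambda\mapsto\E[\bar S_L(\lambda)]$, because $|\E \bar S_L(\lambda)-\E\bar S_L(\lambda')| \le \E[\bar T_L]\,|\lambda-\lambda'|$.

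\textbf{Step 1: constants.} By SSE, $C := \limsup_{L} \E[\bar T_L] < \infty$. Together with $|\bar T_L - \E \bar T_L|\to 0$ a.s., this gives $\limsup_L \bar T_L \le C$ almost surely. In particular, on a probability-one event, $\bar T_L \le 2C+1$ for all $L$ large.

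\textbf{Step 2: net.} Fix $\epsilon>0$ and pick a finite $\delta$-net $\{\mu_1,\dots,\mu_K\}\subset\Lambda$ with $\delta = \epsilon/(3C+2)$. This is possible by compactness, and $K$ does not depend on $L$. For any $\lambda$, choose $\mu_k$ with $|\lambda-\mu_k|\le\delta$ and decompose
\begin{equation*}
|\bar S_L(\lambda)-\E\bar S_L(\lambda)| \le |\bar S_L(\lambda)-\bar S_L(\mu_k)| + |\bar S_L(\mu_k)-\E\bar S_L(\mu_k)| + |\E\bar S_L(\mu_k)-\E\bar S_L(\lambda)|.
\end{equation*}
The first term is at most $\bar T_L\delta$ and the third is at most $\E[\bar T_L]\delta$.

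\textbf{Step 3: conclude.} Take the intersection over the finitely many $k$ of the probability-one events on which pointwise convergence holds at $\mu_k$. On this event, together with the one from Step 1, we get for large $L$
\begin{equation*}
\sup_\lambda |\bar S_L(\lambda)-\E\bar S_L(\lambda)| \le (\bar T_L + \E\bar T_L)\delta + \max_k |\bar S_L(\mu_k)-\E\bar S_L(\mu_k)|.
\end{equation*}
Hence the $\limsup$ of the left side is at most $(2C+\text{small})\delta + 0 \le \epsilon$. Letting $\epsilon\downarrow 0$ along a countable sequence preserves the almost sure statement.

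\textbf{Main subtleties.} The step needing most care is keeping the exceptional null sets countable. This is why the net is finite for each $\epsilon$ and $\epsilon$ runs over $1/m$. A second point is measurability of the supremum, which follows from the a.s. Lipschitz continuity: the supremum equals one over a countable dense set. The hypothesis $\limsup_L \E|\bar S_L(\lambda)|<\infty$ only ensures that the expectations are finite and well defined; it plays no further role in the argument.
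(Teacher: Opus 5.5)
Your argument is correct. The paper gives no proof of its own and simply invokes Theorem 2 of \cite{andrews1992generic}, whose proof is exactly this finite-cover reduction to pointwise convergence, with $\bar T_L + \E[\bar T_L]$ acting as the almost surely bounded Lipschitz constant of the centred process $\bar S_L - \E\bar S_L$; the only detail worth making explicit is that $\E\bar S_L(\lambda)$ is finite for all $\lambda$ simultaneously once $L$ is large, which follows from finiteness at one point $\lambda_0$ together with $\E|\bar S_L(\lambda)-\bar S_L(\lambda_0)| \le \E[\bar T_L]\,|\lambda-\lambda_0|$.
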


On that basis, Lemma \ref{lem:ullnapp} verifies sufficient conditions to apply Theorem \ref{thm:ullnlip}, via Lemma \ref{lem:ullnstat}.

\begin{lemma}[Application of uniform LLNs]
  \label{lem:ullnapp}
  Suppose that (SupCond), (Sup8Int), (Lip8Int) hold. Then,
  \begin{gather*}
    \lim_{L \to \infty} \sup_{\lambda} L \sum_{i=1}^{L} \hf(\lambda_{i}, \lambda)^{2} < \infty \; (a.s.), \quad
    \lim_{L \to \infty} \sup_{\lambda} \abs*{\sum_{i=1}^{L} \bu_{i} \hf(\lambda_{i}, \lambda) - u(\lambda)} = 0 \; (a.s.), \\ 
    \lim_{L \to \infty} \E\bracks*{\sup_{\lambda} L \sum_{i=1}^{L} \hf(\lambda_{i}, \lambda)^{2}} < \infty, \quad
    \lim_{L \to \infty} \E \bracks*{ \sup_{\lambda} \abs*{\sum_{i=1}^{L} \bu_{i} \hf(\lambda_{i}, \lambda) - u(\lambda)}} = 0.
  \end{gather*}
  \begin{proof}
    For the first sequence, define $s_{i,L}(\theta_{i}, \lambda) = L^{2} \hf(\lambda_{i}, \lambda)^{2}$ and $S_{L}(\lambda) = \sum_{i=1}^{L} s_{i,L}(\theta_{i}, \lambda)$. We observe that by (Sup8Int), (SupCond) and Lemma \ref{lem:cvgkernunif},
    \begin{equation*}
      \begin{aligned}
        \limsup_{L \to \infty} \sup_{i} \E{\sn{s_{i,L}(\theta_{i}, \cdot)}^{2}}
        \leq \limsup_{L \to \infty} \sup_{i} L^{4} \E\bracks*{\sup_{\lambda} \braces*{\frac{\psi_{\lambda}(\theta_{i}) p(\lambda)}{\sum_{\ell=1}^{L} \psi_{\lambda_{\ell}}(\theta_{i}) p(\lambda_{\ell})}}^{4}}
     < \infty,
      \end{aligned}
    \end{equation*}
    with an analogous argument showing $\limsup_{L \to \infty} \sup_{i} \E{\lip{s_{i,L}(\theta_{i}, \cdot)}^{2}} < \infty$. Therefore, we may apply Lemma \ref{lem:ullnstat}, and $\limsup_{L \to \infty} \sup_{\lambda} L \sum_{i=1}^{L} \hf(\lambda_{i}, \lambda)^{2} < \infty$ (a.s.). For the second sequence, define $t_{i,L}(\theta_{i}, \lambda) = L u(\lambda_{i}) \hf(\lambda_{i}, \lambda)$ and $T_{L}(\lambda) = \sum_{i=1}^{L} t_{i,L}(\theta_{i}, \lambda)$. We observe that by Lemmas \ref{lem:cvgdensunif} and \ref{lem:cvgkernunif},
    \begin{equation*}
      \begin{aligned}
        \limsup_{L \to \infty} \sup_{i} \E{\sn{t_{i,L}(\theta_{i}, \cdot)}^{2}}
        \leq \lim_{L \to \infty} \sn{\bu}^{2} \limsup_{L \to \infty} \sup_{i} L^{2} \E\bracks*{\sup_{\lambda} \braces*{\frac{\psi_{\lambda}(\theta_{i}) p(\lambda)}{\sum_{\ell=1}^{L} \psi_{\lambda_{\ell}}(\theta_{i}) p(\lambda_{\ell})}}^{2}}
        < \infty,
      \end{aligned}
    \end{equation*}
    with an analogous argument showing $\limsup_{L \to \infty} \sup_{i} \E{\lip{t_{i,L}(\theta_{i}, \cdot)}^{2}} < \infty$ under (Lip8Int), (SupCond). Therefore, Lemma \ref{lem:ullnstat} applies, and $\sup_{\lambda} |\sum_{\ell=1}^{L} u(\lambda_{\ell}) \hf(\lambda_{\ell}, \lambda)  - u(\lambda)| \xrightarrow{a.s.} 0$.
  \end{proof}
\end{lemma}

\section{Technical Lemmas}
\label{app:supp}

\subsection{Convergence of Maximizers}

\begin{proposition}
\label{prop:const-max}  
  Define $\lambda^{*} = \argmax_{\lambda \in \Lambda} u(\lambda)$ and $\widehat{\lambda} = \argmax_{\lambda \in \Lambda} \hu(\lambda)$, and suppose that $\lambda^{*}$ is identified within the compact set $\Lambda$. If
  \begin{equation*}
    \sup_{\lambda \in \Lambda} |\hu(\lambda) - u(\lambda)| \xrightarrow{a.s.} 0
  \end{equation*}
  as $N \to \infty$ or $L \to \infty$, then
  \begin{equation*}
    |\widehat{\lambda} - \lambda^{*}| \xrightarrow{a.s.} 0, \quad
    \E [|\widehat{\lambda} - \lambda^{*}|] \to 0.
  \end{equation*}
  \begin{proof}
    This is akin to standard arguments for empirical risk minimization. We first observe that if $\lambda^{*}$ is identified within the compact set $\Lambda$, then convergence of the maximizers is equivalent to convergence of the respective evaluations of $u$, i.e.
    \begin{equation*}
       u(\widehat{\lambda}) \xrightarrow{a.s.} u(\lambda^{*}),
    \end{equation*}
    We then observe that by definition,
    \begin{equation*}
      \begin{dcases} u(\lambda^{*}) - u(\widehat{\lambda}) \geq 0 \\ \hu(\widehat{\lambda}) - \hu(\lambda^{*}) \geq 0 \end{dcases},
    \end{equation*}
    and adding up those two inequalities, we find that
    \begin{equation*}
        0 \leq u(\lambda^{*}) - \hu(\lambda^{*}) + u(\widehat{\lambda}) - \hu(\widehat{\lambda}) \leq |u(\lambda^{*}) - \hu(\lambda^{*})| + \sup_{\lambda \in \Lambda} |u(\lambda) - \hu(\lambda)|,
    \end{equation*}
    where both terms on the right-hand side vanish almost surely under uniform convergence. Since the sum of the positive terms $u(\lambda^{*}) - u(\widehat{\lambda})$ and $\hu(\widehat{\lambda}) - \hu(\lambda^{*})$ vanishes, both terms vanish individually, and $u(\widehat{\lambda}) \xrightarrow{a.s.} u(\lambda^{*})$
    as required. Convergence in $L^{1}$ follows by bounded convergence.
  \end{proof}
\end{proposition}

\subsection{Further Lemmas for Theorem \ref{thm:stable-errors}}

\begin{lemma}[Convergence of Riemann sums]
  \label{lem:cvgnc}
  Let $\phi: [0, 1] \to \Re$ be Lipschitz-continuous. Then,
  \begin{equation*}
    \abs*{\int_{0}^{1} \phi \dd{\lambda} - L^{-1} \sum_{\ell=1}^{L} \phi({\ell/L})} \leq \frac{\lip{\phi}}{2L}.
  \end{equation*}
  \begin{proof}
    We split the error over $[0, 1]$ into
    \begin{equation*}
        \abs*{\int_{0}^{1} \phi \dd{\lambda} - L^{-1} \sum_{\ell=1}^{L} \phi({\ell/L})}
        \leq \sum_{l=0}^{L-1} \abs*{\int_{\ell/L}^{(\ell+1)/L} \phi \dd{\lambda} - L^{-1} \phi((\ell+1)/L)}.
    \end{equation*}
    For each of the summands,
    \begin{equation*}
      \begin{aligned}
        \abs*{\int_{\ell/L}^{(\ell+1)/L} \phi \dd{\lambda} - L^{-1} \phi((\ell+1)/L)}
        & \leq \abs*{\int_{\ell/L}^{(\ell+1)/L} (\phi(\lambda) - \phi((\ell+1)/L)) \dd{\lambda}} \\
        & \leq \int_{\ell/L}^{(\ell+1)/L} |\phi(\lambda) - \phi((\ell+1)/L)| \dd{\lambda} \\
        & \leq \int_{\ell/L}^{(\ell+1)/L} \lip{\phi} |\lambda - (\ell+1)/L| \dd{\lambda} \\
        & \leq \lip{\phi} \int_{0}^{1/L} x \dd{x} \\
        & \leq \frac{\lip{\phi}}{2L^{2}}.
      \end{aligned}
    \end{equation*}
    The result follows from summing over the mesh.
  \end{proof}
\end{lemma}

\begin{lemma}[Convergence of subsampled densities on the grid]
  \label{lem:cvgdens}
  Let $q: [0, 1] \to (0, \infty)$ be Lipschitz-continuous, and define the mass function $v = q / \sum_{\ell=1}^{L} q(\ell/L)$ and the density $\tilde{v} = q / \int_{0}^{1} q \dd{\lambda}$. Then, for $\ell = 1, \dots L$,
  \begin{gather*}
    |L v({\ell/L}) - \tilde{v}({\ell/L})| \leq \lip{\tilde{v}} v({\ell/L}) / 2 \leq  L^{-1} (\lip{\tilde{v}} / 2 + \tilde{v}({\ell/L})) \lip{\tilde{v}} / 2, \\
    v({\ell/L}) \leq L^{-1} (\lip{\tilde{v}} v({\ell/L}) / 2 + \tilde{v}({\ell/L})) \leq L^{-1} (\lip{\tilde{v}} / 2 + \tilde{v}({\ell/L})),
  \end{gather*}
  \begin{proof}
    For any $\ell = 1, \dots L$,
    \begin{equation*}
      \begin{aligned}
        |L v({\ell/L}) - \tilde{v}({\ell/L})|
        & = \abs*{\parens*{L - \frac{\sum_{\ell=1}^{L} q({\ell/L})}{\int_{0}^{1} q \dd{\lambda}}} v({\ell/L})}\\
        & \leq L \abs*{\int_{0}^{1} \tilde{v} \dd{\lambda} - L^{-1} \sum_{\ell=1}^{L} \tilde{v}({\ell/L})} v({\ell/L}) \\
        & \leq \lip{\tilde{v}} v({\ell/L}) / 2
      \end{aligned}
    \end{equation*}
    by Lemma \ref{lem:cvgnc}. This implies that
    \begin{equation*}
      \begin{aligned}
        v({\ell/L})
        & \leq L^{-1} (|L v({\ell/L}) - \tilde{v}({\ell/L})| + \tilde{v}({\ell/L})) \\
        & \leq L^{-1} (\lip{\tilde{v}} v({\ell/L}) / 2 + \tilde{v}({\ell/L})) \\
        & \leq L^{-1} (\lip{\tilde{v}} / 2 + \tilde{v}({\ell/L})),
      \end{aligned}
    \end{equation*}
    and substituting back,
    \begin{equation*}
      |L v({\ell/L}) - \tilde{v}({\ell/L})| \leq L^{-1} (\lip{\tilde{v}} / 2 + \tilde{v}({\ell/L})) \lip{\tilde{v}} / 2.
    \end{equation*}
  \end{proof}
\end{lemma}

\begin{lemma}[Convergence of subsampled moments]
  \label{lem:cvgexp}
  Let $q: [0, 1] \to (0, \infty)$ be Lipschitz-continuous, and define the mass function $v = q / \sum_{\ell=1}^{L} q(\ell/L)$ as well as the density $\tilde{v} = q / \int_{0}^{1} q \dd{\lambda}$. Then, for Lipschitz-continuous functions $\phi: [0, 1] \to \Re$ with $\sn{\phi} \leq \lip{\phi}$,
  \begin{equation*}
    |\iprod{\phi, 1}_{\bv} - \iprod{\phi, 1}_{\tilde{v}}| \leq \frac{\lip{\phi}}{2L} c_{1}, \quad
    |\iprod{\phi, \phi}_{\bv} - \iprod{\phi, \phi}_{\tilde{v}}| \leq \frac{\lip{\phi}^{2}}{L} c_{1},
  \end{equation*}
  where $c_{1} = \sn{\tilde{v}} + 2 \lip{\tilde{v}}$.
  \begin{proof}
    By the triangle inequality, the error decomposes to
    \begin{equation*}
      \begin{aligned}
        & \abs*{\sum_{\ell=1}^{L} q({\ell/L}) \phi({\ell/L}) - \int_{0}^{1} \tilde{q} \phi \dd{\lambda}} \\
        & \quad \leq \abs*{\sum_{\ell=1}^{L} (q({\ell/L}) - L^{-1} \tilde{q}({\ell/L})) \phi({\ell/L})} + \abs*{L^{-1} \sum_{\ell=1}^{L} \tilde{q}({\ell/L}) \phi({\ell/L}) - \int_{0}^{1} \tilde{q} \phi \dd{\lambda}}.
      \end{aligned}
    \end{equation*}
    We rearrange the first term by Jensen's inequality such that we can apply Lemma \ref{lem:cvgdens}, obtaining
    \begin{equation*}
      \begin{aligned}
        \abs*{\sum_{\ell=1}^{L} (v({\ell/L}) - L^{-1} \tilde{v}({\ell/L})) \phi({\ell/L})}
        & \leq L^{-1} \sum_{\ell=1}^{L} |L v({\ell/L}) - \tilde{v}({\ell/L})| |\phi({\ell/L})| \\
        & \leq L^{-1} \sn{\phi} \sum_{\ell=1}^{L} |L v({\ell/L}) - \tilde{v}({\ell/L})| \\
        & \leq L^{-1} \sn{\phi} \lip{\tilde{v}} \parens*{\sum_{\ell=1}^{L} v({\ell/L})} / 2 \\
        & \leq \frac{\sn{\phi} \lip{\tilde{v}}}{2L}.
      \end{aligned}
    \end{equation*}
    For the second term, applying Lemma \ref{lem:cvgnc} to $\tilde{v} \phi$ yields
    \begin{equation*}
      \abs*{L^{-1} \sum_{\ell=1}^{L} \tilde{v}({\ell/L}) \phi({\ell/L}) - \int_{0}^{1} \tilde{v} \phi \dd{\lambda}}
      \leq \frac{\lip{\tilde{v} \phi}}{2L}
      \leq \frac{\lip{\tilde{v}} \sn{\phi} + \sn{\tilde{v}} \lip{\phi}}{2L}.
    \end{equation*}
    In particular, where $\sn{\phi} \leq \lip{\phi}$ and defining $c_{1} = \sn{\tilde{v}} + 2 \lip{\tilde{v}}$,
    \begin{equation*}
      \abs*{\sum_{\ell=1}^{L} v({\ell/L}) \phi({\ell/L}) - \int_{0}^{1} \tilde{v} \phi \dd{\lambda}} \leq \frac{\lip{\phi}}{2L} c_{1},
    \end{equation*}
    We may immediately apply this for $\phi^{2}$, where $\lip{\phi^{2}} \leq 2 \sn{\phi} \lip{\phi} \leq 2 \lip{\phi}^{2}$, and hence
    \begin{equation*}
      \abs*{\sum_{\ell=1}^{L} v({\ell/L}) \phi^{2}({\ell/L}) - \int_{0}^{1} \tilde{v} \phi^{2} \dd{\lambda}} \leq \frac{\lip{\phi}^{2}}{L} c_{1}.
    \end{equation*}
  \end{proof}
\end{lemma}

\begin{lemma}[Convergence of Riemann sums in $L^{p}(\mu)$]
  \label{lem:cvgnclp}
  Let $\mu$ be a measure on $\Theta$, and suppose that $\varphi: \Theta \times [0, 1] \to \Re$ satisfies
  \begin{equation*}
    \sup_{\lambda', \lambda''} \frac{|\varphi(\cdot, \lambda'') - \varphi(\cdot, \lambda')|_{L^{p}(\mu)}}{|\lambda'' -  \lambda'|} \leq K_{\mu}^{p}[\varphi].
  \end{equation*}
  Then,
  \begin{equation*}
    \abs*{\int_{0}^{1} \varphi(\cdot, \lambda) \dd{\lambda} - L^{-1} \sum_{\ell=1}^{L} \varphi(\cdot, {\ell/L})}_{L^{p}(\mu)} \leq \frac{K_{\mu}^{p}[\varphi]}{2L}.
  \end{equation*}
  \begin{proof}
    Analogously to Lemma \ref{lem:cvgnc}, we split the error over $[0, 1]$ into
    \begin{equation*}
      \begin{aligned}
        \abs*{\int_{0}^{1} \varphi(\cdot, \lambda) \dd{\lambda} - L^{-1} \sum_{\ell=1}^{L} \varphi(\cdot, {\ell/L})}_{L^{p}(\mu)}
        & \leq \sum_{i=0}^{L-1} \abs*{\int_{\ell/L}^{(\ell+1)/L} \varphi(\cdot, \lambda) \dd{\lambda} - L^{-1} \varphi(\cdot, (\ell+1)/L)}_{L^{p}(\mu)}.
      \end{aligned}
    \end{equation*}
    For each of the summands, we apply the Minkowski integral inequality, giving
    \begin{equation*}
      \begin{aligned}
        & \abs*{\int_{\ell/L}^{(\ell+1)/L} \varphi(\cdot, \lambda) \dd{\lambda} - L^{-1} \varphi(\cdot, (\ell+1)/L)}_{L^{p}(\mu)} \\
        & \leq \abs*{\int_{\ell/L}^{(\ell+1)/L} (\varphi(\cdot, \lambda) - \varphi(\cdot, (\ell+1)/L)) \dd{\lambda}}_{L^{p}(\mu)} \\
        & \leq \int_{\ell/L}^{(\ell+1)/L} |\varphi(\cdot, \lambda) - \varphi(\cdot, (\ell+1)/L)|_{L^{p}(\mu)} \dd{\lambda} \\
        & \leq \int_{\ell/L}^{(\ell+1)/L} K_{\mu}^{p}[\varphi] |\lambda - (\ell+1)/L| \dd{\lambda} \\
        & \leq K_{\mu}^{p}[\varphi] \int_{0}^{1/L} x \dd{x} \\
        & \leq \frac{K_{\mu}^{p}[\varphi]}{2L^{2}}.
      \end{aligned}
    \end{equation*}
    The result follows from summing over the mesh.
  \end{proof}
\end{lemma}

\begin{lemma}[Convergence of subsampled kernels on the grid]
  \label{lem:cvgkern}
  Let $\mu$ be a measure on $\Theta$, $\varphi: \Theta \times [0, 1] \to (0, \infty)$, $\kappa(\theta, \lambda) = \varphi(\theta, \lambda) / \sum_{\ell=1}^{L} \varphi(\theta, \ell/L)$, $\tilde{\kappa}(\theta, \lambda) = \varphi / \int_{0}^{1} \varphi(\theta, \lambda') \dd{\lambda'}$, and define $K$ as in Lemma \ref{lem:cvgnclp}. Then,
  \begin{gather*}
    |L \kappa(\cdot, {\ell/L}) - \tilde{\kappa}(\cdot, {\ell/L})|_{L^{p}(\mu)} \leq \frac{K_{\mu}^{2p}[\tilde{\kappa}] (K_{\mu}^{2p}[\tilde{\kappa}] / 2 + |\tilde{\kappa}(\cdot, {\ell/L})|_{L^{2p}(\mu)})}{2L}, \\
    |\kappa(\cdot, {\ell/L})|_{L^{p}(\mu)} \leq \frac{|\tilde{\kappa}(\cdot, {\ell/L})|_{L^{p}(\mu)} + K_{\mu}^{p}[\tilde{\kappa}] / 2}{L},
  \end{gather*}
  \begin{proof}
    We begin by factorizing the error:
    \begin{equation*}
      \begin{aligned}
        L \kappa(\theta, \lambda) - \tilde{\kappa}(\theta, \lambda)
        & = \parens*{L - \frac{\sum_{\ell=1}^{L} \varphi(\theta, {\ell/L})}{\int_{0}^{1} \varphi(\theta, \lambda') \dd{\lambda'}}} \kappa(\theta, \lambda) \\
        & = L \parens*{\int_{0}^{1} \tilde{\kappa}(\theta, \lambda') \dd{\lambda'} - L^{-1} \sum_{\ell=1}^{L} \tilde{\kappa}(\theta, {\ell/L})} \kappa(\theta, \lambda),
      \end{aligned}
    \end{equation*}
    and appplying the Cauchy-Schwarz inequality and Lemma \ref{lem:cvgnclp},
    \begin{equation*}
      \begin{aligned}
        |L \kappa(\cdot, \lambda') - \tilde{\kappa}(\cdot, \lambda')|_{L^{p}(\mu)}
        & \leq L \abs*{\int_{0}^{1} \tilde{\kappa}(\theta, \lambda) \dd{\lambda} - L^{-1} \sum_{\ell=1}^{L} \tilde{\kappa}(\theta, {\ell/L})}_{L^{2p}(\mu)} |\kappa(\cdot, \lambda')|_{L^{2p}(\mu)} \\
        & \leq |\kappa(\cdot, \lambda')|_{L^{2p}(\mu)} K_{\mu}^{2p}[\tilde{\kappa}] / 2.
      \end{aligned}
    \end{equation*}
    Analogously $|L \kappa(\cdot, \lambda') - \tilde{\kappa}(\cdot, \lambda')|_{L^{p}(\mu)} \leq |\kappa(\cdot, \lambda')|_{L^{\infty}(\mu)} K_{\mu}^{p}[\tilde{\kappa}] / 2$ by Hoelder's inequality. In particular, notice that $\kappa(\theta, {\ell/L}) \leq 1$, so $|L \kappa(\cdot, {\ell/L}) - \tilde{\kappa}(\cdot, {\ell/L})|_{L^{p}(\mu)} \leq K_{\mu}^{p}[\tilde{\kappa}] / 2$, and by the triangle inequality,
    \begin{equation*}
      |\kappa(\cdot, {\ell/L})|_{L^{p}(\mu)} \leq L^{-1} (|\tilde{\kappa}(\cdot, {\ell/L})|_{L^{p}(\mu)} + K_{\mu}^{p}[\tilde{\kappa}] / 2).
    \end{equation*}
    Inserting that back,
    \begin{equation*}
      |L \kappa(\cdot, {\ell/L}) - \tilde{\kappa}(\cdot, {\ell/L})|_{L^{p}(\mu)} \leq \frac{K_{\mu}^{2p}[\tilde{\kappa}] (K_{\mu}^{2p}[\tilde{\kappa}] / 2 + |\tilde{\kappa}(\cdot, {\ell/L})|_{L^{2p}(\mu)})}{2L}.
    \end{equation*}
  \end{proof}
\end{lemma}

\begin{lemma}[Strong convergence of subsampled Markov operators]
  \label{lem:cvgop}
  Define all symbols as in Lemma \ref{lem:cvgkern}, let $\pi_{\lambda}(\dd{\theta}) \propto \varphi(\theta, \lambda) \dd{\theta}$ be measures on $\Theta$, $f(\lambda, \lambda') = |\kappa(\cdot, \lambda')|_{L^{1}(\pi_{\lambda})}$ as well as $\tilde{f}(\lambda, \lambda') = |\tilde{\kappa}(\cdot, \lambda')|_{L^{1}(\pi_{\lambda})}$. Moreover, define the operators $(F\phi)(\lambda) = \sum_{\ell=1}^{L} f(\lambda, \ell/L) \phi(\ell/L)$ and $(\tilde{F}\phi)(\lambda) = \int_{0}^{1} \tilde{f}(\lambda, \lambda') \phi(\lambda') \dd{\lambda'}$. Then, for Lipschitz-continuous $\phi: [0, 1] \to \Re$ with $\sn{\phi} \leq \lip{\phi}$,
  \begin{equation*}
    \sn{F\phi - \tilde{F}\phi} \leq \frac{\lip{\phi}}{2L} c_{2}, \quad
    \sn{F\phi^{2} - \tilde{F}\phi^{2}} \leq \frac{\lip{\phi}^{2}}{L} c_{2},
  \end{equation*}
  where $c_{2} = \sup_{\lambda, \lambda'} \braces{K_{\pi_{\lambda}}^{2}[\tilde{\kappa}] (K_{\pi_{\lambda}}^{2}[\tilde{\kappa}] / 2 + |\tilde{\kappa}(\cdot, \lambda')|_{L^{2}(\pi_{\lambda})}) + \tilde{f}(\lambda, \lambda') + K_{\pi_{\lambda}}^{1}[\tilde{\kappa}]}$.
  \begin{proof}
    By Jensen's inequality,
    \begin{equation*}
      \begin{aligned}
        |(F\phi)(\lambda) - (\tilde{F}\phi)(\lambda)|
        & = \abs*{\int \parens*{\sum_{\ell=1}^{L} \kappa(\theta, {\ell/L}) \phi({\ell/L}) - \int \tilde{\kappa}(\theta, \lambda') \phi(\lambda') \dd{\lambda'}} \dd{\pi_{\lambda}}}, \\
        & \leq \abs*{\sum_{\ell=1}^{L} \kappa(\cdot, {\ell/L}) \phi({\ell/L}) - \int \tilde{\kappa}(\cdot, \lambda') \phi(\lambda') \dd{\lambda'}}_{L^{1}(\pi_{\lambda})},
      \end{aligned}
    \end{equation*}
    so we may follow the course of Lemma \ref{lem:cvgexp} inside the integral. By the triangle inequality, the error decomposes to
    \begin{equation*}
      \begin{aligned}
        & \abs*{\sum_{\ell=1}^{L} \kappa(\cdot, {\ell/L}) \phi({\ell/L}) - \int_{0}^{1} \tilde{\kappa}(\cdot, \lambda') \phi(\lambda') \dd{\lambda'}}_{L^{1}(\pi_{\lambda})} \\
        & \quad \leq \abs*{\sum_{\ell=1}^{L} (\kappa(\cdot, {\ell/L}) \phi({\ell/L}) - L^{-1} \tilde{\kappa}(\cdot, {\ell/L}) \phi({\ell/L}))}_{L^{1}(\pi_{\lambda})} \\
        & \qquad + \abs*{L^{-1} \sum_{\ell=1}^{L} \tilde{\kappa}(\cdot, {\ell/L}) \phi({\ell/L}) - \int_{0}^{1} \tilde{\kappa}(\cdot, \lambda') \phi(\lambda') \dd{\lambda'}}_{L^{1}(\pi_{\lambda})}.
      \end{aligned}
    \end{equation*}
    We rearrange the first term such that we can apply Lemma \ref{lem:cvgkern}, obtaining
    \begin{equation*}
      \begin{aligned}
        \abs*{\sum_{\ell=1}^{L} (\kappa(\cdot, {\ell/L}) \phi({\ell/L}) - L^{-1} \tilde{\kappa}(\cdot, {\ell/L}) \phi({\ell/L}))}_{L^{1}(\pi_{\lambda})}
        & \leq L^{-1} \sum_{\ell=1}^{L} |\phi({\ell/L})| |L \kappa(\cdot, {\ell/L}) - \tilde{\kappa}(\cdot, {\ell/L})|_{L^{1}(\pi_{\lambda})}  \\
        & \leq \sn{\phi} \sup_{\ell = 1,\dots,L} |L \kappa(\cdot, {\ell/L}) - \tilde{\kappa}(\cdot, {\ell/L})|_{L^{1}(\pi_{\lambda})} \\
        & \leq \sn{\phi} \frac{K_{\pi_{\lambda}}^{2}[\tilde{\kappa}] (K_{\pi_{\lambda}}^{2}[\tilde{\kappa}] / 2 + \sup_{\lambda'} |\tilde{\kappa}(\cdot, \lambda')|_{L^{2}(\pi_{\lambda})})}{2L}.
      \end{aligned}
    \end{equation*}
    As to the second term, we observe that
    \begin{equation*}
      \begin{aligned}
        \sup_{\lambda', \lambda''} \frac{|\tilde{\kappa}(\cdot, \lambda'') \phi(\lambda'') - \tilde{\kappa}(\cdot, \lambda') \phi(\lambda')|_{L^{1}(\pi_{\lambda})}}{|\lambda'' - \lambda'|} \leq \lip{\phi} \sup_{\lambda'} \tilde{f}(\lambda, \lambda') + \sn{\phi} \sup_{\lambda} K_{\pi_{\lambda}}^{1}[\tilde{\kappa}],
      \end{aligned}
    \end{equation*}
    and so by Lemma \ref{lem:cvgnclp}
    \begin{equation*}
      \abs*{L^{-1} \sum_{\ell=1}^{L} \tilde{\kappa}(\cdot, {\ell/L}) \phi({\ell/L}) - \int_{0}^{1} \tilde{\kappa}(\cdot, \lambda') \phi(\lambda') \dd{\lambda'}}_{L^{1}(\pi_{\lambda})} \leq \frac{\lip{\phi} \sup_{\lambda'} \tilde{f}(\lambda, \lambda') + \sn{\phi} K_{\pi_{\lambda}}^{1}[\tilde{\kappa}]}{2L}.
    \end{equation*}
    Defining
    \begin{equation*}
      c_{2} = \sup_{\lambda, \lambda'} \braces{K_{\pi_{\lambda}}^{2}[\tilde{\kappa}] (K_{\pi_{\lambda}}^{2}[\tilde{\kappa}] / 2 + |\tilde{\kappa}(\cdot, \lambda')|_{L^{2}(\pi_{\lambda})}) + \tilde{f}(\lambda, \lambda') + K_{\pi_{\lambda}}^{1}[\tilde{\kappa}]},
    \end{equation*}
    and noting that $\sn{\phi} < \lip{\phi}$,
    \begin{equation*}
      |(F\phi)(\lambda) - (\tilde{F}\phi)(\lambda)| \leq \frac{\lip{\phi}}{2L} c_{2}.
    \end{equation*}
    We may apply this to $\phi^{2}$, which has best Lipschitz constant $\lip{\phi^{2}} \leq 2 \sn{\phi} \lip{\phi} \leq 2 \lip{\phi}^{2}$, and therefore $|(F\phi^{2})(\lambda) - (\tilde{F}\phi^{2})(\lambda)| \leq L^{-1} \lip{\phi}^{2} c_{2}$.
  \end{proof}
\end{lemma}

\begin{lemma}[Convergence of subsampled quadratic forms]
  \label{lem:cvgqf}
  Define all symbols as in Lemmas \ref{lem:cvgexp} and \ref{lem:cvgop}, and let $q(\lambda) = \int \varphi(\theta, \lambda) \dd{\theta} < \infty$. Then, for Lipschitz-continuous functions $\phi: [0, 1] \to \Re$ with $\sn{\phi} \leq \lip{\phi}$,
  \begin{equation*}
    |\iprod{F\phi, F\phi}_{\bv} - \iprod{\tilde{F}\phi, \tilde{F}\phi}_{\tilde{v}}| \leq \frac{\lip{\phi}^{2}}{L} c_{3},
  \end{equation*}
  where $c_{3} = c_{2} + \sn{1/\tilde{v}}^{2} (\sup_{\lambda} K_{\pi_{\lambda}}^{1}[\tilde{\kappa}] + \lip{\tilde{v}})^{2} c_{1}$.
  \begin{proof}
    By the triangle inequality,
    \begin{equation*}
      |\iprod{F\phi, F\phi}_{\bv} - \iprod{\tilde{F}\phi, \tilde{F}\phi}_{\tilde{v}}| \leq |\iprod{\tilde{F}\phi, \tilde{F}\phi}_{\bv} - \iprod{\tilde{F}\phi, \tilde{F}\phi}_{\tilde{v}}| + |\iprod{F\phi, F\phi}_{\bv} - \iprod{\tilde{F}\phi, \tilde{F}\phi}_{\bv}|.
    \end{equation*}
    By Lemma \ref{lem:cvgexp}, the first term admits the bound
    \begin{equation*}
      |\iprod{\tilde{F}\phi, \tilde{F}\phi}_{\bv} - \iprod{\tilde{F}\phi, \tilde{F}\phi}_{\tilde{v}}| \leq \frac{\lip{\tilde{F}\phi}^{2}}{L} c_{1},
    \end{equation*}
    so we must analyze $\lip{\tilde{F}\phi}$. Observing that $\tilde{f}(\lambda, \lambda') = \tilde{f}(\lambda', \lambda) \tilde{v}(\lambda') / \tilde{v}(\lambda)$, we find that
    \begin{equation*}
      \begin{aligned}
        \tilde{f}(\lambda', \lambda) - \tilde{f}(\lambda'', \lambda)
        & = \tilde{v}(\lambda) \parens*{\frac{\tilde{f}(\lambda, \lambda')}{\tilde{v}(\lambda')} - \frac{\tilde{f}(\lambda, \lambda'')}{\tilde{v}(\lambda'')}} \\
        & = \tilde{v}(\lambda) \parens*{\frac{\tilde{f}(\lambda, \lambda') - \tilde{f}(\lambda, \lambda'')}{\tilde{v}(\lambda')} + \tilde{f}(\lambda, \lambda'') \parens*{\frac{1}{\tilde{v}(\lambda')} - \frac{1}{\tilde{v}(\lambda'')}}} \\
        & = \tilde{v}(\lambda) \parens*{\frac{\tilde{f}(\lambda, \lambda') - \tilde{f}(\lambda, \lambda'')}{\tilde{v}(\lambda')} + \frac{\tilde{f}(\lambda'', \lambda)}{\tilde{v}(\lambda)} \parens*{\frac{\tilde{v}(\lambda'')}{\tilde{v}(\lambda')} - 1}} \\
        & = \frac{\tilde{v}(\lambda)}{\tilde{v}(\lambda')} (\tilde{f}(\lambda, \lambda') - \tilde{f}(\lambda, \lambda'')) + \frac{\tilde{f}(\lambda'', \lambda)}{\tilde{v}(\lambda')} (\tilde{v}(\lambda'') - \tilde{v}(\lambda')),
      \end{aligned}
    \end{equation*}
    and
    \begin{equation*}
      \begin{aligned}
        \lip{\tilde{F}\phi}
        & = \sup_{\lambda', \lambda''} \abs*{\frac{\int_{0}^{1} \phi(\lambda) (\tilde{f}(\lambda', \lambda) - \tilde{f}(\lambda'', \lambda)) \dd{\lambda}}{\lambda' - \lambda''}} \\
        & = \sup_{\lambda', \lambda''} \abs*{\frac{\int_{0}^{1} \phi(\lambda) \tilde{v}(\lambda) (\tilde{f}(\lambda, \lambda') - \tilde{f}(\lambda, \lambda'')) \dd{\lambda} + \int_{0}^{1} \phi(\lambda) \tilde{f}(\lambda'', \lambda) (\tilde{v}(\lambda'') - \tilde{v}(\lambda')) \dd{\lambda}}{\tilde{v}(\lambda')(\lambda' - \lambda'')}} \\
        & \leq \sn{1/\tilde{v}} \int_{0}^{1} p(\lambda) \abs{\phi(\lambda)} \sup_{\lambda', \lambda''} \abs*{\frac{\tilde{f}(\lambda, \lambda') - \tilde{f}(\lambda, \lambda'')}{\lambda' - \lambda''}} \dd{\lambda} \\
        & \qquad + \sn{1/\tilde{v}} \sup_{\lambda', \lambda''} \abs*{\frac{\tilde{v}(\lambda') - \tilde{v}(\lambda'')}{\lambda' - \lambda''}} \sup_{\lambda''} \abs*{\int_{0}^{1} \phi(\lambda) \tilde{f}(\lambda'', \lambda) \dd{\lambda}} \\
        & \leq \sn{1/\tilde{v}} \parens*{\int_{0}^{1} \tilde{f}(\lambda) \abs{\phi(\lambda)} \lip{\tilde{f}(\lambda, \cdot)} \dd{\lambda} + \lip{\tilde{v}} |\tilde{F}\phi|_{\infty}} \\
        & \leq \sn{1/\tilde{v}} \parens*{\sup_{\lambda} \lip{\tilde{f}(\lambda, \cdot)} \int_{0}^{1} \tilde{v} \abs{\phi} \dd{\lambda} + \lip{\tilde{v}} \lip{\phi}} \\
        & \leq \lip{\phi} \sn{1/\tilde{v}} (\sup_{\lambda} K_{\pi_{\lambda}}^{1}[\tilde{\kappa}] + \lip{\tilde{v}}),
      \end{aligned}
    \end{equation*}
    which, in terms of $\lip{\phi}$, yields the bound
    \begin{equation*}
      |\iprod{\tilde{F}\phi, \tilde{F}\phi}_{\bv} - \iprod{\tilde{F}\phi, \tilde{F}\phi}_{\tilde{v}}| \leq \frac{\lip{\phi}^{2}}{L} \sn{1/\tilde{v}}^{2} (\sup_{\lambda} K_{\pi_{\lambda}}^{1}[\tilde{\kappa}] + \lip{\tilde{v}})^{2} c_{1}.
    \end{equation*}
    As for the second term,
    \begin{equation*}
      \begin{aligned}
        |\iprod{F\phi, F\phi}_{\bv} - \iprod{\tilde{F}\phi, \tilde{F}\phi}_{\bv}|
        & \leq \sum_{\ell=1}^{L} v({\ell/L}) |(F\phi)^{2}({\ell/L}) - (\tilde{F}\phi)^{2}({\ell/L})| \\
        & \leq \sum_{\ell=1}^{L} v({\ell/L}) |(F\phi)({\ell/L}) + (\tilde{F}\phi)({\ell/L})| |(F\phi)({\ell/L}) - (\tilde{F}\phi)({\ell/L})|.
      \end{aligned}
    \end{equation*}
    Since $|(F\phi)({\ell/L}) + (\tilde{F}\phi)({\ell/L})| \leq 2 \sn{\phi} \leq 2 \lip{\phi}$ and $|(F\phi)({\ell/L}) - (\tilde{F}\phi)({\ell/L})| \leq \lip{\phi} c_{2} / (2L)$ by Lemma \ref{lem:cvgop},
    \begin{equation*}
      |\iprod{F\phi, F\phi}_{\bv} - \iprod{\tilde{F}\phi, \tilde{F}\phi}_{\bv}| \leq \frac{\lip{\phi}^{2}}{L} c_{2}.
    \end{equation*}
    We define
    \begin{equation*}
      c_{3} = c_{2} + \sn{1/\tilde{v}}^{2} (\sup_{\lambda} K_{\pi_{\lambda}}^{1}[\tilde{\kappa}] + \lip{\tilde{v}})^{2} c_{1},
    \end{equation*}
    and conclude that
    \begin{equation*}
      |\iprod{F\phi, F\phi}_{\bv} - \iprod{\tilde{F}\phi, \tilde{F}\phi}_{\tilde{v}}| \leq \frac{\lip{\phi}^{2}}{L} c_{3}.
    \end{equation*}
  \end{proof}
\end{lemma}

\begin{lemma}[Smoothness of second eigenvectors]
  \label{lem:smoothev}
  Define all symbols as in Lemma \ref{lem:cvgqf}, and let $\bome$ be an eigenvector of $\bF^{2}$ corresponding to the second-largest eigenvalue $\varepsilon_{2}$, satisfying
  \begin{equation*}
    \varepsilon_{2} \bome = \bF^{2}{}\T \bome.
  \end{equation*}
  If $\lip{\tilde{v}}$, $\sup_{\lambda} K_{\pi_{\lambda}}^{2}[\tilde{\kappa}]$, $\sup_{\lambda, \lambda'} |\tilde{\kappa}|_{L^{2}(\pi_{\lambda})}$ are finite, the linear interpolant $\omega$ of $\bome$ has best Lipschitz constant satisfying
    \begin{equation*}
      \limsup_{L \to \infty} \lip{\omega} < \infty.
    \end{equation*}
    Moreover, $\sn{\omega} \leq \lip{\omega}$. Similarly, the limiting analogue $\tilde{\omega}$ satisfies $\lip{\tilde{\omega}} \leq \tilde{\varepsilon}_{2}^{-1} \sn{1/\tilde{v}} \sup_{\lambda} K_{\pi_{\lambda}}^{1}[\tilde{\kappa}]$ and $|\tilde{\omega}|_{\infty} \leq \lip{\tilde{\omega}}$.
  \begin{proof}
    We begin by investigating the smoothness of $\tilde{\omega}$. Defining $\tilde{f}^{(2)}$ as the kernel corresponding to $\tilde{F}^{2}$,
    \begin{equation*}
      \begin{aligned}
        \lip{\tilde{\omega}}
        & = \tilde{\varepsilon}_{2}^{-1} \lip*{\int_{0}^{1} \tilde{f}^{(2)}(\lambda, \lambda') \tilde{\omega}(\lambda) \dd{\lambda}} \\
        & \leq \tilde{\varepsilon}_{2}^{-1} \int_{0}^{1} \lip{\tilde{f}^{(2)}(\lambda, \cdot)} |\tilde{\omega}(\lambda)| \dd{\lambda} \\
        & \leq \tilde{\varepsilon}_{2}^{-1} \sup_{\lambda} \lip{\tilde{f}^{(2)}(\lambda, \cdot)} \int_{0}^{1} |\tilde{\omega}| \dd{\lambda}.
      \end{aligned}
    \end{equation*}
    Going term by term, by Jensen's and Holder's inequalities,
    \begin{equation*}
      \sup_{\lambda} \lip{\tilde{f}^{(2)}(\lambda, \cdot)} \leq \sup_{\lambda} \int_{0}^{1} \tilde{f}(\lambda, \lambda') \lip{\tilde{f}(\lambda', \cdot)} \dd{\lambda'} \leq \sup_{\lambda} \lip{\tilde{f}(\lambda, \cdot)},
    \end{equation*}
    so smoothness of $\tilde{f}$ transfers to $\tilde{f}^{(2)}$, and again by Jensen's inequality,
    \begin{equation*}
      \begin{aligned}
        \int_{0}^{1} |\tilde{\omega}| \dd{\lambda}
        \leq \sn{1/\tilde{v}} \int_{0}^{1} \tilde{v} |\tilde{\omega}| \dd{\lambda}
        \leq \sn{1/\tilde{v}} \sqrt{\int_{0}^{1} \tilde{v} \tilde{\omega}^{2} \dd{\lambda}}
        = \sn{1/\tilde{v}},
      \end{aligned}
    \end{equation*}
    where we used the constraint $\int_{0}^{1} \tilde{v} \tilde{\omega}^{2} \dd{\lambda} = 1$. Therefore, $\tilde{\omega}$ is Lipschitz-continuous with
    \begin{equation*}
      \lip{\tilde{\omega}} \leq \tilde{\varepsilon}_{2}^{-1} \sn{1/\tilde{v}} \sup_{\lambda} \lip{\tilde{f}(\lambda, \cdot)}.
    \end{equation*}
    We also note that since $\iprod{\tilde{\omega}, 1}_{\tilde{v}} = 0$, $\tilde{\omega}$ necessarily crosses 0, and therefore $\sn{\tilde{\omega}} \leq \lip{\tilde{\omega}}$. Having shown the smoothness of the limiting eigenfunction, we now address the ``smoothness'' of the eigenvectors $\bome$. Consider that the linear interpolant $\omega$ of $\bome$ has best Lipschitz constant
    \begin{equation*}
      \begin{aligned}
        \lip{\omega}
        = \sup_{k = 0, \dots, L-1} \frac{|\omega((k+1)/L) - \omega(k/L)|}{L^{-1}},
      \end{aligned}
    \end{equation*}
    corresponding to the maximum slope in between elements of $\bome$. Defining $f^{(2)}(\lambda, \lambda') = \sum_{\ell=1}^{L} f(\lambda, \ell/L) f(\ell/L, \lambda')$, we use the eigenvector equation $\varepsilon_{2} \bome = \bF^{2}{}\T \bome$ and Holder's inequality to decompose the maximum slope into
    \begin{equation*}
      \begin{aligned}
        \lip{\omega}
        & = \frac{1}{\varepsilon_{2}} \sup_{k = 0, \dots, L-1} \frac{\abs*{\sum_{j=1}^{L} \omega(j/L) (f^{(2)}(j/L, (k+1)/L) - f^{(2)}(j/L, k/L))}}{L^{-1}} \\
        & \leq \frac{|\bome|_{1}}{\varepsilon_{2}} \sup_{\lambda, k = 0, \dots, L-1} L |f^{(2)}(\lambda, (k+1)/L) - f^{(2)}(\lambda, k/L)|.
      \end{aligned}
    \end{equation*}
    Using the triangle inequality, the slope of $f^{(2)}$ may be rewritten as
    \begin{equation*}
      \begin{aligned}
        & L |f^{(2)}(\lambda, (k+1)/L) - f^{(2)}(\lambda, k/L)| \\
        & = L \abs*{\sum_{\ell=1}^{L} f(\lambda, \ell/L) (f(\ell/L, (k+1)/L) - f(\ell/L, k/L))} \\
        & = L \abs*{\sum_{\ell=1}^{L} f(\lambda, \ell/L)} \sup_{\lambda} |f(\lambda, (k+1)/L) - f(\lambda, k/L)| \\
        & \leq \sup_{\lambda} L |f(\lambda, (k+1)/L) - f(\lambda, k/L))| \\
        & \leq 2 \sup_{\lambda, \ell} |L f(\lambda, \ell/L) - \tilde{f}(\lambda, \ell/L)| + \sup_{\lambda, k = 0, \dots, L-1} |\tilde{f}(\lambda, (k+1)/L) - \tilde{f}(\lambda, k/L)| \\
        & \leq 2 \sup_{\lambda, \ell} |L f(\lambda, \ell/L) - \tilde{f}(\lambda, \ell/L)| + L^{-1} \sup_{\lambda} \lip{\tilde{f}(\lambda, \cdot)},
      \end{aligned}
    \end{equation*}
    while by Jensen's inequality,
    \begin{equation*}
      \begin{aligned}
        |\bome|_{1}
        \leq \frac{\sum_{\ell=1}^{L} v_{\ell} |\omega_{\ell}|}{\inf_{\ell} v(\ell/L)}
        \leq \frac{\sqrt{\sum_{\ell=1}^{L} v_{\ell} |\omega_{\ell}|^{2}}}{\inf_{\ell} v(\ell/L)}
        = \frac{1}{\inf_{\ell} v(\ell/L)},
      \end{aligned}
    \end{equation*}
    where we used the constraint $\sum_{\ell=1}^{L} v_{\ell} |\omega_{\ell}|^{2} = 1$. In conjunction,
    \begin{equation*}
        \lip{\omega} \leq \frac{2 \sup_{\lambda, \ell} L |L f(\lambda, \ell/L) - \tilde{f}(\lambda, \ell/L)| + \sup_{\lambda} \lip{\tilde{f}(\lambda, \cdot)}}{\varepsilon_{2} \inf_{\ell} L v(\ell/L)},
    \end{equation*}
    where $\lim_{L \to \infty} \inf_{\ell} Lv(\ell/L) > \infty$ by Lemma \ref{lem:cvgdens} and $\limsup_{L \to \infty} \sup_{\lambda, \ell} L |L f(\lambda, \ell/L) - \tilde{f}(\lambda, \ell/L)| < \infty$ by Lemma \ref{lem:cvgkern}. Regarding the eigenvalue, we observe that for any function $\phi$,
    \begin{equation*}
      \varepsilon_{2} \geq \frac{\iprod{F \phi, F \phi}_{\bv} - \iprod{\phi, 1}_{\bv}^{2}}{\iprod{\phi, \phi}_{\bv} - \iprod{\phi, 1}_{\bv}^{2}}.
    \end{equation*}
    In particular, if we consider the limiting eigenfunction $\tilde{\omega}$, for which we've established $\lip{\omega} < \infty$, then by Lemmas \ref{lem:cvgexp} and \ref{lem:cvgqf}
    \begin{equation*}
      \begin{aligned}
        \liminf_{L \to \infty} \varepsilon_{2}
        & \geq \frac{\lim_{L \to \infty} \iprod{F \tilde{\omega}, F \tilde{\omega}}_{\bv} - \lim_{L \to \infty} \iprod{\tilde{\omega}, 1}_{\bv}^{2}}{\lim_{L \to \infty} \iprod{\tilde{\omega}, \tilde{\omega}}_{\bv} - \lim_{L \to \infty} \iprod{\tilde{\omega}, 1}_{\bv}^{2}} \\
        & \geq \frac{\iprod{\tilde{F}\tilde{\omega}, \tilde{F}\tilde{\omega}}_{\tilde{v}} - \iprod{\tilde{\omega}, 1}_{\tilde{v}}^{2}}{\iprod{\tilde{\omega}, \tilde{\omega}}_{\tilde{v}} - \iprod{\tilde{\omega}, 1}_{\tilde{v}}^{2}} \\
        & \geq \tilde{\varepsilon}_{2}.
      \end{aligned}
    \end{equation*}
    $\tilde{\varepsilon}_{2}$ being positive by Lemma \ref{lem:gapexist}, $\limsup_{L \to \infty} \lip{\omega} < \infty$.
  \end{proof}
\end{lemma}

\begin{lemma}[Existence of limiting spectral gap]
  \label{lem:gapexist}
  Let all symbols be defined as in Lemma \ref{lem:cvgqf}, and suppose that $\tilde{f}$ induces an irreducible Gibbs sampler. If $\int \sup_{\lambda} \varphi(\theta, \lambda) \dd{\theta} < \infty$, the Gibbs sampler has positive absolute spectral gap.
  \begin{proof}
    By \cite{liu1995covariance}, if the Gibbs sampler on $\Theta \times \Lambda$ induced by $\tilde{f}$ is ergodic, a sufficient condition for a positive absolute spectral gap in a 2-component Gibbs sampler is
    \begin{equation*}
      \int \tilde{f}(\lambda, \lambda) \dd{\lambda} < \infty.
    \end{equation*}
    We may bound this quantity by
    \begin{equation*}
      \begin{aligned}
        \int \tilde{f}(\lambda, \lambda) \dd{\lambda}
        & = \iint \frac{\varphi^{2}(\theta, \lambda)}{\int \varphi(\theta, \lambda) \dd{\theta} \int \varphi(\theta, \lambda') \dd{\lambda'}} \dd{\lambda} \dd{\theta} \\
        & \leq \frac{1}{\inf_{\lambda} \int \varphi(\theta, \lambda) \dd{\theta}} \int \sup_{\lambda} \varphi(\theta, \lambda) \frac{\int \varphi(\theta, \lambda') \dd{\lambda'}}{\int \varphi(\theta, \lambda') \dd{\lambda'}} \dd{\theta} \\
        & \leq \frac{\int \sup_{\lambda} \varphi(\theta, \lambda) \dd{\theta}}{\inf_{\lambda} \int \varphi(\theta, \lambda) \dd{\theta}}.
      \end{aligned}
    \end{equation*}
  \end{proof}
\end{lemma}

\begin{lemma}[Spectral gap bound for $|(I - F)^{\#}|_{2}$]
  \label{lem:gapbound}
  Let $F$ be a reversible stochastic matrix with absolute spectral gap $\gamma > 0$ and stationary vector $v$. Then,
  \begin{equation*}
    |(I - F)^{\#}|_{2} \leq \frac{\sqrt{|v|_{\infty} |1/v|_{\infty}}}{\gamma}.
  \end{equation*}
  \begin{proof}
    To begin with, we observe that since $F$ is in detailed balance, it can be diagonalized to
    \begin{equation*}
      F = \sqrt{V} W E W^{T} \sqrt{V^{-1}},
    \end{equation*}
    where $V = \diag{v}$, $W$ is an orthonormal matrix, and $E$ is the diagonal matrix of eigenvalues of $F$. The same diagonalization applies to $(I - F)^{\#}$ by way of its Neumann sum representation,
    \begin{equation*}
      \begin{aligned}
        (I - F)^{\#}
        & = \sum_{k=1}^{\infty} (F - F^{\infty})^{k}.
      \end{aligned}
    \end{equation*}
    where $F^{\infty} = 1 v\T$. Since $F$ and $F^{\infty}$ share eigenvectors, so does their difference, and powers and sums thereof, and $(I - F)^{\#}$ is diagonalizable to
    \begin{equation}
    \label{eq:invdiag}
      (I - F)^{\#} = \sqrt{V} W E^{\#} W\T \sqrt{V^{-1}}, \quad
      E^{\#} = \sum_{k=1}^{\infty} (E - E^{\infty})^{k},
    \end{equation}
    where $E^{\#}$ is again a diagonal matrix. Moreover, $E_{ii}^{\infty} = 1$ iff $E_{ii} = 1$ and $E_{ii}^{\infty} = 0$ otherwise, and we find that the elements of $E^{\#}$ are given by
    \begin{equation*}
      E_{ii}^{\#} =
      \begin{dcases}
        0 & E_{ii} = 1 \\
        1 / (1 - E_{ii}) & \text{otherwise}
      \end{dcases},
    \end{equation*}
    and therefore the spectral radius of $(I - F)^{\#}$ is $1 / (1 - \max_{i: E_{ii} \neq  1} |E_{ii}|) = 1 / \gamma$. We may now use the representation \eqref{eq:invdiag} to bound the spectral norm by the spectral radius, i.e.
    \begin{equation*}
    \begin{aligned}
      |(I - F)^{\#}|_{2}
      & = \abs{\sqrt{V} W E^{\#} W\T \sqrt{V^{-1}}}_{2} \\
      & \leq |\sqrt{V}|_{2} |\sqrt{V^{-1}}|_{2} |W E^{\#} W\T|_{2} \\
      & \leq \gamma^{-1} \sqrt{|v|_{\infty} |1/v|_{\infty}},
    \end{aligned}
  \end{equation*}
  where $W E^{\#} W\T$ is symmetrical, and therefore its spectral norm is equal to its spectral radius.
  \end{proof}
\end{lemma}

\begin{lemma}[Scalings of $\braces{\hbF (\hbF - \bF)}_{ij}$]
  \label{lem:randop2}
  Define all symbols as in Lemma \ref{lem:cvgop} and let $\hbF_{ij} = \kappa(\theta_{i}, j/L)$ for $\theta_{i} \sim \pi_{i/L}$. Then,
  \begin{equation*}
    \limsup_{L \to \infty} L^{6} \E\bracks{\braces{\hbF \bE}_{ij}^{4}}
    \leq \parens*{\sup_{\lambda, \lambda'}|\tilde{\kappa}(\cdot, \lambda')|_{L^{8}(\pi_{\lambda})} + \sup_{\lambda} K_{\pi_{\lambda}}^{8}[\tilde{\kappa}] / 2}^{8}.
  \end{equation*}
  \begin{proof}
    For brevity, we provide the proof for $\limsup_{L \to \infty} L^{3} \E\bracks{\braces{\hat{\bF} (\hbF - \bF)}_{ij}^{2}}
    < \infty$. We write $\bE = \hbF - \bF$, where $\E\bracks{\bE} = 0$. We expand $\braces{\hbF \bE}_{ij}^{2}$ to
    \begin{equation*}
      \braces{\hbF \bE}_{ij}^{2}
      = \parens*{\sum_{k=1}^{L} \hbF_{ik} \bE_{kj}}^{2} = \sum_{k, l} \hbF_{ik} \hbF_{il} \bE_{kj} \bE_{lj},
    \end{equation*}
    where many of the terms are many of the terms are 0 in expectation due to independence of rows. The remaining terms admit the brute force bound
    \begin{equation*}
      \begin{aligned}
        \E\bracks{\hbF_{ik} \hbF_{il} \bE_{kj} \bE_{lj}}
        & \leq \parens*{\E\bracks{\hbF_{ik}^{4}} \E\bracks{\hbF_{il}^{4}} \E\bracks{\bE_{kj}^{4}} \E\bracks{\bE_{lj}^{4}}}^{1/4} \\
        & \leq \parens*{\E\bracks{\hbF_{ik}^{4}} \E\bracks{\hbF_{il}^{4}} \E\bracks{\hbF_{kj}^{4}} \E\bracks{\hbF_{lj}^{4}}}^{1/4} \\
        & \leq \sup_{i, j} |\kappa(\cdot, j/L)|_{L^{4}(\pi_{i/L})}^{4}
      \end{aligned}
    \end{equation*}
    due to the fact that the non-central moment is an upper bound on the central moment for positive RVs. Our problem then reduces to counting non-zero terms. We partition the terms as
    \begin{equation*}
      \begin{aligned}
        \braces{\bF \bE}_{ij}^{2}
        & = \hbF_{ii}^{2} \bE_{ij}^{2} + \sum_{k \neq i, l \neq i} \hbF_{ik} \hbF_{il} \bE_{kj} \bE_{lj} + 2 \hbF_{ii} \bE_{ij} \sum_{k \neq i} \hbF_{ik} \bE_{kj} \\
        & = \sum_{k} \hbF_{ik}^{2} \bE_{kj}^{2} + \sum_{k,l: k \neq l, k \neq i, l \neq i} \hbF_{ik} \hbF_{il} \bE_{kj} \bE_{lj} + 2 \sum_{k \neq i} \hbF_{ii} \hbF_{ik} \bE_{ij} \bE_{kj},
      \end{aligned}
    \end{equation*}
    where we have ensured that except for the first sum, all the terms are products over elements of $\bE$ in different rows. Since $\E\bracks{\bE} = \arr{0}$, only the first $L$ terms remain, and
    \begin{equation*}
      \E\bracks*{\braces*{\hbF \bE}_{ij}^{2}}
      \leq L \sup_{i, j} |\kappa(\cdot, j/L)|_{L^{4}(\pi_{i/L})}^{4},
    \end{equation*}
    where
    \begin{equation*}
      \begin{aligned}
        \limsup_{L \to \infty} L^{3} \E\bracks{\braces{\hbF \bE}_{ij}^{2}}
        & \leq \limsup_{L \to \infty} \sup_{i, j} L^{4} |\kappa(\cdot, j/L)|_{L^{4}(\pi_{i/L})}^{4} \\
        & \leq \parens*{\sup_{\lambda, \lambda'}|\tilde{\kappa}(\cdot, \lambda')|_{L^{4}(\pi_{\lambda})} + \sup_{\lambda} K_{\pi_{\lambda}}^{4}[\tilde{\kappa}] / 2}^{4}
      \end{aligned}
    \end{equation*}
    by Lemma \ref{lem:cvgkern}.
  \end{proof}
\end{lemma}

\subsection{Further Lemmas for Theorem \ref{thm:dense-g}}

\begin{lemma}[Uniform convergence of subsampled densities]
  \label{lem:cvgdensunif}
    Define all symbols as in Lemma \ref{lem:cvgdens}. Then, for any (semi-)norm $\norm{\cdot}$ on $([0, 1] \mapsto \Re)$-functions for which $\norm{q} < \infty$,
  \begin{equation*}
    \norm{Lv - \tilde{v}} \leq \frac{\norm{\tilde{v}}}{2L} c_{4}(L), \qquad \limsup_{L \to \infty} c_{4}(L) = 1.
  \end{equation*}
  \begin{proof}
    For any (semi-)norm $\norm{\cdot}$ on $([0, 1] \mapsto \Re)$-functions,
    \begin{equation*}
      \begin{aligned}
        \norm{Lv - \tilde{v}}
        & = \norm*{\parens*{L - \frac{\sum_{\ell=1}^{L} q({\ell/L})}{\int_{0}^{1} q \dd{\lambda}}} v}\\
        & \leq L \abs*{\int_{0}^{1} \tilde{v} \dd{\lambda} - L^{-1} \sum_{\ell=1}^{L} \tilde{v}({\ell/L})} \norm{v} \\
        & \leq \lip{\tilde{v}} \norm{v} / 2
      \end{aligned}
    \end{equation*}
    by Lemma \ref{lem:cvgnc}. Moreover,
    \begin{equation*}
      \norm{v}
      = L^{-1} \frac{q}{L^{-1} \sum_{\ell=1}^{L} q({\ell/L})}
      = L^{-1} \frac{\norm{\tilde{v}}}{L^{-1} \sum_{\ell=1}^{L} \tilde{v}({\ell/L})}
    \end{equation*}
    where $L^{-1} \sum_{\ell=1}^{L} \tilde{v}({\ell/L}) \geq \inf_{\lambda} \tilde{v}(\lambda)$, and also $L^{-1} \sum_{\ell=1}^{L} \tilde{v}({\ell/L}) \geq 1 - \lip{\tilde{v}} / (2L)$ by Lemma \ref{lem:cvgnc}. Using whichever bound is larger,
    \begin{equation*}
      \norm{v} \leq L^{-1} \frac{\norm{\tilde{v}}}{(\inf_{\lambda} \tilde{v}(\lambda)) \lor (1 - \lip{\tilde{v}} / (2L))}
    \end{equation*}
    where $c_{4}(L) = \braces{(\inf_{\lambda} \tilde{v}(\lambda)) \lor (1 - \lip{\tilde{v}} / (2L))}^{-1}$ is finite, and $\limsup_{L \to \infty} c_{4}(L) = 1$. Inserting back,
    \begin{equation*}
      \norm{Lv - \tilde{v}} \leq \frac{\lip{\tilde{v}} \norm{\tilde{v}}}{2L} c_{4}(L).
    \end{equation*}
  \end{proof}
\end{lemma}

\begin{lemma}[Uniform Convergence of subsampled kernels]
  \label{lem:cvgkernunif}
  Define all symbols as in Lemma \ref{lem:cvgkern}. For a (semi-)norm $\norm{\cdot}$ on $([0, 1] \mapsto \Re)$-functions, define the composite (semi-)norm $|\varphi|_{L^{p}(\pi_{\lambda}) \times \norm{\cdot}} = \parens*{\int \norm{\varphi(\theta, \cdot)}^{p} \dd{\pi_{\lambda}}}^{1/p}$. Then,
  \begin{equation*}
    \begin{aligned}
      |L \kappa - \tilde{\kappa}|_{L^{p}(\pi_{\lambda}) \times \norm{\cdot}}
      & \leq \frac{K_{2p}[\tilde{\kappa}]}{2} |\tilde{\kappa}|_{L^{2p}(\pi_{\lambda}) \times \norm{\cdot}} 
    \end{aligned}
  \end{equation*}
  In particular, if $K_{2p}[\tilde{\kappa}] < \infty$, $\limsup_{L \to \infty} \int \norm{\varphi(\theta, \cdot)}^{2p}/ \parens*{\sum_{\ell=1}^{L} \varphi(\theta, {\ell/L})}^{2p-1} \dd{\theta} < \infty$, $\sup_{\ell} |\tilde{\kappa}|_{L^{2p}(\pi_{\ell/L}) \times \norm{\cdot}} < \infty$,
  \begin{equation*}
    \limsup_{L \to \infty} \sup_{\ell} L |\kappa|_{L^{p}(\pi_{\ell/L}) \times \norm{\cdot}} < \infty.
  \end{equation*}
  \begin{proof}
    As in Lemma \ref{lem:cvgkern},
    \begin{equation*}
      |L \kappa - \tilde{\kappa}|_{L^{p}(\pi_{\lambda}) \times \norm{\cdot}}
      \leq |\kappa|_{L^{2p}(\pi_{\lambda}) \times \norm{\cdot}} K_{2p}[\tilde{\kappa}] / 2,
    \end{equation*}
    and by the triangle inequality,
    \begin{equation*}
      |\kappa|_{L^{p}(\pi_{\lambda}) \times \norm{\cdot}} \leq L^{-1} (|\tilde{\kappa}|_{L^{p}(\pi_{\lambda}) \times \norm{\cdot}} + |\kappa|_{L^{2p}(\pi_{\lambda}) \times \norm{\cdot}} K_{2p}[\tilde{\kappa}] / 2).
    \end{equation*}
    In particular, since $|\kappa|_{L^{8}(\pi_{i/L}) \times \norm{\cdot}}^{8} \int \varphi(\theta, i/L) \dd{\theta} \leq \int \norm{\varphi(\theta, \cdot)}^{2p} / \parens*{\sum_{\ell=1}^{L} \varphi(\theta, \ell/L)}^{2p - 1} \dd{\theta}$, we have that
    \begin{equation*}
      \begin{aligned}
        \sup_{\ell} L |\kappa|_{L^{p}(\pi_{\lambda}) \times \norm{\cdot}}
        & \leq \sup_{\lambda} |\tilde{\kappa}|_{L^{p}(\pi_{\lambda}) \times \norm{\cdot}} + \frac{K_{2p}[\tilde{\kappa}]}{2} \parens*{\frac{\int \norm{\varphi(\theta, \cdot)}^{2p}/ \parens*{\sum_{\ell=1}^{L} \varphi(\theta, \ell/L)}^{2p-1} \dd{\theta}}{\inf_{\lambda} \int \varphi(\theta, \lambda) \dd{\theta}}}^{1/p}
      \end{aligned}
    \end{equation*}
  \end{proof}
\end{lemma}

\begin{lemma}[Uniform LLN for non-stationary summands]
  \label{lem:ullnstat}
  Define $S_{L}(\lambda) = \sum_{\ell=1}^{L} s_{\ell,L}(\theta_{\ell}, \lambda)$, where the $\theta_{\ell}$ are independent and the $s_{\ell,L}$ satisfy
  \begin{equation*}
    \limsup_{L \to \infty} \sup_{\ell = 1, \dots, L} \E\bracks{\sn{s_{\ell,L}(\theta_{\ell}, \cdot)}^{2}} < \infty, \quad
    \limsup_{L \to \infty} \sup_{\ell = 1, \dots, L} \E\bracks{\lip{s_{\ell,L}(\theta_{\ell}, \cdot)}^{2}} < \infty.
  \end{equation*}
  Then, for $\mu_{L}(\lambda) = \E\bracks{L^{-1} S_{L}(\lambda)}$,
  \begin{equation*}
    \lim_{L \to \infty} \sup_{\lambda} |L^{-1} S_{L}(\lambda) - \mu_{L}(\lambda)| = 0 \ (a.s.), \quad
    \lim_{L \to \infty} \E\bracks{\sup_{\lambda} |L^{-1} S_{L}(\lambda) - \mu_{L}(\lambda)|^{2}} = 0.
  \end{equation*}
  \begin{proof}
    We use the framework of Theorem \ref{thm:ullnlip}, where we have to establish that $|L^{-1} S_{L}(\lambda) - \mu_{L}(\lambda)| \xrightarrow{a.s.} 0$ for any given $\lambda \in \Lambda$, and that $L^{-1} S_{L}(\lambda)$ is \emph{strongly stochastically equicontinuous} (SSE). Pointwise convergence is then verified by a small modification to Kolmogorov's SLLN test, where we account for non-stationarity of summands by checking that $\limsup_{L \to \infty} \sup_{\ell} \Var\bracks{s_{\ell,L}(\theta_{\ell}, \lambda)} \leq \limsup_{L \to \infty} \sup_{\ell} \E\bracks{\sn{s_{\ell,L}(\theta_{\ell}, \cdot)}^{2}} < \infty$. This is sufficient to prove a variant of Kolmogorov's inequality, which Kolmogorov's SLLN test relies on. Thus, the test passes, and $|L^{-1} S_{L}(\lambda) - \mu_{L}(\lambda)| \xrightarrow{a.s.} 0$ for any given $\lambda$. It remains to verify SSE, where for any sequence $T_{L}$ with $T_{L} \geq \lip{S_{L}}$ a.s., Theorem \ref{thm:ullnlip} gives the sufficient condition $|L^{-1} T_{L} - \E\bracks{L^{-1} T_{L}}| \xrightarrow{a.s.} 0$ and $\limsup_{L \to \infty} \E\bracks{L^{-1} T_{L}} < \infty$. To obtain a sum of independent terms, it is convenient to use the bound
    \begin{equation*}
      T_{L} = \sum_{\ell=1}^{L} \lip{s_{\ell,L}(\theta_{\ell}, \cdot)} \geq \lip{S_{L}}.
    \end{equation*}
    To show convergence, we again apply the non-stationary Kolmogorov test, where $\limsup_{L \to \infty} \sup_{\ell} \Var\bracks{\lip{s_{\ell,L}(\theta_{\ell}, \cdot)}} \leq \limsup_{L \to \infty} \sup_{\ell} \E\bracks{\lip{s_{\ell,L}(\theta_{\ell}, \cdot)}^{2}} < \infty$, and the test passes. For the finiteness of the limit,
    \begin{equation*}
      \begin{aligned}
        \limsup_{L \to \infty} \E\bracks{L^{-1} T_{L}}
        & \leq \limsup_{L \to \infty} L^{-1} \sum_{\ell=1}^{L} \E\bracks{\lip{s_{\ell,L}(\theta_{\ell}, \cdot)}} \\
        & \leq \limsup_{L \to \infty} \sqrt{\sup_{\ell} \E\bracks{\lip{s_{\ell,L}(\theta_{\ell}, \cdot)}^{2}}} \\
        & < \infty.
      \end{aligned}
    \end{equation*}
    Hence, $|L^{-1} T_{L} - \E\bracks{L^{-1} T_{L}}| \xrightarrow{a.s.} 0$, and $L^{-1} S_{L}(\lambda)$ is SSE. With all conditions verified, by Theorem \ref{thm:ullnlip}, $\sup_{\lambda} |L^{-1} S_{L}(\lambda) - \mu_{L}(\lambda)| \xrightarrow{a.s.} 0$. Next, we verify convergence in $L^{2}$. Firstly,
    \begin{equation*}
      \begin{aligned}
        \limsup_{L \to \infty} \E\bracks{\sup_{\lambda} |L^{-1} S_{L}(\lambda) - \mu_{L}(\lambda)|}
        & \leq \limsup_{L \to \infty} \sn{\mu_{L}} + \limsup_{L \to \infty} L^{-1} \sum_{\ell=1}^{L} \E\bracks{\sn{s_{\ell,L}(\theta_{\ell}, \cdot)}} \\
        & \leq \limsup_{L \to \infty} \sn{\mu_{L}} + \limsup_{L \to \infty} \sup_{\ell} \E\bracks{\sn{s_{\ell,L}(\theta_{\ell}, \cdot)}} \\
        & < \infty,
      \end{aligned}
    \end{equation*}
    as well as by independence,
    \begin{equation*}
      \begin{aligned}
        \limsup_{L \to \infty} \Var\bracks{\sup_{\lambda} |L^{-1} S_{L}(\lambda) - \mu_{L}(\lambda)|}
        & \leq \limsup_{L \to \infty} L^{-2} \sum_{\ell=1}^{L} \Var\bracks{\sn{s_{\ell,L}(\theta_{\ell}, \cdot)}} \\
        & \leq \limsup_{L \to \infty} L^{-1} \sup_{\ell} \E\bracks{\sn{s_{\ell,L}(\theta_{\ell}, \cdot)}^{2}} \\
        & = 0,
      \end{aligned}
    \end{equation*}
    so in conjunction, $\limsup_{L \to \infty} \E\bracks{\sup_{\lambda} |L^{-1} S_{L}(\lambda) - \mu_{L}(\lambda)|^{2}} < \infty$. It follows that $\sup_{\lambda} |L^{-1} S_{L}(\lambda) - \mu_{L}(\lambda)|$ is uniformly integrable, and therefore $\E\bracks{\sup_{\lambda} |L^{-1} S_{L}(\lambda) - \mu_{L}(\lambda)|} \to 0$. Having already established that the variance vanishes, it is immediate that $\E\bracks{\sup_{\lambda} |L^{-1} S_{L}(\lambda) - \mu_{L}(\lambda)|^{2}} \to 0$ as well.
    \end{proof}
\end{lemma}

\end{document}